\documentclass[english]{article}
\usepackage[T1]{fontenc}
\usepackage[latin9]{inputenc}
\usepackage{color}
\usepackage{babel}
\usepackage{array}
\usepackage{float}
\usepackage{units}
\usepackage{mathtools}
\usepackage{bm}
\usepackage{varwidth}
\usepackage{amsmath}
\usepackage{amsthm}
\usepackage{amssymb}
\usepackage{graphicx}
\usepackage{geometry}
\geometry{verbose,tmargin=3.5cm,bmargin=3.5cm,lmargin=2.5cm,rmargin=2.5cm}
\usepackage[pdfusetitle,
 bookmarks=true,bookmarksnumbered=false,bookmarksopen=false,
 breaklinks=false,pdfborder={0 0 1},backref=false,colorlinks=false]
 {hyperref}

\makeatletter

\newcommand{\noun}[1]{\textsc{#1}}
\providecommand{\tabularnewline}{\\}
\newenvironment{cellvarwidth}[1][t]
    {\begin{varwidth}[#1]{\linewidth}}
    {\@finalstrut\@arstrutbox\end{varwidth}}
\floatstyle{ruled}
\newfloat{algorithm}{tbp}{loa}
\providecommand{\algorithmname}{Algorithm}
\floatname{algorithm}{\protect\algorithmname}

\theoremstyle{plain}
\newtheorem{thm}{\protect\theoremname}
\theoremstyle{definition}
\newtheorem{defn}[thm]{\protect\definitionname}
\theoremstyle{remark}
\newtheorem{rem}[thm]{\protect\remarkname}
\theoremstyle{plain}
\newtheorem{fact}[thm]{\protect\factname}
\theoremstyle{definition}
\newtheorem{example}[thm]{\protect\examplename}
\theoremstyle{plain}
\newtheorem{lem}[thm]{\protect\lemmaname}
\theoremstyle{definition}
\newtheorem{problem}[thm]{\protect\problemname}
\theoremstyle{plain}
\newtheorem{prop}[thm]{\protect\propositionname}

\usepackage{algorithmic}
\usepackage{qcircuit}
\usepackage{array}
\usepackage{tkz-graph}
\usepackage{placeins}

\usepackage{tikz}
\usetikzlibrary{positioning}
\usetikzlibrary{arrows.meta}
\usetikzlibrary{shapes.geometric}
\usetikzlibrary{calc}
\usetikzlibrary{backgrounds}
\usetikzlibrary{fit}
\usetikzlibrary{shadows}

\usetikzlibrary{arrows.spaced}

\usetikzlibrary{matrix}

\usetikzlibrary{shapes.misc}

\usepackage{amsmath}
\usepackage{amssymb}

\usepackage{url}

\pdfstringdefDisableCommands{%
  \def\mat#1{#1}
}

\makeatother

\providecommand{\definitionname}{Definition}
\providecommand{\examplename}{Example}
\providecommand{\factname}{Fact}
\providecommand{\lemmaname}{Lemma}
\providecommand{\problemname}{Problem}
\providecommand{\propositionname}{Proposition}
\providecommand{\remarkname}{Remark}
\providecommand{\theoremname}{Theorem}

\begin{document}
\global\long\def\R{\mathbb{R}}%

\global\long\def\C{\mathbb{C}}%

\global\long\def\N{\mathbb{N}}%

\global\long\def\HH{\mathbb{H}}%

\global\long\def\PP{\mathbb{P}}%

\global\long\def\e{{\mathbf{e}}}%

\global\long\def\et#1{{\e(#1)}}%

\global\long\def\ef{{\mathbf{\et{\cdot}}}}%

\global\long\def\a{{\mathbf{a}}}%

\global\long\def\x{{\mathbf{x}}}%

\global\long\def\xt#1{{\x(#1)}}%

\global\long\def\xf{{\mathbf{\xt{\cdot}}}}%

\global\long\def\d{{\mathbf{d}}}%

\global\long\def\w{{\mathbf{w}}}%

\global\long\def\b{{\mathbf{b}}}%

\global\long\def\u{{\mathbf{u}}}%

\global\long\def\y{{\mathbf{y}}}%

\global\long\def\k{{\mathbf{k}}}%

\global\long\def\yt#1{{\y(#1)}}%

\global\long\def\yf{{\mathbf{\yt{\cdot}}}}%

\global\long\def\z{{\mathbf{z}}}%

\global\long\def\v{{\mathbf{v}}}%

\global\long\def\h{{\mathbf{h}}}%

\global\long\def\s{{\mathbf{s}}}%

\global\long\def\c{{\mathbf{c}}}%

\global\long\def\p{{\mathbf{p}}}%

\global\long\def\f{{\mathbf{f}}}%

\global\long\def\t{{\mathbf{t}}}%

\global\long\def\rb{{\mathbf{r}}}%

\global\long\def\rt#1{{\rb(#1)}}%

\global\long\def\rf{{\mathbf{\rt{\cdot}}}}%

\global\long\def\strs{{\mathbf{\textsc{s}}}}%

\global\long\def\strt{{\mathbf{\textsc{t}}}}%

\global\long\def\strb{{\mathbf{\textsc{b}}}}%

\global\long\def\mat#1{{\ensuremath{\bm{\mathrm{#1}}}}}%

\global\long\def\matN{\ensuremath{{\bm{\mathrm{N}}}}}%

\global\long\def\matX{\ensuremath{{\bm{\mathrm{X}}}}}%

\global\long\def\matK{\ensuremath{{\bm{\mathrm{K}}}}}%

\global\long\def\matA{\ensuremath{{\bm{\mathrm{A}}}}}%

\global\long\def\matB{\ensuremath{{\bm{\mathrm{B}}}}}%

\global\long\def\matC{\ensuremath{{\bm{\mathrm{C}}}}}%

\global\long\def\matD{\ensuremath{{\bm{\mathrm{D}}}}}%

\global\long\def\matE{\ensuremath{{\bm{\mathrm{E}}}}}%

\global\long\def\matF{\ensuremath{{\bm{\mathrm{F}}}}}%

\global\long\def\matO{\ensuremath{{\bm{\mathrm{O}}}}}%

\global\long\def\matQ{\ensuremath{{\bm{\mathrm{Q}}}}}%

\global\long\def\matP{\ensuremath{{\bm{\mathrm{P}}}}}%

\global\long\def\matU{\ensuremath{{\bm{\mathrm{U}}}}}%

\global\long\def\matV{\ensuremath{{\bm{\mathrm{V}}}}}%

\global\long\def\matM{\ensuremath{{\bm{\mathrm{M}}}}}%

\global\long\def\matG{\ensuremath{{\bm{\mathrm{G}}}}}%

\global\long\def\cirU{{\cal U}}%

\global\long\def\cirV{{\cal V}}%

\global\long\def\cirW{{\cal W}}%

\global\long\def\tenA{{\cal A}}%

\global\long\def\calX{{\cal X}}%

\global\long\def\calY{{\cal Y}}%

\global\long\def\calH{{\cal H}}%

\global\long\def\calW{{\cal W}}%

\global\long\def\calS{{\cal S}}%

\global\long\def\calT{{\cal T}}%

\global\long\def\matR{\mat R}%

\global\long\def\matS{\mat S}%

\global\long\def\matO{\mat O}%

\global\long\def\matT{\mat T}%

\global\long\def\matY{\mat Y}%

\global\long\def\matI{\mat I}%

\global\long\def\matJ{\mat J}%

\global\long\def\matZ{\mat Z}%

\global\long\def\matW{\mat W}%

\global\long\def\tmatK{\widetilde{\matK}}%

\global\long\def\matL{\mat L}%

\global\long\def\matZero{\mat 0}%

\global\long\def\S#1{{\mathbb{S}_{N}[#1]}}%

\global\long\def\IS#1{{\mathbb{S}_{N}^{-1}[#1]}}%

\global\long\def\PN{\mathbb{P}_{N}}%

\global\long\def\Norm#1{\|#1\|}%

\global\long\def\NormS#1{\|#1\|^{2}}%

\global\long\def\ONorm#1{\|#1\|_{op}}%

\global\long\def\ONormS#1{\|#1\|_{op}^{2}}%

\global\long\def\OneNorm#1{\|#1\|_{1}}%

\global\long\def\OneNormS#1{\|#1\|_{1}^{2}}%

\global\long\def\TNormS#1{\|#1\|_{2}^{2}}%

\global\long\def\TNorm#1{\|#1\|_{2}}%

\global\long\def\InfNorm#1{\|#1\|_{\infty}}%

\global\long\def\InfNormS#1{\|#1\|_{\infty}^{2}}%

\global\long\def\FNorm#1{\|#1\|_{F}}%

\global\long\def\FNormK#1#2{\|#1\|_{F}^{#2}}%

\global\long\def\FNormS#1{\|#1\|_{F}^{2}}%

\global\long\def\UNorm#1{\|#1\|_{\matU}}%

\global\long\def\NucNorm#1{\|#1\|_{\star}}%

\global\long\def\UNormS#1{\|#1\|_{\matU}^{2}}%

\global\long\def\UINormS#1{\|#1\|_{\matU^{-1}}^{2}}%

\global\long\def\ANorm#1{\|#1\|_{\matA}}%

\global\long\def\BNorm#1{\|#1\|_{\mat B}}%

\global\long\def\ANormS#1{\|#1\|_{\matA}^{2}}%

\global\long\def\AINormS#1{\|#1\|_{\matA^{-1}}^{2}}%

\global\long\def\T{\textsc{T}}%

\global\long\def\H{\textsc{H}}%

\global\long\def\conj{\textsc{*}}%

\global\long\def\pinv{\textsc{+}}%

\global\long\def\Var#1{{\mathbb{V}}\left[#1\right]}%

\global\long\def\Expect#1{{\mathbb{E}}\left[#1\right]}%

\global\long\def\inner#1#2{\left\langle #1,#2\right\rangle }%

\global\long\def\ExpectC#1#2{{\mathbb{E}}_{#1}\left[#2\right]}%

\global\long\def\dotprod#1#2#3{(#1,#2)_{#3}}%

\global\long\def\dotprodN#1#2{(#1,#2)_{{\cal N}}}%

\global\long\def\dotprodH#1#2{(#1,#2)_{{\cal {\cal H}}}}%

\global\long\def\dotprodsqr#1#2#3{(#1,#2)_{#3}^{2}}%

\global\long\def\Trace#1{{\bf Tr}\left(#1\right)}%

\global\long\def\STrace#1{{\bf Tr}(#1)}%

\global\long\def\MTrace#1#2{{\bf MTr}_{#2}\left(#1\right)}%

\global\long\def\SMTrace#1#2{{\bf MTr}_{#2}(#1)}%

\global\long\def\nnz#1{{\bf nnz}\left(#1\right)}%

\global\long\def\vec#1{{\bf vec}\left(#1\right)}%

\global\long\def\rvec#1{{\bf rvec}\left(#1\right)}%

\global\long\def\cvec#1{{\bf cvec}\left(#1\right)}%

\global\long\def\Mat#1#2#3{{\bf mat}_{#1\times#2}\left(#3\right)}%

\global\long\def\rank#1{{\bf rank}\left(#1\right)}%

\global\long\def\vol#1{{\bf vol}\left(#1\right)}%

\global\long\def\cirmat#1{\matM\left(#1\right)}%

\global\long\def\range#1{{\bf range}\left(#1\right)}%

\global\long\def\order#1{{\bf order}\left(#1\right)}%

\global\long\def\sr#1{{\bf sr}\left(#1\right)}%

\global\long\def\poly#1{{\bf poly}\left(#1\right)}%

\global\long\def\sign#1{{\bf sign}\left(#1\right)}%

\global\long\def\amplitude#1#2{{\bf Amplitude}_{#2}\left(#1\right)}%

\global\long\def\gap#1#2{{\bf gap}_{#2}\left(#1\right)}%

\global\long\def\gapS#1#2{{\bf gap}_{#2}^{2}\left(#1\right)}%

\global\long\def\bit#1#2{{\bf bit}_{#2}\left(#1\right)}%

\global\long\def\diag#1{{\bf diag}\left(#1\right)}%

\global\long\def\blkdiag#1{{\bf blkdiag}\left(#1\right)}%

\global\long\def\Re#1{{\bf Re}\left(#1\right)}%

\global\long\def\Img#1{{\bf Im}\left(#1\right)}%

\global\long\def\Gr#1#2{{\bf Gr}(#1,#2)}%

\global\long\def\Ket#1{\left\vert #1\right\rangle }%

\global\long\def\Bra#1{\left\langle #1\right\vert }%

\global\long\def\ket#1{\left\vert #1\right\rangle }%

\global\long\def\bra#1{\left\langle #1\right\vert }%

\global\long\def\kket#1{\left\vert \left\vert #1\right\rangle \right\rangle }%

\global\long\def\bbra#1{\left\langle \left\langle #1\right\vert \right\vert }%

\global\long\def\Braket#1#2{\left\langle #1|#2\right\rangle }%

\global\long\def\Ptrace#1#2{{\bf {\bf Tr}}_{#2}\left(#1\right)}%

\global\long\def\SPtrace#1#2{{\bf {\bf Tr}}_{#2}(#1)}%

\global\long\def\vtheta{\mat{\theta}}%

\global\long\def\valpha{\mat{\alpha}}%

\global\long\def\vbeta{\mat{\beta}}%

\global\long\def\vpsi{\mat{\psi}}%

\global\long\def\vphi{\mat{\phi}}%

\global\long\def\vxi{\mat{\xi}}%

\global\long\def\abs#1{\lvert#1\rvert}%

\global\long\def\HT{\operatorname{HT}}%

\global\long\def\ST{\operatorname{ST}}%

\global\long\def\BE#1#2{{\bf BE}_{#2}\left(#1\right)}%

\global\long\def\SP#1#2{{\bf SP}_{#2}\left(#1\right)}%

\global\long\def\HBE#1#2{{\bf HBE}_{#2}\left(#1\right)}%

\global\long\def\MS#1#2{{\bf MS}_{#2}\left(#1\right)}%

\global\long\def\TS#1#2{{\bf HS}_{#2}\left(#1\right)}%

\global\long\def\CirMat#1{\matM\left(#1\right)}%

\global\long\def\MUX#1#2{{\bf MUX}\left(#1,#2\right)}%

\global\long\def\be{{\bf (BE)}}%

\global\long\def\sp{{\bf (SP)}}%

\global\long\def\si{\text{i}}%

\title{On Encoding Matrices using Quantum Circuits}
\author{Liron Mor Yosef\thanks{Tel Aviv University, lm2@mail.tau.ac.il} \\
 \and Haim Avron\thanks{Tel Aviv University, haimav@tauex.tau.ac.il}}
\maketitle
\begin{abstract}
Over a decade ago, it was demonstrated that quantum computing has
the potential to revolutionize numerical linear algebra by enabling
algorithms with complexity superior to what is classically achievable,
e.g., the seminal HHL algorithm for solving linear systems. Efficient
execution of such algorithms critically depends on representing inputs
(matrices and vectors) as quantum circuits that encode or implement
these inputs. For that task, two common circuit representations emerged
in the literature: block encodings and state preparation circuits.
In this paper, we systematically study encodings matrices in the form
of block encodings and state preparation circuits. We examine methods
for constructing these representations from matrices given in classical
form, as well as quantum two-way conversions between circuit representations.
Two key results we establish (among others) are: (a) a general method
for efficiently constructing a block encoding of an arbitrary matrix
given in classical form (entries stored in classical random access
memory); and (b) low-overhead, bidirectional conversion algorithms
between block encodings and state preparation circuits, showing that
these models are essentially equivalent. From a technical perspective,
two central components of our constructions are: (i) a special constant-depth
multiplexer that simultaneously multiplexes all higher-order Pauli
matrices of a given size, and (ii) an algorithm for performing a quantum
conversion between a matrix\textquoteright s expansion in the standard
basis and its expansion in the basis of higher-order Pauli matrices.
\end{abstract}

\section{Introduction}

Following breakthroughs such as the Harrow--Hassidim--Lloyd (HHL)
algorithm\textcolor{black}{~\cite{harrow2009quantum}} and subsequent
works\textcolor{black}{~\cite{childs2017quantum,gilyen2019quantum,subacsi2019quantum,an2022quantum,lin2020optimal}},
numerical linear algebra has emerged as an area in which quantum computing
might deliver substantial improvements over classical computing. Such
quantum algorithms, which we refer to as quantum numerical linear
algebra algorithms, exploit the ability of quantum computers to efficiently
represent and manipulate high-dimensional vectors and matrices using
exponentially fewer physical resources than is possible in classical
numerical linear algebra. However, the success of quantum numerical
linear algebra algorithms crucially depends on how input numerical
arrays (such as matrices and vectors) are encoded into quantum circuits.
Efficient representation of matrices in quantum computing is therefore
not a secondary concern, but a central component that determines both
the theoretical and practical feasibility of quantum numerical linear
algebra.

Two main paradigms have been developed for representing numerical
arrays within quantum circuits. The first, block encoding~\cite{gilyen2019quantum,clader2022quantum,camps2022fable},
embeds a scaled version of a matrix as a sub-block of a larger unitary
operator. Block encodings forms the standard input model for many
modern quantum numerical linear algebra algorithms~\cite{chakraborty2018power,martyn2021grand,takahira2021quantum}.
The second paradigm is the use of state preparation circuits (aka
amplitude encoding)~\cite{mottonen2004transformation,shende2005synthesis,plesch2011quantum,low2024trading,araujo2021divide,araujo2023configurable,gleinig2021efficient,zhang2022quantum},
which encode the entries of a matrix as probability amplitudes of
quantum states. Typically, state preparation circuits are used to
represent input vectors, e.g., the right hand side in HHL\textcolor{black}{~\cite{harrow2009quantum}.
Nevertheless, in our previous work}~\cite{yosef2024multivariate},
we explored the use state preparation circuits to encode matrices,
developed a limited yet useful library of matrix computational kernels
on such circuits, and explored applications in the context of estimating
multivariate traces and spectral sums. Although both models can represent
matrices for quantum computation, state preparation circuits are often
simpler to construct than block encodings, which typically require
both PREPARE and SELECT oracles and can incur data loading costs that
dominate the runtime\textcolor{black}{~\cite{low2019hamiltonian,sunderhauf2024block,clader2022quantum,yuan2023optimal}.}

Despite close conceptual relationship between the aforementioned matrix
encoding methods, no general and efficient algorithm has previously
existed for translating between these two representations. This limitation
prevents quantum algorithms from flexibly combining algebraic and
computational tools associated with each model. Establishing such
an algorithm serves not only clarify the theoretical connection between
them, but also enables hybrid algorithmic designs that can exploit
whichever representation is most natural for a given problem instance.

In this paper, we systematically study encoding matrices in the form
of block encodings and state preparation circuits. We examine methods
for constructing these representations from matrices given in classical
form, as well as quantum two-way conversions between circuit representations.
Two key results we establish (among others) are: (a) a general method
for efficiently constructing a block encoding of an arbitrary matrix
given in classical form (entries stored in classical random access
memory); and (b) a low-overhead, bidirectional conversion between
block encodings and state preparation circuits, showing that these
models are essentially equivalent. Our previous work~\cite[Section 3.2.2]{yosef2024multivariate}
partially addressed conversion from block encoding to matrix state
preparation, establishing that exact block encodings can be converted
via a simple circuit transformation. In this paper we complete the
analysis by addressing arbitrary block encodings and introducing the
dual conversion: from state preparation to block encoding. The forward
conversion keeps the conceptual simplicity of the exact block encoding
conversion in~\cite[Section 3.2.2]{yosef2024multivariate}, while
the reverse conversion requires substantially more complex circuits
and additional qubits.

From a technical perspective, two central components of our constructions
are: (i) a special constant-depth multiplexer that simultaneously
multiplexes all higher-order Pauli matrices of a given size, and (ii)
an algorithm for performing a quantum conversion between a matrix\textquoteright s
expansion in the standard basis and its expansion in the basis of
higher-order Pauli matrices.

The remainder of the paper is organized as follows. Section~\ref{sec:preliminaries}
introduces mathematical preliminaries and notation. Section~\ref{subsec:rotation-multiplexers}
discuss both general construction of mulitplexers, and ones that are
specific to rotation mulitplexers. Section~\ref{sec:Converting-between-standard}
presents classical and quantum conversion techniques between standard
and Pauli bases. Section~\ref{sec:multiplexing-pauli-coefficient}
discusses how to multiplex higher order Pauli matrices. Section~\ref{sec:From-dense-classical}
presents algorithms for directly constructing block-encoding circuits
from classical matrix representations, and Section~\ref{sec:algorithms-encoding}
discuss algorithms for moving between representations, both from classical
to quantum and from quantum to quantum (of different types).

\subsection{Summary of contributions}

Since we systematically study the two aforementioned matrix encodings
schemes, our paper has many contributions, some of which we highlighted
previously. Here we summarize all the contributions of the paper.
See Table~\ref{tab:algorithm-comparison} for a summary of algorithmic
(classical and quantum) complexities.
\begin{enumerate}
\item Efficient conversion between the standard and Pauli basis representation
of a matrix, both in a classical algorithm (Subsection~\ref{subsec:Pauli-decomposition-classiq})
and quantumly in the form of converting an encoding in one basis to
the other (Subsection~\ref{subsec:ua-to-uap}, Theorem~\ref{thm:ua-to-uap-and-back}).
Conversions are two-way.
\item Efficient algorithms for building block encodings of matrices given
their classical representation either in the standard (Subsection~\ref{subsubsec:Matrix2Be})
or the Pauli (Subsection~\ref{subsubsec:PauliMatrix2Be}) basis.
Our algorithms produce circuits with lower depth than all previously
published algorithms with number of qubits that is logarithmic in
the matrix size. Is some cases, it produces block encoding with superior
scale factor than all the one produced by previous algorithms.
\item Efficient algorithms for building of block encodings of matrices given
as state preparation circuits either in the standard or the Pauli
basis. Both conversion algorithms are reported in Subsection~\ref{subsubsec:PauliMsp2Be_Msp2Be}.
\item Efficient algorithm for building a matrix state preparation circuit
given a block encoding of the matrix (Subsection~\ref{subsubsec:Be2Msp_Be2PauliMsp}).
\item Pauli Multiplexer Circuit (${\cal PMX}_{n}$): we design and analyze
an ultra-efficient quantum circuit that multiplexes the complete set
of $n$-qubit higher-order Pauli operators. The ${\cal PMX}_{n}$
circuit achieves constant gate depth and a T-depth of one, independent
of $n$, with a logarithmic gate count that scales as $O(n)$. This
efficiency arises from an ultra-sparse structure in the multiplexer\textquoteright s
rotation angles after a Walsh--Hadamard transform. See Section~\ref{sec:multiplexing-pauli-coefficient}.
This technical contribution is necessary for our conversion algorithms,
but might be of independent interest.
\end{enumerate}

\subsection{Related work}

Many quantum algorithms require encoding non-unitary matrices in quantum
circuits. Block encoding addresses this challenge by embedding arbitrary
matrices within larger unitary operators, enabling their use in quantum
circuits. This technique, combined with quantum singular value transformation
(QSVT), provides a framework for performing various linear algebra
operations on quantum computers~\cite{chakraborty2018power,gilyen2019quantum,takahira2021quantum,martyn2021grand}.

Block encoding represents a scaled matrix $\matA/\alpha\in\mathbb{C}^{N\times N}$
by embedding it within a larger unitary matrix $\matU\in\mathbb{C}^{2^{p}N\times2^{p}N}$,
formally expressed as $\matA=(\bra 0\otimes\matI_{2^{p}})\matU(\ket 0\otimes\matI_{2^{p}})$
where $\matI_{2^{p}}$ is the $2^{p}\times2^{p}$ identity matrix
for $p$ ancillary qubits. However, efficient construction of block
encodings is challenging, with works advocating the use of resources
like QRAM~\cite{giovannetti2008quantum,clader2022quantum}. Related
is the Linear Combination of Unitaries (LCU) framework which is a
central and widely-used strategy for synthesizing block encodings~\cite{kothari2014efficient,berry2015hamiltonian}.

Early works on synthesizing block encodings mainly addressed construction
of block encodings for sparse matrices. Childs and Kothari developed
a pioneering technique for simulating sparse Hamiltonians, which was
later adapted for block encoding applications~\cite{childs2011simulating}.
Subsequent works introduced powerful techniques based on quantum walks
and qubitization~\cite{berry2015hamiltonian,low2019hamiltonian}.
Later, the introduction of QSVT provided a unifying framework that
uses block encodings to perform advanced matrix arithmetic, such as
inversion and filtering~\cite{gilyen2019quantum}. Crucially, these
foundational algorithms are typically analyzed in the oracle model,
and their efficiency is measured in query complexity which is the
number of calls to an oracle that provides information about the matrix's
non-zero entries. This abstract complexity model does not directly
translate to the explicit gate-level circuit costs that are the focus
of our work.

As for gate-level circuit constructions, Camps et al. show how to
build block encoding circuits for structured sparse matrices~\cite{camps2024explicit}.
For a scaled $s$-sparse matrix $\matA$ ($\TNorm{\matA}\le1$), they
block-encode $\nicefrac{\matA}{s}$ using two pieces: a structured
unitary that points to each nonzero entry and a value unitary that
writes that entry's magnitude. For symmetric stochastic matrices $\matP$,
they avoid the $\nicefrac{1}{s}$ scaling by constructing a symmetric
block-encoding of $\matP$ itself (rather than $\nicefrac{\matP}{s}$).

Several works considered block encodings of dense matrices. Camps
et al. introduced FABLE: an efficient method for constructing block
encoding quantum circuits with circuit depth $O(N^{2})$~\cite{camps2022fable}.
Their approach revolves around encoding matrix entries through angles
derived from their inverse cosine, implementing these angles using
${\cal R}_{y}$ rotations with a one-dimensional multiplexer. The
method applies Hadamard gates on both sides (which can be viewed as
multiplication by all-ones vectors from both sides), resulting in
additional scaling factor of $O(\nicefrac{1}{N})$ in the block encoding's
prefactor. While this scaling factor potentially requires additional
measurement shots to achieve desired precision, the method's efficient
circuit depth makes it attractive. Recent variants extend FABLE to
sparse inputs: the S-FABLE and \textquotedblleft Lazy\textquotedblright{}
S-FABLE (LS-FABLE) schemes adapt FABLE to unstructured sparse matrices,
empirically reducing resources (for matrices with with $O(N)$ nonzero
entries they report roughly $O(N)$ rotations and $O(N\log N)$ CNOTs)
while preserving accuracy-vs-compression trade-offs~\cite{kuklinski2024s}.
Complementary work shows how to block-encode arithmetically structured
matrices directly from compact descriptions (e.g., Toeplitz, tridiagonal),
reducing data-loading overheads when structure is available~\cite{sunderhauf2024block}.
Chakraborty et. al.~\cite{chakraborty2018power} developed techniques
for block encoding dense matrices using QSVT, achieving query complexity
$O(\kappa\text{poly(\ensuremath{\log N/\epsilon})})$ for matrices
with condition number $\kappa$. Dong et al.~\cite{dong2021efficient}
formalized a general construction for block encoding an arbitrary
matrix, assuming the ability to implement multiplexers that control
rotations based on the matrix entries. Clader et al.~\cite{clader2022quantum}
systematically analyze the circuit resources needed to block-encode
dense classical matrices using QRAM, showing a minimal-depth construction
with $O(N^{2})$ qubits and T-depth $O(\log(N/\epsilon))$ versus
a minimal-count construction with $O(N\log(1/\epsilon))$ qubits and
T-depth $O(N)$. Additionally, they compare select-swap vs. bucket-brigade
QRAM constructions and introduce a state preparation routine improving
T-depth from $O(\log^{2}(N/\epsilon))$ to $O(\log(N/\epsilon))$
by utilizing ancillary qubits. As is evident from these complexities,
their low depth algorithms requires a large number ancillary qubits.

State preparation (aka amplitude encoding) is another well-established
method for embedding classical data vectors into quantum systems through
circuits that operate on the ground state and encode vectors as probability
amplitudes. Various state preparation techniques~\cite{mottonen2004transformation,shende2005synthesis,bergholm2005quantum,plesch2011quantum,iten2016quantum,malvetti2021quantum,gleinig2021efficient,araujo2023configurable}
have been developed over the years and designing new ones is an active
research topic. For a discussion of the state preparation literature,
including the derivation and analysis of quantum circuit costs, see
Section~\ref{subsec:Matrix2Msp_Matrix2PauliMsp}. In our recent work~\cite{yosef2024multivariate},
we extend the concept of state preparation from vectors to matrices
through matrix-state preparation, which encodes a matrix's vectorization
into a quantum state. This approach enables us to develop a limited
yet practical library of quantum circuits that implement fundamental
linear algebra primitives, and demonstrate its utility for computing
multivariate traces and spectral sums.

\section{\label{sec:preliminaries}Preliminaries}

This section presents the mathematical framework underlying our algorithms
and their analysis. Although several definitions build on our previous
work~\cite{yosef2024multivariate}, we extend them here to meet the
specific requirements of this paper.

\subsection{Linear algebra notation}

We denote scalars using Greek or Latin letters (e.g., $\alpha,\beta,\dots$
or $x,y,\dots$). Vectors are written in bold lowercase letters such
as $\mathbf{x},\mathbf{y},\dots$, matrices in bold uppercase, such
as $\mathbf{A},\mathbf{B},\dots$, and hypermatrices (higher dimensional
arrays)\footnote{In numerical analysis literature, hypermatrices are usually referred
to as ``tensors''. However, tensors are, in modern definitions,
elements of a tensor product of vector spaces or multilinear functionals~\cite{lim21tensors},
and this definition more aligned with the use of the term ``tensor''
in quantum physics. Thus, we opted for the term ``hypermatrix''
to describe a multi-dimensional array of numbers.} in calligraphic notation (e.g., $\mathcal{A},\mathcal{B},\dots$)\footnote{There is some ambiguity in the notation between hypermatrices, and
subsequent notation for quantum circuits (which actually describes
a tensor). We rely on context for distinguish between the two cases.}. The dimension, or order, of a hypermatrix $\tenA$ is denoted by
$\order{\tenA}$. We assume $0$-based indexing for vectors, matrices
and hypermatrices. This is less common in the numerical linear algebra
literature, but is more convenient in the context of quantum computing.
We use the convention that vectors are column vectors, unless otherwise
stated.

Vectorization of a hypermatrix ${\cal A}\in\C^{I_{0}\times\cdots\times I_{d-1}}$
is defined recursively by 
\[
\vec{\tenA}\coloneqq\begin{cases}
\tenA, & \order{\tenA}=1\\
\left[\begin{array}{c}
\vec{\tenA(1,:,\dots,:,:)}\\
\vec{\tenA(2,:,\dots,:,:)}\\
\vdots
\end{array}\right], & \text{otherwise}
\end{cases}
\]
Note that for a matrix (order-2), this corresponds to row-major vectorization.

The Frobenius norm of a hypermatrix ${\cal A}\in\C^{I_{0}\times\cdots\times I_{d-1}}$
is defined as 
\[
\FNorm{\tenA}\coloneqq\sqrt{\sum_{i_{0}=0}^{I_{0}-1}\sum_{i_{1}=0}^{I_{1}-1}\cdots\sum_{i_{d-1}=0}^{I_{d-1}-1}\abs{\tenA(i_{0},\dots,i_{d-1})}^{2}}
\]
The definition naturally applies to matrices and vectors (order-1
hypermatrices), where it coincides with the (matrix) Frobenius norm
and the 2-norm (respectively). We say that a hypermatrix $\tenA$
(respectively, matrix and vector) is normalized if $\FNorm{\mathcal{A}}=1$.
The inner product between two hypermatrices $\mathcal{A}$ and $\mathcal{B}$,
both of the same size, is given by $\langle\mathcal{A},\mathcal{B}\rangle\coloneqq\vec{\mathcal{A}}^{\H}\vec{\mathcal{B}}$
where the superscript $\H$ denotes the Hermitian transpose of vectors
and matrices. This is consistent with the Frobenius norm definition,
i.e., $\|\mathcal{A}\|_{F}=\sqrt{\langle\mathcal{A},\mathcal{A}\rangle}.$

The set of square complex matrices of size $M$, is a \emph{complex}
vector space of dimension $M^{2}$ . However, it is also real vector
space of dimension $2M^{2}$ . Inside that real vector space, we have
the subspace of Hermitian matrices, denoted by $\HH_{M}\subseteq\C^{M\times M}$,
which consists of all Hermitian matrices of size $M\times M$. The
dimension of $\HH_{M}$ (over $\R$) is $M^{2}$ (half the dimension
of $\C^{M\times M}$ when viewed as a real vector space, and the same
as the dimension when $\C^{M\times M}$ is viewed as a complex vector
space). On $\HH_{M}$, when viewed as a real vector space, we define
the following inner product, which we refer to $\HH_{M}$-inner product:
$\inner{\matA}{\matB}_{\HH_{M}}\coloneqq\Trace{\matA\matB}$. This
inner product induces the Frobenius norm.

We use capital letters to denote dimension sizes, e.g., $M,N,I_{1},I_{2},\dots$
Throughout the paper, whenever we encounter such dimensions we assume
they are power of two, with the corresponding small letter denoting
the base-2 logarithm, e.g., $n=\log_{2}N,m=\log_{2}M$, etc.

\subsection{Matrices, circuits and quantum states: Basic notations and definitions}

We use the standard state vector based formulation of quantum computing,
where the system's state is represented by a unit vector in an Hilbert
space. Without loss of generality, we assume that the Hilbert space
is $\C^{N}$, where $N=2^{n}$ for $n$ qubits. We denote the computational
basis in a $n$-qubit system as $\Ket 0_{n},\Ket 1_{n},\Ket 2_{n},\dots$
. For abstract states in a $n$-qubit system we use the notation $\Ket{\phi}_{n},\Ket{\psi}_{n},\dots$
. We denote by $\Ket{\psi}_{n}\Ket{\phi}_{m}$ the $n+m$ qubit state
obtained by tensoring the two states to obtain a state in $\C^{MN}$
where $N=2^{n}$ and $M=2^{m}$ with $\Ket{\phi}_{n}$ and $\Ket{\psi}_{m}$
being two abstract states on $n$ and $m$ qubits (respectively).
We assume that the MSB is in the lowest index in binary expansions,
e.g., $\Ket i_{n}=\Ket{b_{0}}_{1}\Ket{b_{1}}_{1}\cdots\Ket{b_{n-1}}_{1}\eqqcolon\Ket{b_{0}\cdots b_{n-1}}_{n}$
where $b_{0}\cdots b_{n-1}$ is the binary expansion of $i$; this
aligned with many physics textbooks.

We denote the $n$-qubit system's state whose amplitudes are given
by $\valpha$ after normalization by the ket $\Ket{\valpha}$ notation,
where $\valpha\in\C^{N}$ is the probability amplitude vector. That
is, $\Ket{\valpha}\coloneqq\frac{1}{\TNorm{\valpha}}\sum_{i=0}^{n-1}\alpha_{i}\Ket i_{n}$
where $\alpha_{i}$ is the $i$th entry of $\valpha$. Note that under
these conventions we have $\Ket i_{n}=\Ket{\e_{i}^{(N)}}$ where $\e_{0}^{(N)},\e_{1}^{(N)},\dots$
are the $N$-dimensional identity vectors. Given two amplitude vectors
$\valpha\in\C^{N}$ and $\vbeta\in\C^{M}$, we have $\Ket{\valpha}\Ket{\vbeta}=\Ket{\valpha\otimes\vbeta}$
and this extends naturally to higher number of multiplicands.

We use calligraphic letters to denote both quantum circuits and operators
on the state's Hilbert space (and, as explained in the previous subsection,
hypermatrices), e.g., ${\cal U}$, ${\cal S}$ and $\calT$, also
using the same letter for a circuit and the operator it induces. For
a circuit ${\cal U}$ on $n$ qubits, we use $\matM({\cal U})\in\C^{N\times N}$,
to denote the unique unitary matrix such that for every $\valpha\in\C^{N}$
applying ${\cal U}$ on the state $\Ket{\valpha}$ results in the
state $\Ket{\matM({\cal U})\valpha}$. We say that a circuit ${\cal U}^{\conj}$
is the inverse (or adjoint) of circuit ${\cal U}$ if $\matM({\cal U})=\matM({\cal U})^{\H}$.
Given a quantum system, we denote the application of circuits sequentially
with $\cdot$ or omit altogether. We say two circuits $\cirU$ and
$\cirV$ on the same number of qubits are \emph{equivalent}, denoted
by $\cirU\cong\cirV$, if they induce the same unitary matrix on the
Hilbert space (i.e., $\matM(\cirU)=\matM(\cirV)$).

A quantum register is defined as contiguous subset of qubits of a
multi-qubit system. The size of the register is determined by the
number of qubits it encompasses. Given a system with several quantum
registers, we use the standard tensor product $\otimes$ operator
to denote the application of each circuit on the corresponding register,
but also use $\cdotp_{i}$ to denote the action of a circuit ${\cal U}$
to only register $i$ of the system, e.g., ${\cal U}\cdotp_{i}\Ket{\valpha}$
(where the operation on the other register is the identity). We use
$0$-based indexing for register numbering within a system.

Given a hypermatrix $\tenA\in\C^{I_{0}\times\cdots\times I_{d-1}}$,
we denote by $\kket{\tenA}$ the state $\ket{\vec{\tenA}}$. This
state has $q=\sum_{j=0}^{d-1}i_{j}$ qubits, divided into $d$ registers
of size $i_{0},i_{1},\dots,i_{d-1}$. We also have

\[
\kket{{\cal A}}=\frac{1}{\|{\cal A}\|_{F}}\sum_{j_{0}=0}^{I_{0}-1}\sum_{j_{1}=0}^{I_{1}-1}\cdots\sum_{j_{k-1}=0}^{I_{d-1}-1}\tenA(j_{0},\dots,j_{d-1})\Ket{j_{0}\cdot I_{1}I_{2}\cdots I_{d-1}+j_{1}\cdot I_{2}I_{3}\cdots I_{d-1}+\cdots+j_{d-2}I_{d-1}+j_{d-1}}_{q}.
\]
For matrices, we have
\[
\kket{\matA}=\frac{1}{\|\matA\|_{F}}\sum_{j=0}^{N-1}\sum_{i=0}^{M-1}a_{i,j}\Ket{iN+j}_{n+m}
\]
where $a_{ij}$ denotes the $(i,j)$th entry of $\matA$. We use the
notation $\kket{{\cal A}}$ (instead of $\Ket{\vec{{\cal A}}}$) to
help the reader distinguish state descriptions that are based on hypermatrices
(or matrices) as opposed to vectors.

For a hypermatrix ${\cal A}$ we denote the number of qubits we need
to hold the state $\kket{{\cal A}}$ by $q(\tenA)$. For a circuit
${\cal U}$, we denote the number of gates in ${\cal U}$ by $g(\cirU)$,
and the depth (critical path) of ${\cal U}$ by $d(\cirU)$, and the
number of qubits of $\mathcal{U}$ by $q(\mathcal{U})$.

\subsection{Representing matrices using quantum circuits}

In quantum numerical linear algebra, matrices are represented - typically
as inputs - by quantum circuits. In the literature, the predominant
paradigm for such representations is block encoding. An alternative
approach uses matrix state preparation circuits. In this subsection,
we describe both methods.

A \emph{block encoding} of a matrix $\matA\in\C^{N\times N}$ is any
circuit for which in the corresponding unitary we have a scaled version
of $\matA$ in the top left part. The concept was formally introduced
in~\cite{gilyen2019quantum}, even though the idea can be traced
to earlier works on Linear Combination of Unitaries (LCU)~\cite{childs2011simulating,berry2015hamiltonian,low2019hamiltonian,childs2017quantum}.
Since its introduction, block encodings have been extensively used
in the literature on quantum numerical linear algebra and its applications~\cite{chakraborty2018power,martyn2021grand}.

A formal definition is as follows; the definition is, in one aspect,
slightly more general than the one originally proposed in~\cite{gilyen2019quantum}
in that it allows an arbitrary phase. However, it also slightly less
general as it does not allow an approximation error.
\begin{defn}
[Block Encoding Circuit] \label{def:BE}For $\alpha\geq0$, a circuit
$\cirU$ is a $\alpha$\emph{-Block Encoding} of $\matA\in\C^{N\times N}$,
denoted as $\cirU\in\BE{\matA}{\alpha}$, if there exists a $\theta\in[0,2\pi)$
such that
\[
\alpha e^{\si\theta}\CirMat{\cirU}=\left[\begin{array}{cc}
\matA & *\\*
* & *
\end{array}\right]
\]
We refer to $\alpha$ as the \emph{scale}. Note that the number of
qubits in $\cirU$ must be at least $n$. Surplus qubits are called
\emph{ancillary qubits, }and their number is equal to $q(\cirU)-n$.
We say that a block encoding is \emph{exact} if there is no ancillary
qubits. Note that this possible only if $\matA$ is scaled unitary
matrix.
\end{defn}

Suppose $\cirU\in\BE{\matA}{\alpha}$. Even if $\matA$ is Hermitian
this does not necessarily hold for $\CirMat{{\cal U}}$. However,
in some cases it might be desirable for the embedding matrix to be
Hermitian~\cite{dong2021efficient}. The following defines a more
restricted form of block encoding.
\begin{defn}
[Hermitian Block Encoding Circuit] For $\alpha\in\R$ , a circuit
$\cirU$ is a $\alpha$\emph{-Hermitian Block Encoding} of an Hermitian
matrix $\matA\in\C^{N\times N}$, denoted as $\cirU\in\HBE{\matA}{\alpha}$,
if 
\[
\alpha\CirMat{\cirU}=\left[\begin{array}{cc}
\matA & *\\*
* & *
\end{array}\right]
\]
and $\CirMat{\cirU}$ is Hermitian. We refer to $\left|\alpha\right|$
as the \emph{scale}.
\end{defn}

We denote block encoding circuits of a matrix $\matA$ by $\cirU_{\matA}^{\text{BE}}$,
and $\cirU_{\matA}^{\text{HBE}}$ for a Hermitian block encodings.
Let $a=q(\cirU_{\matA}^{\text{BE}})-n$. The definition of block encoding
implies that for any unit norm vector $\vpsi\in\C^{N}$, there exists
a $\theta$ such that 
\[
\cirU_{\matA}^{\text{BE}}\ket 0_{a}\ket{\vpsi}=\sqrt{p(\cirU_{\matA}^{\text{BE}},\vpsi)}e^{-\si\theta}\ket 0_{a}\ket{\matA\vpsi}=+\sqrt{1-p(\cirU_{\matA}^{\text{BE}},\vpsi)}\ket{\bot}_{q(\cirU)}
\]
where $\ket{\bot}$ satisfies $(\ket{0^{a}}\bra{0^{a}}\otimes\mat I_{N})\ket{\bot}_{q(\cirU_{\matA}^{\text{BE}})}=0$
and 
\begin{equation}
p(\cirU_{\matA}^{\text{BE}},\vpsi)=\frac{\TNormS{\matA\vpsi}}{\alpha^{2}}\label{eq:be-prob}
\end{equation}
We can now measure the ancillary qubits, and if we get $\ket 0_{a}$
we have successfully prepared $\ket{\matA\vpsi}$. The probability
of successfully preparing $\ket{\matA\vpsi}$ is $p(\cirU_{\matA}^{\text{BE}},\vpsi)\geq\sigma_{\min}(\matA)^{2}/\alpha^{2}$.
We see that it is desirable to for the scaling factor $\alpha$ to
be as small possible so to maximize the success probability. The only
lower bound shown in the literature for $\alpha$ is $\alpha\geq\TNorm{\matA}$.
Another trivial to prove bound is $\alpha\ge\frac{\FNorm{\matA}}{\sqrt{\text{M}}}$.
Both bounds are not informative. 

Many algorithms that make extensive use of block encodings assume
the input matrix is given block encodings form. Consequently, having
efficient block encoding circuits for input matrices is crucial for
realizing these algorithms' claimed quantum advantage. However, block
encoding is inherently restricted to matrices. Furthermore, in the
definition above, we defined block encodings only for square matrices.
Though we could have defined it to allow rectangular matrices, such
block encodings are less ubiquitous. Quantum numerical linear algebra
algorithms are mostly defined only for block encoding of square matrices.
For numerical arrays that are not square matrices, such as vectors,
rectangular matrices or higher order tensors, it is more convenient
to work with \emph{state preparation circuits}. For example, algorithms
for solving linear equations assume the right hand side is given as
a state preparation circuit~\cite{harrow2009quantum,childs2017quantum,childs2011simulating}.
\begin{defn}
[Hypermatrix State Preparation Circuit] \label{def:matrix-circ}Let
$I_{0},I_{1},\dots,I_{d-1}$ be powers of 2, and let ${\cal A}\in\mathbb{C}^{I_{0}\times I_{1}\times\cdots\times I_{d-1}}$
be a hypermatrix. For $\alpha\geq0$, a circuit $\cirU$ is a $\alpha$-\emph{Hypermatrix
State Preparation Circuit} of ${\cal A}$, denoted by $\cirU\in\TS{{\cal A}}{\alpha}$,
if there exists a $\theta\in[0,2\pi)$ such that
\[
\alpha e^{\si\theta}\CirMat{\cirU_{{\cal A}}}=\begin{bmatrix}\vdots & * & \cdots & *\\[1mm]
\vec{{\cal A}} & \vdots &  & \vdots\\[1mm]
\vdots & \vdots &  & \vdots\\[1mm]
\vpsi & \vdots &  & \vdots\\[1mm]
\vdots & * & \cdots & *
\end{bmatrix},
\]
for some (possibly empty) vector $\vpsi$. We refer to $\alpha$ as
the \emph{scale}.
\end{defn}

Suppose $\cirU\in\TS{{\cal A}}{\alpha}$. Since $\CirMat{\cirU}$
is, by definition, unitary, $\alpha$ and $\vpsi$ must be such that
first column of $\CirMat{\cirU}$ has unit norm. Thus, we must have
$\alpha\geq\FNorm{{\cal A}}$. While the requirement of $I_{0},I_{1},\dots,I_{d-1}$
to be powers of 2 is formally redundant, it enables a more streamlined
analysis of hypermatrix state preparation circuits. For example, in
the above the number of elements in $\vpsi$ is $2^{q(\cirU)}-I_{0}I_{1}\cdots I_{d-1}=(2^{q(\cirU)-\sum i_{j}}-1)I_{0}I_{1}\cdots I_{d-1}$\textcolor{red}{}.
In the notation $\cirU\in\TS{{\cal A}}{\alpha}$ we do not specify
the number of qubits in ${\cal U}$. It is property of the circuit
itself, not the fact that it is a hypermatrix state preparation circuit.
At the vary least, the number of qubits in ${\cal U}$ must be $\sum i_{j}$.
Thus, the number of \emph{ancillary qubits }in $\cirU\in\TS{{\cal A}}{\alpha}$
is $q(\cirU)-\sum i_{j}$. We denote hypermatrix state preparation
circuits of a hypermatrix $\tenA$ as $\cirU_{\tenA}^{\text{SP}}$.

Suppose that $\cirU_{\tenA}^{\text{SP}}\in\TS{{\cal A}}{\alpha}$.
The definition implies there exists a $\theta$ such that
\[
\cirU_{\tenA}^{\text{SP}}\ket 0_{q(\cirU)}=\sqrt{p(\cirU_{\tenA}^{\text{SP}})}e^{-\si\theta}\ket 0_{q(\cirU_{\tenA}^{\text{SP}})-\sum i_{j}}\kket{{\cal A}}+\sqrt{1-p(\cirU_{\tenA}^{\text{SP}})}\ket{\perp}
\]
where $\ket{\perp}$ satisfies the orthogonality condition $\bra{0^{q(\cirU_{\tenA}^{\text{SP}})-\sum i_{j}}}\otimes{\cal I}{}_{\sum i_{j}})\ket{\perp}=0$,
and $p(\cirU_{\tenA}^{\text{SP}})=\frac{\FNormS{{\cal A}}}{\alpha^{2}}$
is the probability of measuring $\ket 0_{q(\cirU_{\tenA}^{\text{SP}})-\sum i_{j}}$
when measuring only the last $q(\cirU_{\tenA}^{\text{SP}})-\sum i_{j}$
qubits. In other words, $p(\cirU_{\tenA}^{\text{SP}})$ is the probability
of actually succeeding in preparing $\kket{\tenA}$, up to global
phase, in the second register. Thus, it is desirable for $\alpha$
to be as small as possible. We must have $\alpha\geq\FNorm{\tenA}$.
We say that the state preparation circuit is \emph{exact }if $\alpha=\FNorm{\tenA}$
(equivalently, the probability of preparing $\kket{\tenA}$ in the
target register is $1$). A state preparation circuit can be exact
even when there are ancillary qubits, though if there are no ancillary
qubits the circuit must be exact.

Two special cases of hypermatrix state preparation circuits are worth
noting, and giving them a dedicated notation. When $k=1$, we are
dealing with (vector) state preparation circuits with garbage, a construction
extensively used in the literature~\cite{mottonen2004transformation,shende2005synthesis,plesch2011quantum}.
We use the notation $\cirU_{\v}^{\text{SP}}\in\SP{\v}{\alpha}$. When
$k=2$, i.e., $\matA\in\mathbb{C}^{M\times N}$, then $\vec{\matA}$
is the row major vectorization of a matrix. In this case, we denote
the corresponding \emph{Matrix State Preparation} of $\matA$ as $\cirU_{\matA}^{\text{SP}}\in\MS{\matA}{\alpha}$~\cite{yosef2024multivariate}.
See Section~\ref{subsec:Matrix2Msp_Matrix2PauliMsp} for related
work on state preparation circuit synthesis, including a discussion
of resource complexities.

The scale parameter $\alpha$ plays an important role in both block
encodings circuits and matrix state preparation circuits. As we saw,
it is connected to the probability of successfully preparing desired
states (signaled by successfully measuring certain states in the ancillary
qubits). Roughly speaking, when $\alpha$ is smaller, shallower circuits
suffice or less shots are required in downstream uses. The minimum
value is the amount we need to ``stretch'' or ``shrink'' a non-unitary
matrix $\matA$ so it fits within quantum unitary constraints. It
is possible to build an exact block encoding only when $\matA$ is
a scaled unitary matrix. In contrast, for \emph{any} $M\times N$
matrix it is possible to construct a $O(MN)$ depth exact matrix state
preparation circuit in $O(MN\log MN)$ (classical) time \cite{shende2005synthesis}.

\subsection{Permutations of qubits}

Let $S_{q}$ denote the set of permutations on $\{0,\dots,q-1\}$.
For a permutation $\sigma\in S_{q}$, let ${\cal S}_{\sigma}$ denote
the $q$-qubit quantum circuit that permutes the qubits according
to $\sigma$, i.e., qubit $i$ get re-labeled as qubit $\sigma(i)$.
Since we talking about permutation on qubits (and not on basis states),
$\calS_{\sigma}$ can be implemented using SWAP gates for any $\sigma\in S_{q}$.
Note that for the identity permutation $\text{id}_{q}$ we have $\calS_{\text{id}_{q}}={\cal I}_{q}$
where ${\cal I}_{q}$ is the identity circuit (empty circuit).

Given two permutations $\sigma_{1}\in S_{q}$ and $\sigma_{2}\in S_{k}$,
let $\sigma_{1}\oplus\sigma_{2}$ denote the concatenation (aka direct
sum) of the permutations:
\[
(\sigma_{1}\oplus\sigma_{2})(j)=\begin{cases}
\sigma_{1}(j) & 0\leq j<q\\
q+\sigma_{2}(j-q) & q\leq j<q+k
\end{cases}
\]
We have ${\cal S}_{\sigma_{1}\oplus\sigma_{2}}={\cal S}_{\sigma_{1}}\otimes{\cal S}_{\sigma_{2}}$.
The reason that the direct sum translates to a tensor product is because
the permutation is on the qubits, and not the states.

\subsection{Describing the complexity of quantum circuits}

The algorithms we describe in this paper receive classical inputs
and output classical outputs. The output is always a quantum circuit
that is to be executed on quantum hardware. Inputs can be classical
scalar, vectors and matrices or quantum circuits that describe such
objects. Note that the latter are also classical inputs: they are
classical descriptions of quantum circuits. When analyzing such algorithms
(classical-to-classical algorithms that produce quantum circuits),
we need to assess both the classical cost of the algorithm, and the
quality of the output circuits. The former is straightforward, while
the latter is intricate.

There are quite a few ways to understand the quality of a quantum
circuit. One clear metric is the number of qubits used; recall that
we denote this by $q(\cirU$). Two other measures is the overall number
of gates used (known as \emph{gate complexity}) and the length of
the critical path in the quantum circuits (known as \emph{circuit
depth}). We denote the depth and gate complexity of a circuit $\cirU$
(respectively). We are mostly interested in \emph{surplus complexity},
which are costs (depth, gate complexity, ...) in addition to input
circuits.

Both gate complexity and circuit depth depend on a prescribed \emph{gate
library}. Typically, quantum hardware allow only for a certain library
of elementary gates. Quantum computing frameworks decompose quantum
circuits submitted by the user to circuits that use only elementary
gates. This a process is called transpilation or compilation. The
actual circuit depth and gate complexity of the circuit submitted
to hardware (after transpilation) is the real metric of interest.
However, to keep our discussion simple, we make use of high level
gates in circuits description throughout the paper. Specifically,
we allow all single qubit gates, CNOT, fanout-CNOT.

Allowing all single qubit gates gives only a partial view of the overall
quality of the quantum circuit. In analysis of quantum circuits, one
common way to address this gap is to allow (in the gate library) all
Clifford (H, S, CNOT) gates + T gate, and count the complexity in
term of T gates (i.e., T-depth and T-count, denoted by $d_{T}(\cirU)$
and $g_{T}(\cirU)$). The underlying motivation is that in fault\nobreakdash-tolerant
architectures, Clifford operations are practically free, so the real
cost driver are T gates in the circuit, making T\nobreakdash-count
and T\nobreakdash-depth a good proxy for both space and time complexity
of executing a quantum circuit that uses only Clifford+T gates~\cite{amy2014polynomial,selinger2015generators}.
We also make use of this abstraction, making sure to specify the Clifford+T
complexity of single qubit gates we use (see following discussion
for key gates used). Again, we mostly interested in surplus complexity
compared to the input's T-depth and T-count.
\begin{rem}
\label{rem:depth-tdepth-paradox}Logically, the depth of the circuit
is an upper bound on the T-depth. However, we measure both quantities
using a different gate library: regular depth assumes the use of all
single qubit gates and CNOTs, while T-depth allows only Clifford+T.
Thus, it is possible for the T-depth, as we measure it in this paper,
to be larger the the depth (as we measure it in this paper).
\end{rem}

\paragraph{T-costs of rotation gates.}

The T-cost of rotation gates ${\cal R}_{y}(\theta)$ and ${\cal R}_{z}(\theta)$
depends on the angle $\theta$. In general, exact synthesis is only
possible for angles of the specific form $\theta=\nicefrac{k\pi}{2^{n}}$
(for integers $k,n$). For these angels the T-count scales as $O(n)$~\cite{kliuchnikov2012fast,giles2013exact,russell2014exact}.
Certain special angles have reduced T-costs: $\theta\in\{0,\pi/2,\pi\}$
are Clifford gates (T-cost = 0), and $\theta=\pi/4$ is a T-gate (T-cost
= 1). For arbitrary angles, both ${\cal R}_{y}(\theta)$ and ${\cal R}_{z}(\theta)$
must be approximated using Clifford+T decompositions, with T-costs
scaling as $O(\log(1/\epsilon))$ for precision $\epsilon$~\cite{dawson2005solovay,RossSelinger2016Journal}.
For state preparation circuits, the size of the angles depend on the
number of prepared amplitudes and this affects the T-depth of the
rotations. In general, when preparing $N$ amplitudes, you need $O(\log(N/\epsilon))$
T-gates for each rotation, where $\epsilon$ is an accuracy parameter~\cite{low2024trading}.
However, we make the assumption that $\epsilon$ is a fixed small
number, which allows us to ignore the $\log(1/\epsilon)$ term.

\paragraph{Diagonal Gate.}

A diagonal gate ${\cal D}$ on a $m$\nobreakdash-qubit register is
a circuit that applies independent phase factors to each computational
basis state, i.e., 
\[
\matM({\cal D})=\diag{e^{\si\phi_{0}},\dots,e^{\si\phi_{M-1}}}
\]
for some (input) $\phi_{0},\phi_{1},\dots,\phi_{M-1}\in[0,2\pi)$.
Note that while it is typically referred to as a ``gate'', it is
actually a circuit family, and several implementations of it have
been proposed in the literature. In our constructions, we write it
as a gate, even though we actually mean that any implementation can
be used. In complexity analysis, consistent with state-of-the-art
results, we assume that gate and depth complexities of a $m$-qubit
diagonal gate are $O(M/m)$, and the T-costs are $d_{T}({\cal D})=O(M)$
and $g_{T}({\cal D})=O(Mm)$~\cite{sun2023asymptotically,low2024trading}.
We note that structured set of phases, such as multiplexed ${\cal R}_{z}$
rotations for certain rotations, admit significantly lower T\nobreakdash-cost.
\begin{rem}
In reality, a diagonal gate cannot be implemented exactly for arbitrary
phases (though there are phases for which it can be implemented exactly),
and gate constructions implement the operation up to a prescribed
approximation error $\epsilon$. The T-complexities of state-of-the-art
diagonal gate constructions are actually $g_{T}({\cal D})=O(M\log(M/\varepsilon))$,
$d_{T}(D)=O(\frac{M}{m}\log(M/\varepsilon))$. However, we again make
the assumption that $\epsilon$ is a fixed small number, which allows
us to ignore the $\log(1/\epsilon)$ term. For simplicity, we generally
describe the operation of the circuit as if diagonal gates are implemented
exactly.
\end{rem}

\paragraph{Eliminate Permutations.}

In our previous work~\cite[Appendix A.4]{yosef2024multivariate},
we describe a method, called \noun{EliminatePermutations}, to eliminate
trailing and leading SWAP gates in hypermatrix state preparation circuits.
The \noun{EliminatePermutations }algorithm receives a state preparation
circuit ${\cal U}_{\tenA}^{\text{SP}}$ for some hypermatrix $\tenA$,
and assumes the circuit can be written as $\mathcal{U}_{\tenA}^{\text{SP}}={\cal S}_{\sigma_{1}}\cdot{\cal Q}\cdot{\cal S}_{\sigma_{2}}$
for some (possibly empty) circuit ${\cal S}_{\sigma_{1}}$ and ${\cal S}_{\sigma_{2}}$
that are composed only of SWAP gates, so implement qubit permutations.
It outputs a circuit ${\cal V}_{\tenA}^{\text{SP}}$ that also implements
a hypermatrix state preparation circuit for $\tenA$, without leading
and trailing SWAPs. Note that typically $\matM({\cal U}_{\tenA}^{\text{SP}})\neq\matM({\cal V}_{\tenA}^{\text{SP}})$,
however, since both circuits implement the same hypermatrix state
preparation, in many cases ${\cal V}_{\tenA}^{\text{SP}}$ can replace
$\cirU_{\tenA}^{\text{SP}}$ in more elaborate circuits that use $\cirU_{\tenA}^{\text{SP}}$
as a sub-circuit. The classical cost of the algorithm is $O(g({\cal U}_{\tenA}^{\text{SP}}))$.

\section{\label{subsec:rotation-multiplexers}Multiplexers and rotation multiplexers
(aka uniformly controlled rotations)}

Quantum multiplexers (also known as \emph{uniformly controlled circuits})
are central to the various algorithms we propose in this paper. In
this section, we discuss both general construction of mulitplexers,
and ones that are specific to rotation mulitplexers.

Quantum multiplexers are fundamental building blocks in quantum circuits~\cite{bullock2008asymptotically,vartiainen2004efficient,mottonen2004quantum}.
They enable \emph{if-then-else} conditional operations on target qubits
based on the state of select (control) qubits. In circuit diagrams
and figures, quantum multiplexers are typically drawn with two components:
an empty rectangular box on the control lines representing all the
control states, and a target gate on the target qubits representing
the family of circuits. Figure~\ref{fig:mux} (a) illustrates the
graphical notation commonly used to denote multiplexing a parameterized
family of circuits $\{{\cal U}_{\theta}\}$. Mathematically, a multiplexer
${\cal U}$ is usually described via the equation 
\[
{\cal U}=\sum_{i=0}^{M-1}\Ket i\Bra i\otimes{\cal U}_{i}
\]
meaning that the operation ``selects'' which circuit ($\cirU_{0},\dots,\cirU_{M-1}$)
to apply based on the state of the $m=\log_{2}M$ qubits in the control
register. Thus, in the literature on Linear Combinations of Unitaries
(LCU), the operation is referred to as SELECT operation, and is an
important building block in implementing LCUs.

A multiplexer is formally defined as a circuit whose associated unitary
matrix is the direct sum of the unitaries corresponding to the multiplexed
circuits.
\begin{defn}
[Direct sum between matrices] Given $M$ matrices $\matA_{0},\dots,\matA_{M-1}$,
their \emph{direct sum}, denoted by $\bigoplus_{i=0}^{M-1}\matA_{i}$,
is the block diagonal matrix obtained by putting $\matA_{0},\dots,\matA_{M-1}$
on the diagonal: 
\[
\bigoplus_{i=0}^{M-1}\matA_{i}\coloneqq\blkdiag{\matA_{0},\dots,\matA_{M-1}}=\left[\begin{array}{cccc}
\matA_{0}\\
 & \matA_{1}\\
 &  & \ddots\\
 &  &  & \matA_{M-1}
\end{array}\right]
\]
\end{defn}

\begin{defn}
[Multiplexer circuit] Suppose that $\cirU_{0},\dots,\cirU_{M-1}$
are circuits that on the same number of qubits. A circuit that implements
a \emph{multiplexer between} $\cirU_{0},\dots,\cirU_{M-1}$, denoted
by $\cirU\in{\cal MX}({\cal U}_{0},\cirU_{1},\dots,\cirU_{M-1})$,
is any circuit $\cirU$ for which 
\[
\matM\left(\cirU\right)=\bigoplus_{i=0}^{M-1}\matM(\cirU_{i})
\]
Note that $q(\cirU)=n+q(\cirU_{0})$.
\end{defn}

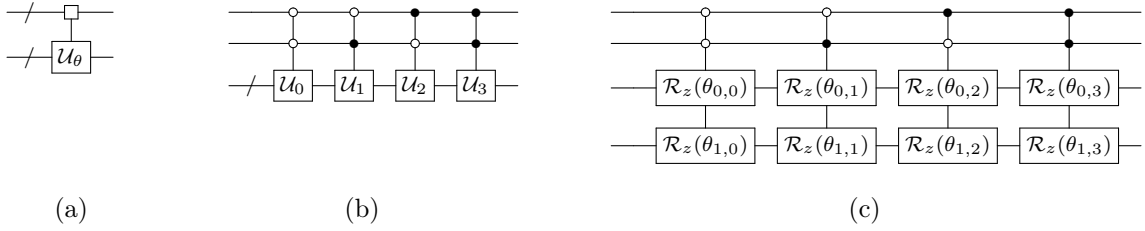
\begin{figure}[t]
\centering{}%
\begin{tabular}{ccccc}
{\footnotesize\Qcircuit @C=1em @R=1em {
 \lstick{} &{/} \qw    & \gate{} 	  		   & \qw 		&  \\
 \lstick{} &{/} \qw &\gate{{\cal U}_{\theta}} \qwx & \qw & 							
 } 
} & {\footnotesize ~} & {\footnotesize\Qcircuit @C=1em @R=1em {
 \lstick{}  & 			& \qw    & \ctrlo{1} & \ctrlo{1} & \ctrl{1} & \ctrl{1} & \qw \\
 \lstick{}  & 			&\qw     & \ctrlo{1} & \ctrl1{1} & \ctrlo{1} & \ctrl{1} & \qw \\
 \lstick{}  & &{/} \qw & \gate{{\cal{U}}_0} & \gate{{\cal{U}}_1} & \gate{{\cal{U}}_2} & \gate{{\cal{U}}_3} & \qw 
 } } & {\footnotesize ~} & {\footnotesize\Qcircuit @C=1em @R=1em {
 \lstick{}  & & \qw & \ctrlo{1} & \ctrlo{1} & \ctrl{1} & \ctrl{1} & \qw \\
 \lstick{}  & & \qw & \ctrlo{1} & \ctrl{1} & \ctrlo{1} & \ctrl{1} & \qw \\
 \lstick{}  & & \qw & \gate{{\cal R}_{z}(\theta_{0,0})} & \gate{{\cal R}_{z}(\theta_{0,1})} & \gate{{\cal R}_{z}(\theta_{0,2})} & \gate{{\cal R}_{z}(\theta_{0,3})} & \qw \\
 \lstick{}  & & \qw & \gate{{\cal R}_{z}(\theta_{1,0})} \qwx & \gate{{\cal R}_{z}(\theta_{1,1})} \qwx & \gate{{\cal R}_{z}(\theta_{1,2})} \qwx & \gate{{\cal R}_{z}(\theta_{1,3})} \qwx & \qw 
}}\tabularnewline
~ &  & ~ &  & ~\tabularnewline
(a) &  & (b) &  & (c)\tabularnewline
\end{tabular}\caption{\label{fig:mux}Quantum multiplexer constructions. (a) Notation for
quantum multiplexer for a family of circuits $\{\protect\cirU_{\theta}\}$.
(b) Generic construction of multiplexer between $\protect\cirU_{0},\protect\cirU_{1},\protect\cirU_{2},\protect\cirU_{3}$.
(c) Generic construction of 2D ${\cal R}_{z}$ multiplexer with angles
in a matrix $\protect\mat{\Theta}$.}
\end{figure}

Figure~\ref{fig:mux} (b) illustrates a generic (and naive) implementation
of a quantum multiplexer that works for aribtrary input circuits,
but uses, as its building block, multi-qubit control versions of input
circuits (the example shown has $M=4$, and thus utilizes two control
qubits).

\subsection{\label{subsec:1d-2d-rotation-multiplexers}Construction of rotation
multiplexers}

Multi-qubit controls are costly, thus the generic implementation is
inefficient. Since in many cases much of the complexity in a quantum
circuit comes from a multiplexer subcircuit, more efficient multiplexers
have been proposed for certain cases. One such case is multiplexers
of rotation gates.
\begin{defn}
Let $M$ be a power of two, and $K\geq1$. Given angles $\theta_{0,0},\dots,\theta_{(K-1),(M-1)}\in[0,2\pi)$,
a $K$-by-$M$ ${\cal R}_{y}$ 2D-multiplexer of these angles is any
circuit $\cirU$ such that
\begin{align*}
\cirU\in{\cal MX}\big( & {\cal R}_{y}(\theta_{0,0})\otimes\cdots\otimes{\cal R}_{y}(\theta_{K-1,0}),\\
 & {\cal R}_{y}(\theta_{0,1})\otimes\cdots\otimes{\cal R}_{y}(\theta_{K-1,1}),\\
 & \quad\quad\quad\quad\quad\quad\quad\vdots\\
 & {\cal R}_{y}(\theta_{0,M-1})\otimes\cdots\otimes{\cal R}_{y}(\theta_{K-1,M-1})\big)
\end{align*}
When $K=1$ we say this is a 1D multiplexer. We also introduce the
notation ${\cal \cirU\in MX}_{y}(\mat{\Theta})$, where $\mat{\Theta}\in\R^{K\times M}$
is the matrix with angles given by $\theta_{0,0},\dots,\theta_{K-1,M-1}$.
The definition of ${\cal R}_{z}$ 1D and 2D multiplexer simply replaces
${\cal R}_{y}$ gates with ${\cal R}_{z}$ gates.
\end{defn}

Figure~\ref{fig:mux} (c) illustrates the construction of a 2-by-4
${\cal R}_{z}$ 2D multiplexer via its naive implementation. Mottonen
et al. showed how to efficiently implement ${\cal R}_{y}$ (respectively
${\cal R}_{z}$) 1D multiplexers via only ${\cal R}_{y}$ (resppectively
${\cal R}_{z}$) gates and CNOT gates~\cite{mottonen2004transformation}.
The technique is illustrated in Figure \ref{fig:recursive-multiplexed}.
Since then, the construction has been repeated as basic building block
in many works~\cite{shende2004smaller,camps2024explicit,shende2005synthesis,camps2022fable}.

We now give more explicit details on the construction of Mottonen
et al. These details are required to for some of our proofs. First
we consider the 1D-multiplexer case ($K=1$). The construction for
${\cal R}_{y}$ and ${\cal R}_{y}$ is essentially the same, with
the difference being the gate used at the bottom of the recursion.
Thus, in the following we use $\xi$ as a generic placeholder for
either $y$ or $z$.

Given a row vector of angles $\vtheta$, the goal is to build circuits
$1d{\cal MXR}_{\xi}(\vtheta)\in{\cal MX}_{\xi}(\vtheta)$. In $1d{\cal MXR}_{\xi}(\vtheta)$
there is one target qubit $t$ and $m$ control qubits labeled $c_{0},\dots,c_{m-1}$.
We first compute a new row vector of angles $\hat{\vtheta}=\frac{1}{\sqrt{M}}\vtheta\mat H_{M}\matG_{M}$
where $\mat H_{M}$ is the $M$-by-$M$ Walsh-Hadamard matrix, and
and $\matG_{M}$ is the permutation matrix that reorders from binary
order to the binary-reflected Gray code (BRGC) order (i.e., index
$j$ moves to index $j\oplus(j\gg1)$ where $j\gg1$ denotes shift
of bits right). From there, the construction is 
\[
1d{\cal MXR}_{\xi}(\vtheta)={\cal F}_{m}^{(0)}(\hat{\vtheta})\cdot\text{CX}(c_{0},t)
\]
where ${\cal F}_{m}^{(0)}(\hat{\vtheta})$ is specified recursively
(with superscript denoting depth of recursion): 
\begin{equation}
{\cal F}_{m}^{(d)}(\vphi)\coloneqq{\cal F}_{m}^{(d+1)}(\vphi_{0:(2^{m-d-1}-1)})\cdot\text{CX}(c_{d},t)\cdot{\cal F}_{m}^{(d+1)}(\vphi_{2^{m-d-1}:(2^{m-d}-1)})\label{eq:recursive-f_m}
\end{equation}
The leaf case is:
\begin{equation}
{\cal F}_{m}^{(m-1)}(\alpha,\beta)\coloneqq{\cal R}_{\xi,t}(\alpha)\cdot\text{CX}(c_{m-1},t)\cdot{\cal R}_{\xi,t}(\beta)\label{eq:base-f_m}
\end{equation}
In the above, $\text{CX}(c,t)$ denotes a CNOT gate with control $c$
and target $t$, and ${\cal R}_{\xi,t}(\gamma)$ denotes adding a
single ${\cal R}_{\xi}$ rotation by angle $\gamma$ on target $t$.
For simplicity, we allow omitting the rotation label when clear from
context, since the circuit construction is identical for both ${\cal R}_{z}$
and ${\cal R}_{y}$ multiplexers, differing only in the leaf rotations.

The technique can be easily extended to 2D rotation multiplexers,
repeating the 1D multiplexer on ``rows'' of the circuit, using the
same control qubits. The CNOTs become fanout-CNOTS\footnote{A sequence of CNOT gates where the same control qubit is used to target
multiple different qubits (sequentially with depth 1).}. See Figure~\ref{fig:2d-multiplexed} for an illustration. In the
following, we denote this construction by $2d{\cal MXR}_{y}(\mat{\Theta})$
and by $2d{\cal MXR}_{z}(\mat{\Theta})$ where $\mat{\Theta}\in\R^{K\times M}$
is a matrix of rotation angles. A key observation is that when implementing
$2d{\cal MXR}_{y}(\mat{\Theta})$ (respectively $2d{\cal MXR}_{z}(\mat{\Theta})$)
we set up an array of ${\cal R}_{y}$ (respectively, ${\cal R}_{z}$)
gates with the same layout as $\mat{\Theta}$, with CNOT gates in
between. The following fact relates the rotation angles to the angles
in the gates of the multiplexers.
\begin{fact}
[\cite{mottonen2004quantum}] \label{fact:GN-fact}The angles in the
rotation gates of $1d{\cal MXR}_{z}(\vtheta)$ and $1d{\cal MXR}_{y}(\vtheta)$
are given by $\hat{\mat{\Theta}}=\frac{1}{\sqrt{M}}\mat{\Theta}\mat H_{M}\matG_{M}$
where $\mat H_{M}$ is the $M$-by-$M$ Walsh-Hadamard matrix, and
and $\matG_{M}$ is the permutation matrix that reorders from binary
order to the binary-reflected Gray code (BRGC) order (i.e., index
$j$ moves to index $j\oplus(j\gg1)$ where $j\gg1$ denotes shift
of bits right).
\end{fact}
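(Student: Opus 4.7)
The plan is to verify that the recursive construction $1d\mathcal{MXR}_\xi(\vtheta) = \mathcal{F}_m^{(0)}(\hat{\vtheta}) \cdot \text{CX}(c_0, t)$ with $\hat{\vtheta} = (1/\sqrt{M})\vtheta\, \mat H_M\, \matG_M$ indeed implements the target multiplexer $\mathcal{MX}_\xi(\vtheta)$; this simultaneously justifies the formula in the statement. The key observation is that since $\mathcal{R}_\xi$ gates (for $\xi\in\{y,z\}$) rotate about a single axis, they compose additively, and projecting the control register onto a basis state $|j\rangle_m$ reduces each CNOT targeted at $t$ to either the identity or an $X$ on the target. Since $X\,\mathcal{R}_\xi(\alpha)\,X = \mathcal{R}_\xi(-\alpha)$ for $\xi\in\{y,z\}$, the combined action on the target qubit, conditional on $|j\rangle_m$, is always a single $\mathcal{R}_\xi$ rotation whose angle is a signed sum of the entries of $\hat{\vtheta}$.

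Concretely, I would first establish the leaf case (Equation~(\ref{eq:base-f_m})) by direct computation: projecting $c_{m-1}$ onto $|b\rangle$, the circuit $\mathcal{F}_m^{(m-1)}(\alpha,\beta)$ implements $\mathcal{R}_\xi(\alpha + (-1)^b \beta)$ on the target. Then I would induct on the recursion depth $d$ in Equation~(\ref{eq:recursive-f_m}): assuming each of the two sub-circuits $\mathcal{F}_m^{(d+1)}$ implements a signed sum of its angle arguments on the target (with signs depending on the remaining control bits $c_{d+1}, \ldots, c_{m-1}$), the intermediate $\text{CX}(c_d, t)$ flips the overall sign applied to the second sub-block whenever $c_d = 1$, yielding a signed-sum formula with $2^{m-d}$ terms. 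Unrolling the recursion and including the outer $\text{CX}(c_0, t)$, the net angle applied at control state $|j\rangle_m$ takes the form $(\mat S\,\hat{\vtheta}^\top)_j$ for a $\pm 1$ matrix $\mat S$ whose structure will turn out to be $\sqrt{M}\,\matG_M^{-1}\,\mat H_M^{-1}$ (up to transposition, depending on indexing conventions).

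The final step is purely algebraic: requiring the effective rotation at control $|j\rangle$ to equal the target angle $\theta_j$ gives $\vtheta^\top = \mat S\,\hat{\vtheta}^\top$, and using the identities $\mat H_M^{-1} = \mat H_M / M$ (from $\mat H_M^2 = M\mat I_M$) and $\matG_M^{-1} = \matG_M^\top$ yields $\hat{\vtheta} = (1/\sqrt{M})\,\vtheta\, \mat H_M\, \matG_M$ as claimed. The 2D case follows immediately by applying the argument row-by-row on the target qubits, which is exactly why the CNOTs upgrade to fanout-CNOTs without changing the angle formula. The main obstacle I expect is the sign bookkeeping in the inductive step: the cumulative pattern of CNOTs emitted by the recursion traces a Hamiltonian path on the hypercube that corresponds to the binary-reflected Gray code, and verifying that this traversal induces exactly the permutation $\matG_M$ (rather than its inverse, or a shifted variant) requires careful attention to whether the recursion descends left-first or right-first, as well as to the big-/little-endian convention used to index the control register.
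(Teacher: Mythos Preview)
The paper does not provide its own proof of this statement: it is stated as a \emph{Fact} with a citation to M\"ott\"onen et al.\ and is used without justification beyond the preceding description of the recursive construction. Your sketch is the standard argument for this result and is essentially correct; in particular, the two ingredients you isolate --- the conjugation identity $X\,\mathcal{R}_\xi(\alpha)\,X = \mathcal{R}_\xi(-\alpha)$ and the observation that unrolling the recursion~(\ref{eq:recursive-f_m}) traces a binary-reflected Gray code path on the control hypercube --- are exactly what drives the original derivation in the cited reference. The obstacle you flag (sign/endianness bookkeeping to pin down whether one obtains $\matG_M$ or $\matG_M^\top$) is real but routine, and the paper's specific convention (controls labeled $c_0,\dots,c_{m-1}$ with the outer closing $\text{CX}(c_0,t)$) fixes it unambiguously once you trace one small case.
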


The classical cost of computing the $\hat{\mat{\Theta}}$ using (fast)
Welsh-Hadamard transform is $O(KMm)$. The resulting circuit has $KM$
rotations angles translated to gates and $M$ fanout-CNOTs gate (between
each columns). The depth, when counting fanout-CNOTs and single control
rotation gates as depth 1, is $2M$. For T + Clifford analysis, the
T-gate count is $O(KM\log(M/\epsilon))$ (each of the $KM$ arbitrary
rotations requires $O(\log(M/\epsilon))$ T-gates for approximation,
where $\epsilon$ represents the total approximation error when converting
arbitrary rotation angles into discrete T-gates) and the T-depth is
$O(M\log(M/\epsilon))$ (the overall $2M$ depth is multiplied by
the $O(\log(M/\epsilon))$ T-depth of each approximated rotation layer)
for an approximation error $\epsilon$ of the arbitrary angles.

The discussion above assumed that the angles $\hat{\mat{\Theta}}$
that form the array of rotation gates is dense and arbitrary. However,
it might be the case that $\hat{\mat{\Theta}}$ is sparse even if
$\mat{\Theta}$ is dense, and vice versa. The quantum complexity of
the resulting multiplexer circuit depends on the density of $\hat{\mat{\Theta}}$,
though the classical cost of forming $\hat{\mat{\Theta}}$ is unchanged
(so the classical cost of creating the multiplexer remains the same).
In Section~\ref{sec:multiplexing-pauli-coefficient}, we discuss
a case in which $\hat{\mat{\Theta}}$ is not only highly sparse, but
also structured with unique entries. We show that in that case, we
can build an ad-hoc construction for this multiplexer that is highly
efficient, reducing both quantum and classical costs.

\begin{figure}[t]
\begin{centering}
{\footnotesize\begin{tabular}{c c c c c c} 
\Qcircuit @C=0.8em @R=1em { & \qw & \gate{} & \qw \\ & \qw & \gate{} \qwx & \qw \\ & \qw & \gate{{\cal R}_z} \qwx & \qw } 
& 
\raisebox{-2em}{$\cong$}     
& 
\Qcircuit @C=0.8em @R=1em {
& \qw & \qw 			& \ctrl{2} 	& \qw 			  & \ctrl{2}    & \qw \\
& \qw & \gate{}  		& \qw  		& \gate{}         & \qw			& \qw \\
& \qw & \gate{{\cal R}_z} \qwx & \targ 	& \gate{{\cal R}_z} \qwx & \targ 		& \qw 
}  
&
\raisebox{-2em}{$\cong$}     
& 
 \Qcircuit @C=0.8em @R=1em { 
& \qw & \qw           & \qw    		& \qw           & \ctrl{2}    & \qw & \qw           & \qw    		& \qw      & \ctrl{2} & \qw     \\ 
& \qw & \qw           & \ctrl{1}    & \qw           & \qw         & \qw & \qw           & \ctrl{1}    & \qw        & \qw      & \qw  \\ 
& \qw & \gate{{\cal R}_{z}(\hat{\theta}_{0})}    & \targ       & \gate{{\cal R}_{z}(\hat{\theta}_{1})}    & \targ       & \qw & \gate{{\cal R}_{z}(\hat{\theta}_{2})}    & \targ       & \gate{{\cal R}_{z}(\hat{\theta}_{3})} & \targ    & \qw   
} 
\end{tabular}}{\footnotesize\par}
\par\end{centering}
\caption{\label{fig:recursive-multiplexed}Illustration of the recursive construction
of \cite{mottonen2004transformation} for 1D ${\cal R}_{z}$-rotation
multiplexer of angles $\theta_{0},\theta_{1},\theta_{2},\theta_{3}$.
The angles $\hat{\theta}_{0},\hat{\theta}_{1},\hat{\theta}_{2},\hat{\theta}_{3}$
are computed as linear combination of $\theta_{0},\theta_{1},\theta_{2},\theta_{3}$.}
\end{figure}
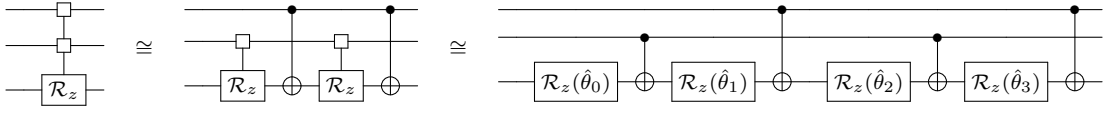

\begin{figure}[t]
\begin{centering}
{\footnotesize\begin{tabular}{c c c c c c} 
\Qcircuit @C=0.8em @R=1em { 
& \qw & \gate{} & \qw \\
& \qw & \gate{} \qwx & \qw \\
& \qw & \gate{{\cal R}_z} \qwx & \qw \\
&  & \raisebox{0.7em} \vdots  &  \\
& \qw & \gate{{\cal R}_z}  & \qw 
} 
& 
\raisebox{-2em}{$\cong$}     
& 
\Qcircuit @C=0.5em @R=1em {
& \qw & \qw 			& \ctrl{2} &\qw     & \ctrl{4}    & \qw 			  & \ctrl{2}&\qw    & \ctrl{4}    & \qw \\
& \qw & \gate{}  		& \qw  	   &\qw     & \qw  		  & \gate{}           & \qw		&\qw    & \qw  			& \qw \\
& \qw & \gate{{\cal R}_z} \qwx & \targ    &\qw     & \qw 	      & \gate{{\cal R}_z} \qwx   & \targ 	&\qw    & \qw 	   		& \qw  \\
&     & \raisebox{0.7em} \vdots  		&  		   &\ddots  &  	  		  & \raisebox{0.7em} \vdots  		  &  	 	&\ddots &  			    &   \\																	 			
& \qw & \gate{{\cal R}_z}      & \qw 	   &\qw     & \targ 	  & \gate{{\cal R}_z}   	  & \qw 	&\qw    & \targ 		& \qw  \\
}  
&
\raisebox{-2em}{$\cong$}     
& 
\scalebox{0.7}{
\Qcircuit @C=0.5em @R=1em {
& \qw & \qw 			        &\qw  		&\qw    & \qw      & \qw 			         & \ctrl{2}  &\qw     & \ctrl{4}  & \qw 			        &\qw  		&\qw    & \qw      & \qw 			           & \ctrl{2} &\qw    & \ctrl{4}  & \qw \\
& \qw & \qw		       			&\ctrl{1}  &\qw     & \ctrl{3} & \qw  		             & \qw  	 &\qw     & \qw  	  & \qw		       			&\ctrl{1}  &\qw     & \ctrl{3} & \qw  		          	  & \qw		&\qw    & \qw  		& \qw \\
& \qw & \gate{{\cal R}_{z}(\hat{\theta}_{0,0})}          	&\targ     &\qw     & \qw 	   & \gate{{\cal R}_{z}(\hat{\theta}_{0,1})}            	 & \targ     &\qw     & \qw 	  & \gate{{\cal R}_{z}(\hat{\theta}_{0,2})}          	&\targ     &\qw     & \qw 	   & \gate{{\cal R}_{z}(\hat{\theta}_{0,3})}            	   & \targ 	&\qw    & \qw 	   	& \qw  \\
&     & \raisebox{0.7em} \vdots & 		    &\ddots &  	  	   & \raisebox{0.7em} \vdots &  		 &\ddots  &  	  	  & \raisebox{0.7em} \vdots & 		    &\ddots &  	  	   & \raisebox{0.7em} \vdots   &  	 	&\ddots &  			&   \\																	 			
& \qw & \gate{{\cal R}_{z}(\hat{\theta}_{K-1,0})}     			&\qw 	    &\qw    & \targ    & \gate{{\cal R}_{z}(\hat{\theta}_{K-1,1})}     		 & \qw 	     &\qw     & \targ 	  & \gate{{\cal R}_{z}(\hat{\theta}_{K-1,2})}     			&\qw 	    &\qw    & \targ    & \gate{{\cal R}_{z}(\hat{\theta}_{K-1,3})}     		   & \qw 		&\qw    & \targ 	& \qw  \\
}

}

\end{tabular}}{\footnotesize\par}
\par\end{centering}
\caption{\label{fig:2d-multiplexed} Illustration of the recursive construction
of 2D ${\cal R}_{z}$rotation multiplexer of angles $\protect\mat{\Theta}\in\protect\R^{K\times4}$.
The angles $\hat{\protect\mat{\Theta}}$ are computed by $\hat{\protect\mat{\Theta}}=\frac{1}{2}\protect\mat{\Theta}\protect\mat H_{4}\protect\matG_{4}$.}
\end{figure}
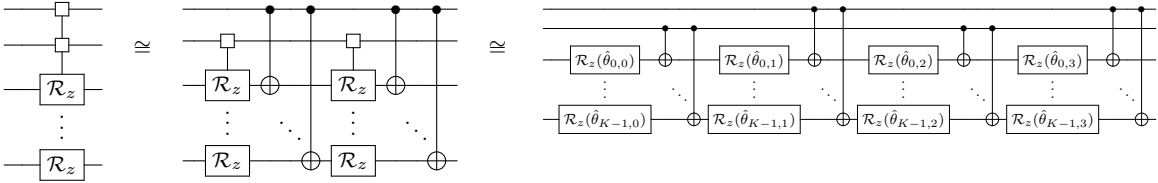

\section{\label{sec:Converting-between-standard}Converting between standard
and Pauli basis}

The Pauli decomposition of matrices plays a fundamental role in quantum
computing and quantum information theory, providing an efficient means
of representing and manipulating quantum operators. We make extensive
use of this decomposition in our constructions. In this section, we
discuss how to convert between the standard and Pauli basis representations
of a matrix, both classically and quantumly. While the classical case
has been addressed in the literature, the quantum case has not.

\subsection{Pauli matrices}

The Pauli matrices, along with the identity matrix, form a orthogonal
basis for $2\times2$ matrices.
\begin{defn}
[Pauli matrices] The Pauli matrices are: 
\begin{align*}
\sigma_{0}\coloneqq\sigma_{I}=\matI & =\left[\begin{array}{cc}
1 & 0\\
0 & 1
\end{array}\right]\\
\sigma_{1}\coloneqq\sigma_{X}=\matX & =\left[\begin{array}{cc}
0 & 1\\
1 & 0
\end{array}\right]\\
\sigma_{2}\coloneqq\sigma_{Y}=\matY & =\left[\begin{array}{cc}
0 & -\si\\
\si & 0
\end{array}\right]\\
\sigma_{3}\coloneqq\sigma_{Z}=\matZ & =\left[\begin{array}{cc}
1 & 0\\
0 & -1
\end{array}\right]
\end{align*}
\end{defn}

\begin{fact}
The following hold:
\begin{itemize}
\item \textbf{Orthogonality with respect to $\inner{\cdot}{\cdot}_{\HH_{2}}$:}
$\Trace{\sigma_{i}\sigma_{j}}=2\delta_{ij}$ where $\delta_{ij}$
is the Kronecker delta.
\item \textbf{Anti-commutation: $\sigma_{i}\sigma_{j}=-\sigma_{j}\sigma_{i}$}
for $i\ne j,\,i,j\in\{1,2,3\}$.
\item \textbf{Multiplication:} $\sigma_{i}^{2}=\matI$ for all $i$.
\item \textbf{Basis Completeness:} Any $2\times2$ Hermitian matrix $\matA$
can be uniquely expressed as $\matA=\sum_{i=0}^{3}\alpha_{i}\sigma_{i},$
where $\alpha_{i}\in\R$, and any $2\times2$ complex matrix $\matB$
can be uniquely expressed as $\matB=\sum_{i=0}^{3}\beta_{i}\sigma_{i}$
where $\beta_{i}\in\C$.
\end{itemize}
\end{fact}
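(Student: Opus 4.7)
The plan is to verify each of the four listed items by direct computation and then a short dimension-counting argument. For the first three (orthogonality, anti-commutation, multiplication), I would simply compute the nine products $\sigma_i\sigma_j$ for $i,j\in\{1,2,3\}$ explicitly from the $2\times 2$ matrix definitions. This yields the Pauli algebra identities
\[
\sigma_X^2=\sigma_Y^2=\sigma_Z^2=\matI,\qquad \sigma_X\sigma_Y=\si\sigma_Z,\quad \sigma_Y\sigma_Z=\si\sigma_X,\quad \sigma_Z\sigma_X=\si\sigma_Y,
\]
together with the analogous identities obtained by swapping the factors (which pick up a minus sign). The squaring identity together with $\sigma_I^2=\matI$ gives multiplication, and the sign flip upon swapping gives anti-commutation. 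Orthogonality then follows: $\Trace{\sigma_i^2}=\Trace{\matI}=2$ handles the diagonal, while $\Trace{\sigma_i\sigma_j}$ for $i\ne j$ is (up to a scalar) the trace of another traceless Pauli matrix, hence zero. The case $i=0$ or $j=0$ is just $\Trace{\sigma_j}=0$ for $j\in\{1,2,3\}$ and $\Trace{\matI}=2$ for $j=0$.

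For basis completeness over $\R$ for Hermitian matrices, I would first note that the space $\HH_2$ of $2\times 2$ Hermitian matrices is a real vector space of dimension $4$, and that each $\sigma_i$ is Hermitian. Orthogonality with respect to $\inner{\cdot}{\cdot}_{\HH_2}$ established in the first bullet shows $\{\sigma_0,\sigma_1,\sigma_2,\sigma_3\}$ is a linearly independent set of four Hermitian matrices, so it forms an orthogonal basis of $\HH_2$. The real coefficients are recovered by
\[
\alpha_i = \tfrac{1}{2}\Trace{\sigma_i\matA},
\]
which is real since $\matA$ and $\sigma_i$ are Hermitian and the inner product takes real values on $\HH_2$.

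For basis completeness over $\C$, I would observe that $\C^{2\times 2}$ has complex dimension $4$. The Pauli matrices, viewed now as vectors in the complex vector space $\C^{2\times 2}$ under the Frobenius inner product $\inner{\matA}{\matB}=\Trace{\matA^{\H}\matB}$, are again orthogonal (each $\sigma_i$ is Hermitian so $\sigma_i^{\H}=\sigma_i$, and the orthogonality relation is the same). Hence they form an orthogonal basis of $\C^{2\times 2}$ over $\C$, with complex coefficients $\beta_i = \tfrac{1}{2}\Trace{\sigma_i\matB}$. Uniqueness in both the real and complex cases is immediate from linear independence.

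There is no real obstacle here: the whole Fact reduces to finite $2\times 2$ matrix arithmetic plus the standard observation that $n$ mutually orthogonal nonzero vectors in an $n$-dimensional inner product space form a basis. The only mild care needed is to keep the real and complex cases separate when invoking dimension counting, since $\C^{2\times 2}$ has real dimension $8$ while $\HH_2$ has real dimension $4$, and the same four Pauli matrices span $\HH_2$ over $\R$ and $\C^{2\times 2}$ over $\C$.
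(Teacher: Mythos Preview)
Your proposal is correct and complete. The paper itself does not prove this Fact at all---it is stated as a well-known property of the Pauli matrices and left without justification---so your direct computation of the Pauli algebra identities together with the dimension-counting argument for basis completeness is more than the paper provides.
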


Going to higher order, we have the following.
\begin{defn}
[Higher-order Pauli matrices] Let $\Sigma=\{I,X,Y,Z\}$ and $N=2^{n}$.
For a \emph{word} (i.e., sequence) $\w=(w_{0},w_{2},\dots,w_{n-1})\in\Sigma^{n}$,
the \emph{n-wise Pauli matrix} corresponding to $\w$ is defined as:
\[
\sigma_{\w}\coloneqq\sigma_{w_{0}}\otimes\sigma_{w_{1}}\otimes\cdots\otimes\sigma_{w_{n-1}}\in\mathbb{H}_{N}
\]

In quantum computing and quantum information, such matrices are typically
referred to \emph{multi-qubit Pauli matrices}. However, their description
is completely classical, so we opted not to use the confusing term
``qubit''. It is also more notation efficient and more convenient
to write $\w$ as string $\w=w_{0}w_{1}\cdots w_{n-1}$, and use juxtaposition
to denote word concatenation, i.e., $\w_{1}\w_{2}\coloneqq w_{11}\cdots w_{1,k}w_{21}\cdots w_{2,m}$.
In term of indices, it is useful to identify the symbols $I,X,Y,Z$
as $I=0,\,X=1,\,Y=2,\,Z=3$, so, for example, we view the word $XYZIX$
as synonymous to $(1,2,3,0,1)$.

Assume $N=2^{n}$. Consider the set $\PP_{N}\coloneqq\{\sigma_{\w}\,:\,\w\in\Sigma^{n}\}\subseteq\HH_{N}$.
Clearly, there are $|\Sigma|^{n}=N^{2}$ such matrices. Since they
are the Kronecker product of Hermitian matrices, they are also Hermitian.
Due to the mixed-product property of the Kronecker, and the fact that
trace of Kronecker product is the product of traces, for each $\w,\s\in\Sigma^{q}$
we have
\[
\inner{\sigma_{\w}}{\sigma_{s}}_{\HH_{N}}=\begin{cases}
N & \w=\s\\
0 & \w\neq\s
\end{cases}
\]
Recall that the dimension of $\HH_{N}$ as a real vector space is
$N^{2}$, so the set $\PP_{N}$ forms an orthogonal basis for $\HH_{N}$
over reals. They are also an orthogonal basis for $\C^{N\times N}$
over $\C$ with respect to the Hilbert-Schmidt inner product.\emph{}
\end{defn}

\subsection{\label{subsec:Pauli-decomposition-classiq}Pauli decomposition and
its (classical) computation}

Higher-order Pauli matrices form an orthogonal basis of $\C^{N\times N}$
over $\C$, so every matrix $\matA\in\C^{N\times N}$ can be uniquely
expressed as a linear combination of higher-order Pauli matrices:

\[
\matA=\sum_{\w\in\Sigma^{n}}\alpha_{\matA}(\w)\sigma_{\w}
\]
where $\{\alpha_{\matA}(\w)\}_{\w\in\Sigma^{n}}\subseteq\C$ are the
$N^{2}$ \emph{Pauli coefficients} of $\matA$. When $\matA$ is Hermitian,
we have $\alpha_{\matA}(\w)\in\R$ for all $\w\in\Sigma^{n}$. For
every $\w$, view $\matA\mapsto\alpha_{\matA}(\w)$ as function from
$\C^{N\times N}$ (respectively $\HH_{N}$) to $\C$ (respectively
$\R$). Alternatively, for any matrix $\matA$ we can view $\w\mapsto\alpha_{\matA}(\w)$
as a function from $\Sigma^{n}$ to $\C$ (or $\R$ for Hermitian
$\matA$). Since we can identify each word with an index between $0$
and $N^{2}-1$ (using lexigraphical ordering), we can view $\alpha_{\matA}$
as a vector of length $N^{2}$.

We can also arrange the Pauli coefficients in the form of a hypermatrix,
using the identification $I=0,\,X=1,\,Y=2,\,Z=3$. We denote this
hypermatrix by $\tenA_{P}$. When $\matA$ is $N\times N$, the order
of $\tenA_{P}$ is $n$, and its dimension is $4\times4\times\cdots\times4$.
Since the higher order Pauli matrices form an orthogonal basis, and
the norm of each basis matrix is $\sqrt{N}$, we have $\FNormS{\matA}=N\FNormS{\tenA_{P}}$.

\begin{example}
Let's consider a simple case of a $4\times4$ Hermitian matrix. Its
Pauli decomposition involves $16$ terms. The hypermatrix, here a
matrix, $\mathbf{A}_{P}$ is $4\times4$, where each element corresponds
to a specific Pauli coefficient:
\[
\matA_{P}=\left[\begin{array}{cccc}
\alpha_{II} & \alpha_{IX} & \alpha_{IY} & \alpha_{IZ}\\
\alpha_{XI} & \alpha_{XX} & \alpha_{XY} & \alpha_{XZ}\\
\alpha_{YI} & \alpha_{YX} & \alpha_{YY} & \alpha_{YZ}\\
\alpha_{ZI} & \alpha_{ZX} & \alpha_{ZY} & \alpha_{ZZ}
\end{array}\right]
\]
\end{example}

Using the orthogonality of $\PP_{N}$, a simple formula for the Pauli
coefficients can be obtained:
\[
\alpha_{\matA}(\w)=\frac{\Trace{\sigma_{{\cal \w}}\matA}}{N}.
\]
Computing the Pauli coefficients via direct evaluation of this formula
is computationally expensive, with total cost of $O(N^{5})$. However,
by exploiting the structure of $\sigma_{\w}$ and the properties of
Pauli matrices, we can derive a substantially more efficient evaluation
method and reduce the cost to $O(N^{2}n)$. The basic idea is to use
\emph{Pauli submatrices} and \emph{Pauli subcoefficents}.
\begin{defn}
[Pauli submatrices] Given $\matA\in\C^{N\times N}$, and for each
$c\in\Sigma$, we define the Pauli submatrix $\matA_{c}\in\C^{(N/2)\times(N/2)}$
as:
\[
\matA_{c}\coloneqq\sum_{\w\in\Sigma^{n-1}}\alpha_{\matA}(c\w)\sigma_{{\cal \w}}
\]
When $\matA$ is Hermitian, $\matA_{c}$ is also Hermitian.
\end{defn}

\begin{example}
\label{exa:submatrics_example}Given $\matA$, we have
\begin{align*}
\matA_{I} & =\sum_{\w\in\Sigma^{n-1}}\alpha_{\matA}(I{\cal \w})\sigma_{{\cal \w}}\\
\matA_{X} & =\sum_{\w\in\Sigma^{n-1}}\alpha_{\matA}(X{\cal \w})\sigma_{{\cal \w}}\\
\matA_{Y} & =\sum_{\w\in\Sigma^{n-1}}\alpha_{\matA}(Y{\cal \w})\sigma_{{\cal \w}}\\
\matA_{Z} & =\sum_{\w\in\Sigma^{n-1}}\alpha_{\matA}(Z{\cal \w})\sigma_{{\cal \w}}
\end{align*}
\end{example}

\begin{defn}
[Pauli subcoefficient] Given $\matA\in\C^{N\times N}$, for each
$c\in\Sigma$ define the mapping $\alpha_{\matA,c}:\Sigma^{n-1}\to\C$
as follows:
\[
\alpha_{\matA,c}(\w)\coloneqq\alpha_{\matA}(c\w)
\]
The result is real if $\matA$ is Hermitian.
\end{defn}

\begin{lem}
\emph{Any matrix $\matA\in\C^{N\times N}$ can be expressed as: 
\begin{equation}
\matA=\frac{1}{\sqrt{2}}\left[\begin{array}{cc}
\matA_{I}+\matA_{Z} & \matA_{X}-\si\matA_{Y}\\
\matA_{X}+\si\matA_{Y} & \matA_{I}-\matA_{Z}
\end{array}\right]\label{eq:block-pauli}
\end{equation}
}
\end{lem}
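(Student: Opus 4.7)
The plan is to obtain the block-matrix identity by combining the Pauli decomposition of $\matA$ with the tensor structure of higher-order Pauli matrices. Starting from the expansion $\matA = \sum_{\w \in \Sigma^n} \alpha_\matA(\w)\sigma_\w$, I would split each word as $\w = c\w'$ with $c \in \Sigma$ and $\w' \in \Sigma^{n-1}$, and apply the defining tensor identity $\sigma_{c\w'} = \sigma_c \otimes \sigma_{\w'}$. Regrouping the sum by the leading symbol and invoking the definition of the Pauli submatrix $\matA_c$ yields
\[
\matA \;=\; \sum_{c \in \Sigma} \sigma_c \otimes \Bigl(\sum_{\w' \in \Sigma^{n-1}} \alpha_\matA(c\w')\sigma_{\w'}\Bigr) \;=\; \sum_{c \in \Sigma} \sigma_c \otimes \matA_c,
\]
so the problem reduces to expanding a single-level Kronecker factorization.

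The second step is simply to substitute the explicit $2\times 2$ forms of $\sigma_I, \sigma_X, \sigma_Y, \sigma_Z$ and read off the four $(N/2)\times(N/2)$ blocks of $\matA$. The contributions from $\sigma_I \otimes \matA_I$ and $\sigma_Z \otimes \matA_Z$ are block-diagonal and combine to give $\matA_I + \matA_Z$ in the top-left block and $\matA_I - \matA_Z$ in the bottom-right. The contributions from $\sigma_X \otimes \matA_X$ and $\sigma_Y \otimes \matA_Y$ are block-antidiagonal and combine to give $\matA_X - \si\matA_Y$ in the top-right block and $\matA_X + \si\matA_Y$ in the bottom-left. Assembling the four blocks recovers the claimed right-hand side matrix.

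There is no real obstacle in the argument. The only step requiring care is bookkeeping the signs and imaginary units contributed by $\sigma_Y$ and $\sigma_Z$ when assembling the four blocks, and checking that the overall scaling factor in front of the block matrix agrees with the normalization convention used for the Pauli decomposition elsewhere in the paper. Once the tensor-factoring step and the elementary $2\times 2$ substitution are performed, the identity follows immediately by linearity of the Kronecker product in its left argument.
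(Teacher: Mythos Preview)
Your proposal is correct and follows essentially the same route as the paper: both start from the Pauli expansion of $\matA$, split by the leading symbol to obtain $\matA = \sum_{c\in\Sigma}\sigma_c\otimes\matA_c$ via bilinearity and associativity of the Kronecker product, and then read off the four blocks by substituting the explicit $2\times 2$ Pauli matrices. Your remark about checking the overall $\tfrac{1}{\sqrt{2}}$ normalization against the paper's conventions is well placed, since that factor is exactly the point where the definitions and the final block formula must be reconciled.
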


\begin{proof}
Due to the bilinearity and associativity of the Kronecker product
we have {\scriptsize
\begin{align*}
\matA & =\sum_{\w\in\Sigma^{n}}\alpha_{{\cal \w}}\sigma_{{\cal \w}}\\
 & =\sum_{\w\in\Sigma^{n-1}}\alpha_{\matA,I}({\cal \w})\sigma_{I{\cal \w}}+\sum_{\w\in\Sigma^{n-1}}\alpha_{\matA,X}({\cal \w})\sigma_{X{\cal \w}}+\sum_{\w\in\Sigma^{n-1}}\alpha_{\matA,Y}({\cal \w})\sigma_{Y\w}+\sum_{\w\in\Sigma^{n-1}}\alpha_{\matA,Z}({\cal \w})\sigma_{Z\w}\\
 & =\sum_{\w\in\Sigma^{n-1}}\alpha_{\matA,I}({\cal \w})(\sigma_{I}\otimes\sigma_{\w})+\sum_{\w\in\Sigma^{n-1}}\alpha_{\matA,X}({\cal \w})(\sigma_{X}\otimes\sigma_{\w})+\sum_{\w\in\Sigma^{n-1}}\alpha_{\matA,Y}({\cal \w})(\sigma_{Y}\otimes\sigma_{\w})+\sum_{\w\in\Sigma^{n-1}}\alpha_{\matA,Z}({\cal \w})(\sigma_{Z}\otimes\sigma_{{\cal \w}})\\
 & =\sigma_{I}\otimes\left(\sum_{\calW\in\Sigma^{n-1}}\alpha_{\matA,I}({\cal \w})\sigma_{\w}\right)+\sigma_{X}\otimes\left(\sum_{\w\in\Sigma^{n-1}}\alpha_{\matA,X}({\cal \w})\sigma_{\w}\right)+\sigma_{Y}\otimes\left(\sum_{\w\in\Sigma^{n-1}}\alpha_{\matA,Y}({\cal \w})\sigma_{\w}\right)+\sigma_{Z}\otimes\left(\sum_{\w\in\Sigma^{n-1}}\alpha_{\matA,Z}({\cal \w})\sigma_{\w}\right)\\
 & =\sigma_{I}\otimes\matA_{I}+\sigma_{X}\otimes\matA_{X}+\sigma_{Y}\otimes\matA_{Y}+\sigma_{Z}\otimes\matA_{Z}\\
 & =\frac{1}{\sqrt{2}}\left[\begin{array}{cc}
\matA_{I}+\matA_{Z} & \matA_{X}-i\matA_{Y}\\
\matA_{X}+i\matA_{Y} & \matA_{I}-\matA_{Z}
\end{array}\right]
\end{align*}
}{\scriptsize\par}
\end{proof}
Given $\matA\in\HH_{N}$, we can write $\matA$ in block form as:
\[
\matA=\left[\begin{array}{cc}
\matA_{11} & \matA_{12}\\
\matA_{12}^{\H} & \matA_{22}
\end{array}\right]
\]
where $\matA_{11},\matA_{22}\in\HH_{N/2}$ and $\matA_{12}\in\C^{(N/2)\times(N/2)}$.
Then, the Pauli submatrices of $\matA$ are given by:
\begin{align*}
\matA_{I} & =\frac{1}{\sqrt{2}}(\matA_{11}+\matA_{22})\\
\matA_{X} & =\frac{1}{\sqrt{2}}(\matA_{12}+\matA_{12}^{\H})\\
\matA_{Y} & =\frac{\si}{\sqrt{2}}(\matA_{12}-\matA_{12}^{\H})\\
\matA_{Z} & =\frac{1}{\sqrt{2}}(\matA_{11}-\matA_{22})
\end{align*}
Thus, for any word $\w\in\Sigma^{n-1}$ we have:
\begin{align*}
\alpha_{\matA,I}({\cal \w}) & =\frac{1}{\sqrt{2}}(\alpha_{\matA_{11}}(\w)+\alpha_{\matA_{22}}(\w))\\
\alpha_{\matA,X}({\cal \w}) & =\frac{\alpha_{(\matA_{12}+\matA_{12}^{\H})}(\w)}{\sqrt{2}}\\
\alpha_{\matA,Y}({\cal \w}) & =\frac{\alpha_{(\si(\matA_{12}-\matA_{12}^{\H}))}(\w)}{\sqrt{2}}\\
\alpha_{\matA,Z}({\cal \w}) & =\frac{1}{\sqrt{2}}(\alpha_{\matA_{11}}(\w)-\alpha_{\matA_{22}}(\w))
\end{align*}

Using the above observation, we can develop an efficient recursive
algorithm for computing the Pauli decomposition of a Hermitian matrix.
First, we compute the Pauli decompositions of four $N/2$-by-$N/2$
matrices: $\matA_{11},\matA_{22},\matA_{12}+\matA_{12}^{\H},\si(\matA_{12}-\matA_{12}^{\H})$.
We then combine the resulting coefficeints to compute the Pauli decomposition
of $\matA$. The bottom of the recursion, which amounts to computing
the Pauli decomposition of a $2$-by-$2$ matrix, is trivial. The
procedure is summarized in Algorithm~\ref{alg:dense2pauli}, which
returns as output the coefficient tensor $\tenA_{P}$. The combination
costs $O(N^{2})$, so the running time obeys $T(N)=4T(N/2)+O(N^{2})$
which implies the cost of $O(N^{2}n)$.

To extend the method to non-Hermitian matrices, we decompose them
into two Hermitian parts ($\matA=\frac{\matA^{\conj}+\matA}{2}-\si\frac{\matA^{\conj}-\matA}{2i}$),
each independently decomposable into real Pauli coefficients. Combining
the coefficients, we obtain a Pauli decomposition for the original
matrix $\matA$.

\begin{algorithm}[t]
\begin{algorithmic}[1]

\STATE \textbf{Input: $\matA\in\HH_{N}$, where $N$ is a power of
2}

~

\STATE $n\gets\log_{2}N$

\IF {$q==1$}  

\STATE  $\matA_{P}\gets\frac{1}{2}\left[\begin{array}{cc}
a_{11}+a_{22} & \text{2Re}(a_{12})\\
-\text{2Im}(a_{12}) & a_{11}-a_{22}
\end{array}\right]$

\ELSE 

\STATE  Split $\matA=\left[\begin{array}{cc}
\matA_{11} & \matA_{12}\\
\matA_{12}^{\conj} & \matA_{22}
\end{array}\right]$ where each matrix is $(N/2)\times(N/2)$

~

\STATE ${\cal P}_{11}\gets\textsc{Dense2Pauli}(\matA_{11})$

\STATE ${\cal P}_{22}\gets\textsc{Dense2Pauli}(\matA_{22})$

~

\STATE ${\cal P}_{I}\gets{\cal P}_{11}+{\cal P}_{22}$

\STATE ${\cal P}_{X}\gets\textsc{Dense2Pauli}(\matA_{12}+\matA_{12}^{\H})$

\STATE ${\cal P}_{Y}\gets\textsc{Dense2Pauli}(i(\matA_{12}-\matA_{12}^{\H}))$

\STATE ${\cal P}_{Z}\gets{\cal P}_{11}-{\cal P}_{22}$

~

\STATE Initialize $\tenA_{P}$ to a order $n$ tensor of size $4\times4\times\cdots\times4$

\STATE $\tenA_{P}(0,:,\dots,:)\gets\frac{1}{2}{\cal P}_{I}$

\STATE $\tenA_{P}(1,:,\dots,:)\gets\frac{1}{2}{\cal P}_{X}$

\STATE $\tenA_{P}(2,:,\dots,:)\gets\frac{1}{2}{\cal P}_{Y}$

\STATE $\tenA_{P}(3,:,\dots,:)\gets\frac{1}{2}{\cal P}_{Z}$

\ENDIF 

~

\RETURN $\tenA_{P}$

\end{algorithmic}

\caption{\label{alg:dense2pauli}\textsc{Dense2Pauli}: Computing $\protect\tenA_{P}$
for a dense Hermitian matrix $\protect\matA$.}
\end{algorithm}

\paragraph{Related work on efficient computation of Pauli decompositions.}

Direct computation of the Pauli coefficients for an $N\times N$ matrix
using the naive approach costs $O(N^{5})$, making it impractical
for large matrices. To address this computational challenge, various
approaches have been proposed in the literature~\cite{romero2023paulicomposer,jones2024decomposing,koska2024tree,ying2023preparing,lapworth2022hybrid,gunlycke2020efficient,hantzko2024tensorized,gidney2023stackexchange,hamaguchi2024handbook}.
Koska et al. introduced a tree-based approach that reduces asymptotic
complexity to $O(N^{3})$ by hierarchically pruning redundant paths
in the Pauli operator tree~\cite{koska2024tree}, making it particularly
efficient for structured matrices (e.g., tridiagonal systems). Hantzko
et al. employed recursive matrix partitioning to avoid costly multiplications
with asymptotic scaling of $O(N^{2}n)$ for general matrices~\cite{hantzko2024tensorized}.
Their tensorized Pauli decomposition algorithm is very similar to
the one we describe in this section, and was developed in parallel.

Georges et al. presented a method for computing the Pauli coefficients
using the Walsh-Hadamard Transform~\cite{georges2024pauli} (however,
elements of their approach are traceable to a Stack Overflow comment
by Gidney~\cite{gidney2023stackexchange}). Their method achieves
$O(N^{2}n)$ complexity, and enables in-place computation with $O(1)$
additional memory requirements. However, their method is only described
for classical computing. However, it can be leveraged for quantum
conversion, as we discuss in the next subsection.

\subsection{\label{subsec:ua-to-uap}Quantum conversion - from matrix state preparation
to Pauli state preparation (and back)}

In this subsection, we discuss a algorithm for converting between
a description of the matrix in the standard basis (given in the form
of a matrix state preparation circuit $\cirU_{\matA}^{\text{SP}}$),
to a description in the Pauli basis (in the form of a hypermatrix
state preparation circuit $\cirU_{\tenA_{P}}^{\text{SP}}$). Our goal
is twofold. First, to show we can do this conversion very efficiently.
Second, we need this operation as a component in later algorithms.
Our algorithm is based on the Walsh-Hadamard Transform based algorithm
from~\cite{georges2024pauli}, though that paper describes only a
classical algorithm. Formally, we state the problem as follows:
\begin{problem}
\label{prob:uap-ua}Given a matrix state preparation circuit $\mathcal{U}_{\matA}^{\text{SP}}\in\MS{\matA}{\alpha}$,
where $\matA\in\mathbb{C}^{N\times N}$, compute hypermatrix state
preparation circuit $\mathcal{U}_{\tenA_{p}}^{\text{SP}}\in\MS{\tenA_{P}}{f(\alpha,N)}$,
where $\tenA_{P}$ is the Pauli coefficient hypermatrix of $\matA$
and $f(\alpha,N)$ is some function.
\end{problem}

Theorem~\ref{thm:ua-to-uap-and-back} describes an algorithm for
solving Problem\emph{~}\ref{prob:uap-ua}. To state and prove the
theorem, we need to define some ancillary circuits, and prove auxiliary
results.
\begin{prop}
\label{prop:kron-vec}Let $m\geq2$ be an integer. Let $\sigma^{(m)}$
be the following permutation on $(0,\dots,2m-1)$:
\[
\sigma^{(m)}(j)=\begin{cases}
j & \text{if }0\leq j<m-1\text{ or }j=2m-1\\
j+1 & \text{if }m-1\leq j<2m-2\\
m-1 & \text{if }j=2m-2
\end{cases}
\]
Denote ${\cal S}_{(m)}\coloneqq{\cal S}_{\sigma^{(m)}}$. This circuit
moves the qubit at position $2m-2$ to position $m-1$, shifting the
intermediate qubits forward by one position.

Then,
\begin{equation}
\mathbf{M}(\mathcal{S}_{(m)})=\mathbf{I}_{2^{m-1}}\otimes\mathbf{P}_{2,2^{m-1}}\otimes\mathbf{I}_{2}\label{eq:Sm_pauli}
\end{equation}
where $\mathbf{P}_{K,L}$ is the Kronecker permutation matrix, defined
as $\mathbf{P}_{K,L}\coloneqq\sum_{i=0}^{K-1}\sum_{j=0}^{L-1}\mathbf{E}_{ij}^{(K\times L)}\otimes\mathbf{E}_{ji}^{(L\times K)}$.
\end{prop}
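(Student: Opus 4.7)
Since both sides of the claimed identity are permutation matrices on the $2m$-qubit computational basis, my plan is to verify the equality by checking that they induce the same map on an arbitrary basis vector $\ket{b_0 b_1 \cdots b_{2m-1}}$. The argument then decouples into (a) tracking where each bit travels under $\mathcal{S}_{(m)}$, and (b) tracking where each bit travels under $\mathbf{I}_{2^{m-1}}\otimes \mathbf{P}_{2,2^{m-1}}\otimes \mathbf{I}_2$, and confirming that the two qubit rewirings coincide.

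For (a), I use the convention that $\mathcal{S}_\sigma$ re-labels the qubit originally at position $i$ as qubit $\sigma(i)$, so in an MSB-first register the new bit at position $j$ is $b_{(\sigma^{(m)})^{-1}(j)}$. A direct inspection of the piecewise definition of $\sigma^{(m)}$ yields the inverse: $(\sigma^{(m)})^{-1}$ fixes the positions $0,\dots,m-2$ and $2m-1$, sends $m-1$ to $2m-2$, and sends each $j\in\{m,\dots,2m-2\}$ to $j-1$. This produces the action
\[
\mathcal{S}_{(m)}\ket{b_0 \cdots b_{m-2}\, b_{m-1}\, b_m \cdots b_{2m-3}\, b_{2m-2}\, b_{2m-1}} = \ket{b_0 \cdots b_{m-2}\, b_{2m-2}\, b_{m-1}\, b_m \cdots b_{2m-3}\, b_{2m-1}},
\]
i.e., the last qubit of the middle $m$-qubit block is cycled to the front of that block while the other $m-1$ qubits in the block shift one position to the right.

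For (b), the two identity factors leave the first $m-1$ qubits and the final qubit untouched, so only the action of $\mathbf{P}_{2,2^{m-1}}$ on the middle $m$-qubit block matters. Unfolding the definition $\mathbf{P}_{K,L}=\sum_{i,j}\mathbf{E}_{ij}^{(K\times L)}\otimes\mathbf{E}_{ji}^{(L\times K)}$ one finds that $\mathbf{P}_{K,L}\ket{jK+i}=\ket{iL+j}$ for $i\in[0,K)$ and $j\in[0,L)$; in the MSB-first binary convention of the paper this is precisely the swap of the two Kronecker factors, $\ket{a}_{\log L}\ket{b}_{\log K}\mapsto\ket{b}_{\log K}\ket{a}_{\log L}$. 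Specializing to $K=2,\,L=2^{m-1}$ exactly reproduces the cycling of the last qubit of the $m$-qubit block to the front that was computed in (a), so the two sides agree on every basis state, hence as matrices.

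The only subtle step I anticipate is purely bookkeeping: carefully reconciling the MSB-first ordering used for qubit registers throughout the paper with the row-major index arithmetic built into the definition of $\mathbf{P}_{K,L}$, so that ``swapping the two Kronecker factors'' translates into the correct direction of qubit cycling. Once this convention is pinned down, steps (a) and (b) yield identical bit-shuffles and the identity follows immediately.
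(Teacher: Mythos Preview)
Your proposal is correct and follows essentially the same route as the paper: both verify the identity by checking the action on an arbitrary computational basis state, grouping the $2m$ bits into an untouched prefix of length $m-1$, a middle block of $m$ bits, and an untouched final bit, and then identifying the action of $\mathbf{P}_{2,2^{m-1}}$ on the middle block with the cyclic shift induced by $\sigma^{(m)}$. The only cosmetic difference is that you compute $(\sigma^{(m)})^{-1}$ explicitly to read off the new bit at each position, whereas the paper reads the qubit relabeling directly from the description of $\sigma^{(m)}$; the resulting bit-shuffle and the use of $\mathbf{P}_{K,L}(\ket{j}_{\log L}\ket{i}_{\log K})=\ket{i}_{\log K}\ket{j}_{\log L}$ are identical.
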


\begin{proof}
We prove Eq.~(\ref{eq:Sm_pauli}) by analyzing the action on computational
basis states, which gives the columns of $\mathbf{M}(\mathcal{S}_{(m)})$
(the action of $\mathcal{S}_{(m)}$ on $\Ket j_{2m}$ gives column
$j$ in $\mathbf{M}(\mathcal{S}_{(m)})$). Any index $j\in\{0,1,\ldots,2^{2m-1}-1\}$
can be written in binary as $j=\sum_{\ell=0}^{2m-1}b_{\ell}\cdot2^{\ell}$,
where $b_{\ell}\in\{0,1\}$ are the binary digits. Similar to many
physics textbooks, we assume that the MSB of $i$ is in the lowest
index in the binary expansion, i.e., $\Ket j_{2m}=\Ket{b_{0}}_{1}\Ket{b_{1}}_{1}\cdots\Ket{b_{2m-1}}_{1}\eqqcolon\Ket{b_{0}\cdots b_{2m-1}}_{2m}$
where $b_{0}\cdots b_{2m-1}$ is the binary expansion of $j$.\footnote{We note that in QISKIT the MSB is the highest index, i.e., $\Ket j_{2m}=\Ket{b_{2m-1}\cdots b_{0}}_{2m}$.}

We group the bits as: 
\begin{align*}
\ket j_{2m} & =\ket{b_{0},\ldots,b_{m-2}}_{m-1}\otimes\ket{b_{m-1},\ldots,b_{2m-3}}_{m-1}\otimes\ket{b_{2m-2}}_{1}\otimes\ket{b_{2m-1}}_{1}\\
 & =\ket{\text{upper}}_{m-1}\otimes\ket{\text{middle}}_{m-1}\otimes\ket{b_{2m-2}}_{1}\otimes\ket{b_{2m-1}}_{1}
\end{align*}
By definition of permutation qubits in the lemma statement, when $\mathcal{S}_{(m)}$
operate on $\ket j_{2m}$ we transform
\[
\ket{\text{upper}}_{m-1}\otimes\ket{\text{middle}}_{m-1}\otimes\ket{b_{2m-2}}_{1}\otimes\ket{b_{2m-1}}_{1}\mapsto\ket{\text{upper}}_{m-1}\otimes\ket{b_{2m-2}}_{1}\otimes\ket{\text{middle}}_{m-1}\otimes\ket{b_{2m-1}}_{1}
\]
and this defines the $j$th column of $\mathbf{M}(\mathcal{S}_{(m)})$
.

Now, think of a circuit whose matrix is ${\cal \mathbf{I}}_{2^{m-1}}\otimes\mathbf{P}_{2,2^{m-1}}\otimes\mathbf{I}_{2}$
which acts on $\ket j_{2m}$ (thus giving column $j$ of ${\cal \mathbf{I}}_{2^{m-1}}\otimes\mathbf{P}_{2,2^{m-1}}\otimes\mathbf{I}_{2}$).
The operation is 
\begin{align*}
({\cal \mathbf{I}}_{2^{m-1}}\otimes\mathbf{P}_{2,2^{m-1}}\otimes\mathbf{I}_{2})\ket j_{2m} & =({\cal \mathbf{I}}_{2^{m-1}}\otimes\mathbf{P}_{2,2^{m-1}}\otimes\mathbf{I}_{2})(\ket{\text{upper}}_{m-1}\otimes\ket{\text{middle}}_{m-1}\otimes\ket{b_{2m-2}}_{1}\otimes\ket{b_{2m-1}}_{1})\\
 & =\ket{\text{upper}}_{m-1}\otimes\mathbf{P}_{2,2^{m-1}}(\ket{\text{middle}}_{m-1}\otimes\ket{b_{2m-2}}_{1})\otimes\ket{b_{2m-1}}_{1}\\
 & =\ket{\text{upper}}_{m-1}\otimes\ket{b_{2m-2}}_{1}\otimes\ket{\text{middle}}_{m-1}\otimes\ket{b_{2m-1}}_{1}
\end{align*}
where we use the fact that $\mathbf{P}_{2,2^{m-1}}=\sum_{i=0}^{1}\sum_{j=0}^{2^{m-1}-1}\mathbf{E}_{ij}^{(2\times2^{m-1})}\otimes\mathbf{E}_{ji}^{(2^{m-1}\times2)}$
so for any $\ket i_{m-1}\ket j_{1}$ we have $\mathbf{P}_{2,2^{m-1}}\ket i_{m-1}\ket j_{1}=\ket j_{1}\ket i_{m-1}$.

We see that the two results are the same. All columns are the same,
so the matrices are equal.
\end{proof}
\begin{defn}
[CNOT Comb Circuit] The \emph{CNOT comb circuit}, denoted as $\mathcal{U}_{\text{comb}}^{(q)}$,
is a quantum circuit that acts on $2q$ qubits. The circuit consists
of $q$ parallel CNOT gates, where each control qubit $i\in[q-1]$
is connected to its corresponding target qubit $i+q$. In other words,
the circuit applies CNOT gates between pairs $(0,q),(1,q+1),\dots(q-1,2q-1)$.
A circuit diagram is shown in Figure~\ref{fig:CNOT_Comb_Circuit}.
\end{defn}

\begin{figure}[t]
\begin{centering}
\[
\Qcircuit @C=1em @R=.9em {
&                  &         &          &              &           &             & 		&\\
& \lstick{r_0}     & \qw     & \ctrl{4} & \qw          & \qw       & \qw          & \qw	& \\
& \lstick{r_1}     & \qw     & \qw      & \ctrl{4}     & \qw       & \qw          & \qw	 & \\
&                  & \vdots  &          &              & \ddots    &              & & \\
& \lstick{r_{q-1}} & \qw     & \qw      & \qw          & \qw       & \ctrl{4}     &   \qw	& \\
& \lstick{c_0}     & \qw     & \targ    & \qw          & \qw       & \qw          &   \qw	& \\
& \lstick{c_1}     & \qw     & \qw      & \targ        & \qw       & \qw          &   \qw	  & \\
&                  & \vdots  &          &              & \ddots    &              &   	 & \\
& \lstick{c_{q-1}} & \qw     & \qw      & \qw          & \qw       & \targ        &   \qw	 & \\
}
\]
\par\end{centering}
\caption{\label{fig:CNOT_Comb_Circuit}The CNOT Comb circuit, $\mathcal{U}_{\mathrm{comb}}^{(q)}$,
on $2q$ qubits. The row qubits ($r_{i}$) act as controls on their
corresponding target column qubits ($c_{i}$).}
\end{figure}
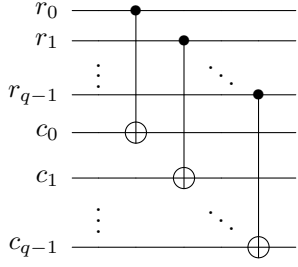

\begin{prop}
\label{prop:u-comb}The $2q$-qubit CNOT comb circuit $\mathcal{U}_{\text{comb}}^{(q)}$
implements a controlled multiplex operation, which can be expressed
as: 
\[
\mathcal{U}_{\text{comb}}^{(q)}\in{\cal MX}(\cirU_{0},\dots,\cirU_{2^{q}-1})
\]
where for each $k=\sum_{j=0}^{q-1}b(k)_{j}2^{j}$, $\matM(\mathcal{U}_{k})$
is a tensor product of $q$Pauli X gates based on the binary expansion
of $k$: 
\[
\matM(\mathcal{U}_{k})=\bigotimes_{j=0}^{q-1}\matX^{b(k)_{j}}
\]
\end{prop}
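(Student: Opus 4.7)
The plan is to establish the identity by direct computation on computational basis states of the control register, then invoke linearity.

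First, I would observe that the comb circuit decomposes as a product of $q$ pairwise commuting CNOTs, $\mathcal{U}_{\text{comb}}^{(q)} = \prod_{i=0}^{q-1}\text{CX}(r_i,c_i)$, since each CX acts on a distinct pair of qubits. Let the two $q$-qubit registers be $R=(r_0,\dots,r_{q-1})$ and $C=(c_0,\dots,c_{q-1})$. For any computational basis state $\ket{k}_q$ in $R$ and arbitrary state $\ket{\psi}_q$ in $C$, write $\ket{k}_q$ in the paper's MSB-lowest-index convention as $\ket{k}_q = \ket{b(k)_{q-1}}_1\otimes\ket{b(k)_{q-2}}_1\otimes\cdots\otimes\ket{b(k)_0}_1$, where $k = \sum_{j=0}^{q-1} b(k)_j 2^j$.

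Next, since the control qubit $r_i$ is left in the basis state $\ket{b(k)_{q-1-i}}_1$ and the only action of $\text{CX}(r_i,c_i)$ is to apply $\matX^{b(k)_{q-1-i}}$ on $c_i$ (conditioned on that control value), the overall effect on $\ket{k}_q\otimes\ket{\psi}_q$ is
\[
\mathcal{U}_{\text{comb}}^{(q)}\bigl(\ket{k}_q\otimes\ket{\psi}_q\bigr) = \ket{k}_q\otimes\Bigl(\bigotimes_{i=0}^{q-1}\matX^{b(k)_{q-1-i}}\Bigr)\ket{\psi}_q = \ket{k}_q\otimes\Bigl(\bigotimes_{j=0}^{q-1}\matX^{b(k)_j}\Bigr)\ket{\psi}_q,
\]
where the last equality is just a relabeling of the tensor-product factors consistent with the MSB-lowest-index ordering of $C$. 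By linearity in $\ket{\psi}_q$, the action on the $C$ register when $R$ is in $\ket{k}_q$ is precisely the unitary $\mathcal{U}_k$ with $\matM(\mathcal{U}_k) = \bigotimes_{j=0}^{q-1}\matX^{b(k)_j}$.

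Finally, since the above holds for every $k\in\{0,\dots,2^q-1\}$ and the $R$ register is unchanged, I would conclude $\matM(\mathcal{U}_{\text{comb}}^{(q)}) = \sum_{k=0}^{2^q-1}\ket{k}\bra{k}\otimes\matM(\mathcal{U}_k) = \bigoplus_{k=0}^{2^q-1}\matM(\mathcal{U}_k)$, matching the definition of a multiplexer between $\mathcal{U}_0,\dots,\mathcal{U}_{2^q-1}$. The only real subtlety, and hence the main thing to be careful about, is the bookkeeping between the binary-expansion convention used in the statement ($k = \sum_j b(k)_j 2^j$) and the MSB-lowest-index qubit convention adopted throughout the paper; once these are reconciled as above, the proof is essentially a one-line computation.
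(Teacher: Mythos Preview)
Your direct-computation approach is correct in spirit and is genuinely different from the paper's proof. The paper argues by induction on $q$, establishing along the way the recursive decomposition $\mathcal{U}_{\text{comb}}^{(q+1)}=\mathcal{S}_{(q+1)}\cdot(\mathcal{U}_{\text{comb}}^{(q)}\otimes\text{CX})\cdot\mathcal{S}_{(q+1)}^{-1}$; this recursion is not needed to prove the proposition itself but is reused verbatim in the proof of the subsequent lemma on $\cirU^{(q)}$. Your argument --- observe that the $q$ CNOTs act on disjoint qubit pairs and hence factor, then evaluate on $\ket{k}_q\otimes\ket{\psi}_q$ --- is more elementary and entirely adequate for the proposition in isolation, though it does not supply the recursive identity the paper leans on later.

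One step in your write-up is not sound as written. You obtain $\bigotimes_{i=0}^{q-1}\matX^{b(k)_{q-1-i}}$ on the target register and then assert this equals $\bigotimes_{j=0}^{q-1}\matX^{b(k)_j}$ by ``relabeling of the tensor-product factors.'' Tensor products are not commutative, so this is not a relabeling: for $q=2$ and $k=2$ (so $b(k)_0=0$, $b(k)_1=1$) the first expression is $\matX\otimes\matI$ while the second is $\matI\otimes\matX$. You correctly flag the bit-ordering reconciliation as the only real subtlety, but you have not actually carried it out --- the formula you derive and the one in the statement differ by a reversal of the tensor factors, and that reversal must be justified (or the statement's indexing convention must be read so that the $j$th factor acts on the qubit holding bit $b(k)_j$, which under the MSB-lowest convention is qubit $c_{q-1-j}$, not $c_j$). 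Either make that identification explicit, or state the result with the index order you actually obtain.
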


\begin{proof}
We prove by induction on $q$. For\textbf{ $q=1$.} The CNOT comb
circuit is a single CNOT gate: 
\[
\matM(\mathcal{U}_{\text{comb}}^{(1)})=\matM(\text{CX}(0,1))=(\ket 0\bra 0)_{0}\otimes\matI+(\ket 1\bra 1)_{0}\otimes\matX
\]

For the inductive step we note that $\mathcal{U}_{\text{comb}}^{(q+1)}$
circuit can be constructed recursively as: 
\[
\mathcal{U}_{\text{comb}}^{(q+1)}=\mathcal{S}_{(q+1)}\cdot\left(\mathcal{U}_{\text{comb}}^{(q)}\otimes\text{CX}(2q,2q+1)\right)\cdot\mathcal{S}_{(q+1)}^{-1}
\]
where $\mathcal{S}_{(q)}$ defined in Proposition~\ref{prop:kron-vec}
(where we set $m\gets q$). Note that $\mathcal{S}_{(q)}=\mathcal{S}_{(q,q+1)}\cdot\mathcal{S}_{(q+1,q+2)}\cdot\cdots\cdot\mathcal{S}_{(2q-1,2q)}$
where the notation $(a,b)$ denotes transposition of $a$ and $b$.

We assume the proposition holds for $q$ and show for\textbf{ $q\rightarrow q+1$}.
By the inductive hypothesis: 
\[
\matM(\mathcal{U}_{\text{comb}}^{(q)})=\sum_{k=0}^{2^{q}-1}\ket k\bra k\otimes\matM(\mathcal{U}_{k})
\]
and we have{\footnotesize
\begin{align*}
\matM(\mathcal{U}_{\text{comb}}^{(q+1)}) & =\matM(\mathcal{S}_{(q+1)})\left(\matM(\mathcal{U}_{\text{comb}}^{(q)})\otimes\text{CX}(2q,2q+1)\right)\matM(\mathcal{S}_{(q+1)}^{-1})\\
 & =\matM(\mathcal{S}_{(q+1)})\left(\sum_{k=0}^{2^{q}-1}\ket k\bra k\otimes\matM(\mathcal{U}_{k})\otimes\left[(\ket 0\bra 0)_{2q}\otimes\matI+(\ket 1\bra 1)_{2q}\otimes\matX\right]\right)\matM(\mathcal{S}_{(q+1)}^{-1})\\
 & =\matM(\mathcal{S}_{(q+1)})\left(\sum_{k=0}^{2^{q}-1}\ket k\bra k\otimes\bigotimes_{j=0}^{q-1}\matX^{b(k)_{j}}\otimes\ket 0\bra 0{}_{2q}\otimes\matI\right)\mathcal{S}_{(q)}\\
 & \,\,\,\,\,\,+\mathcal{S}_{(q)}^{-1}\left(\sum_{k=0}^{2^{q}-1}\ket k\bra k\otimes\bigotimes_{j=0}^{q-1}\matX^{b(k)_{j}}\otimes\ket 1\bra 1{}_{2q}\otimes\matX\right)\matM(\mathcal{S}_{(q+!)}^{-1})\\
 & =\sum_{k=0}^{2^{q}-1}\ket k\bra k\otimes\ket 0\bra 0{}_{2q}\otimes\bigotimes_{j=0}^{q-1}\matX^{b(k)_{j}}\otimes\matI+\sum_{k=0}^{2^{q}-1}\ket k\bra k\otimes(\ket 1\bra 1)_{2q}\otimes\bigotimes_{j=0}^{q-1}\matX^{b(k)_{j}}\otimes\matX\\
 & =\sum_{k=0}^{2^{q}-1}\ket{2k}\bra{2k}\otimes\bigotimes_{j=0}^{q}\matX^{b(2k)_{j}}+\sum_{k=0}^{2^{q}-1}\ket{2k+1}\bra{2k+1}\bigotimes_{j=0}^{q}\matX^{b(2k+1)_{j}}\\
 & =\sum_{k'=0}^{2^{q+1}-1}\ket{k'}\bra{k'}\otimes\bigotimes_{j=0}^{q}\matX^{b(k')_{j}}\\
 & =\sum_{k'=0}^{2^{q+1}-1}\ket{k'}\bra{k'}\otimes\matM(\mathcal{U}_{k'})
\end{align*}
}where we used the fact that for any set of $M$ single qubit gates
$\cirU_{0},\dots,\cirU_{M-1}$ we have: (See Figure~\ref{fig:permutation-conjugation}
for a visualization)

\[
\mathcal{S}_{\sigma}\cdot\left(\bigotimes_{i=0}^{M-1}\cirU_{i}\right)\cdot\mathcal{S}_{\sigma^{-1}}=\bigotimes_{i=0}^{M-1}\cirU_{\sigma(i)}
\]
which in terms of matrices associated with the circuit translates
to
\[
\matM(\mathcal{S}_{\sigma})\cdot\left(\bigotimes_{i=0}^{M-1}\matM(\cirU_{i})\right)\cdot\matM(\mathcal{S}_{\sigma^{-1}})=\bigotimes_{i=0}^{M-1}\matM(\cirU_{\sigma(i)})
\]
\end{proof}
\begin{figure}[t]
\begin{centering}
\begin{tabular}{ccc}
{\footnotesize\Qcircuit @C=1em @R=1em {
\lstick{\ket{\psi_{0}}}    & \multigate{4}{\mathcal{S}_{\sigma^{-1}}} & \gate{\mathcal{U}_{0}}   & \multigate{4}{\mathcal{S}_{\sigma}} & \qw & \rstick{\mathcal{U}_{\sigma(0)}\ket{\psi_{0}}}  \\
\lstick{\ket{\psi_{1}}}    & \ghost{\mathcal{S}_{\sigma^{-1}}}        & \gate{\mathcal{U}_{1}}   & \ghost{\mathcal{S}_{\sigma}}        & \qw & \rstick{\mathcal{U}_{\sigma(1)}\ket{\psi_{1}}} \\
\lstick{\ket{\psi_{2}}}    & \ghost{\mathcal{S}_{\sigma^{-1}}}        & \gate{\mathcal{U}_{2}}   & \ghost{\mathcal{S}_{\sigma}}        & \qw & \rstick{\mathcal{U}_{\sigma(2)}\ket{\psi_{2}}} \\
\raisebox{0.7em}{\vdots}     &                                          &        \raisebox{0.7em}{\vdots}          &                                   & & \\
\lstick{\ket{\psi_{M-1}}}  & \ghost{\mathcal{S}_{\sigma^{-1}}}        & \gate{\mathcal{U}_{M-1}} & \ghost{\mathcal{S}_{\sigma}}        & \qw & \rstick{\mathcal{U}_{\sigma(M-1)}\ket{\psi_{M-1}}}
}} & \hspace{1.5cm}\raisebox{-10.1ex}{=}\hspace{1.5cm} & {\footnotesize{}
\Qcircuit @C=1em @R=1em {
\lstick{\ket{\psi_{0}}}    & \gate{\mathcal{U}_{\sigma(0)}}    & \qw & \rstick{\mathcal{U}_{\sigma(0)}\ket{\psi_{0}}} \\
\lstick{\ket{\psi_{1}}}    & \gate{\mathcal{U}_{\sigma(1)}}    & \qw & \rstick{\mathcal{U}_{\sigma(1)}\ket{\psi_{1}}}\\
\lstick{\ket{\psi_{2}}}    & \gate{\mathcal{U}_{\sigma(2)}}    & \qw & \rstick{\mathcal{U}_{\sigma(2)}\ket{\psi_{2}}}\\
\raisebox{0.7em}{\vdots}     &   \raisebox{0.7em}{\vdots}                            & & \\
\lstick{\ket{\psi_{M-1}}}  & \gate{\mathcal{U}_{\sigma(M-1)}}  & \qw & \rstick{\mathcal{U}_{\sigma(M-1)}\ket{\psi_{M-1}}}
}
}\tabularnewline
\end{tabular}
\par\end{centering}
\caption{\label{fig:permutation-conjugation}Permutation conjugation property
for tensor products of single-qubit operators: $\mathcal{S}_{\sigma^{-1}}\cdot\left(\bigotimes_{i=0}^{M-1}\protect\cirU_{i}\right)\cdot\mathcal{S}_{\sigma}=\bigotimes_{i=0}^{M-1}\mathcal{U}_{\sigma(i)}$.
The permutation redistributes operators according to the mapping $\sigma$.}
\end{figure}
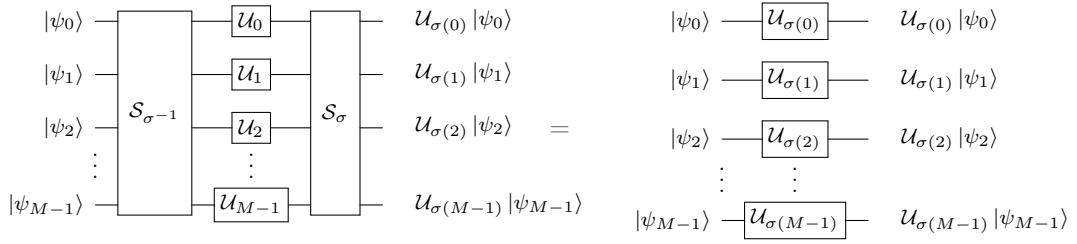

\begin{fact}
For each $\w\in\Sigma^{q}$,
\begin{align*}
N\alpha_{\matA}(\w) & =\vec{\sigma_{\w}^{\T}}^{\T}\vec{\matA}
\end{align*}
\end{fact}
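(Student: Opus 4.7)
The plan is to reduce the claim to the known trace-formula for Pauli coefficients together with the standard identity relating traces of matrix products to dot products of vectorizations, adapted to the row-major convention used throughout the paper.

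First, I would recall from the earlier discussion in Subsection~\ref{subsec:Pauli-decomposition-classiq} that orthogonality of $\PP_N$ under $\inner{\cdot}{\cdot}_{\HH_N}$ yields the explicit formula $\alpha_{\matA}(\w) = \Trace{\sigma_{\w}\matA}/N$, so the target identity is equivalent to $\Trace{\sigma_{\w}\matA} = \vec{\sigma_{\w}^{\T}}^{\T}\vec{\matA}$. This holds for any matrix in place of $\sigma_{\w}$; the Pauli structure of $\sigma_{\w}$ plays no role. The only subtlety is that in this paper $\vec{\cdot}$ denotes row-major vectorization (as defined in the preliminaries), so the classical column-major identity $\Trace{\matB\matA} = \vec{\matB^{\T}}^{\T}\vec{\matA}$ must be translated to row-major conventions.

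Concretely, I would expand both sides in coordinates. Writing $\matA = (a_{ij})$ and $\sigma_{\w} = (s_{ij})$, row-major vectorization places $a_{ij}$ at entry $iN+j$ of $\vec{\matA}$ and $(\sigma_{\w}^{\T})_{ij} = s_{ji}$ at entry $iN+j$ of $\vec{\sigma_{\w}^{\T}}$. Therefore
\[
\vec{\sigma_{\w}^{\T}}^{\T}\vec{\matA} \;=\; \sum_{i=0}^{N-1}\sum_{j=0}^{N-1} s_{ji}\, a_{ij} \;=\; \sum_{i=0}^{N-1}(\sigma_{\w}\matA)_{ii} \;=\; \Trace{\sigma_{\w}\matA}.
\]
Multiplying the coefficient formula through by $N$ then yields $N\alpha_{\matA}(\w) = \Trace{\sigma_{\w}\matA} = \vec{\sigma_{\w}^{\T}}^{\T}\vec{\matA}$, as required.

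There is no real obstacle here; the statement is essentially bookkeeping. The only thing to be careful about is getting the transpose right: if one mistakenly writes $\vec{\sigma_{\w}}^{\T}\vec{\matA}$ (without the inner transpose) one recovers $\sum_{i,j} s_{ij}a_{ij} = \Trace{\sigma_{\w}^{\T}\matA}$, which equals $\Trace{\sigma_{\w}\matA}$ only because Pauli matrices satisfy $\sigma_{\w}^{\T} = \pm\sigma_{\w}$, but is not the general identity. The inner transpose in the stated formula is exactly what compensates for the switch between row-major and column-major conventions, and this is the only point worth highlighting in the proof.
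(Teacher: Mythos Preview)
Your proof is correct and slightly more elementary than the paper's. Both start from the trace formula $N\alpha_{\matA}(\w)=\Trace{\sigma_{\w}\matA}$, but from there the paper proceeds via a chain of vectorization identities: it first invokes the standard column-major identity $\Trace{\sigma_{\w}\matA}=\cvec{\sigma_{\w}^{\T}}^{\conj}\cvec{\matA^{\T}}$, converts to row-major to get $\vec{\sigma_{\w}}^{\conj}\vec{\matA}$, and then uses that $\sigma_{\w}$ is Hermitian (so $\overline{\sigma_{\w}}=\sigma_{\w}^{\T}$) to reach the stated form. Your direct coordinate expansion bypasses all of this and, as you correctly observe, shows that the identity $\Trace{\matB\matA}=\vec{\matB^{\T}}^{\T}\vec{\matA}$ holds for \emph{any} $\matB$, not just Hermitian ones; the paper's route genuinely needs Hermiticity at the last step. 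So your argument is both shorter and strictly more general, while the paper's has the minor expository advantage of explicitly connecting to the familiar column-major trace identity.
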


\begin{proof}
We have,
\begin{align*}
N\alpha_{\matA}(\w) & =\Trace{\sigma_{{\cal \w}}\matA}\\
 & =\cvec{\sigma_{{\cal \w}}^{\T}}^{\conj}\cvec{\matA^{\T}}\\
 & =\vec{\sigma_{{\cal \w}}}^{\conj}\vec{\matA}\\
 & =\overline{\vec{\sigma_{{\cal \w}}}^{\T}}\vec{\matA}\\
 & =\vec{\overline{\sigma_{\w}}}^{\T}\vec{\matA}\\
 & =\vec{\sigma_{{\cal \w}}^{\T}}^{\T}\vec{\matA}
\end{align*}
where the last line equality follows from the fact that for an Hermitian
matrix $\matX$ we have $\overline{\matX}=\matX^{\T}$.
\end{proof}
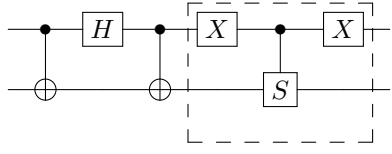
\begin{figure}[t]
\begin{centering}
\[
\Qcircuit @C=1em @R=1em {
 \lstick{} & \ctrl{1} & \gate{H} & \ctrl{1} & \gate{X} & \ctrl{1} & \gate{X} & \qw \\
 \lstick{} & \targ & \qw & \targ & \qw & \gate{S} & \qw &  \qw \gategroup{1}{5}{3}{7}{.7em}{--} \\
 &  &  &  &  & &  & \\
}
\]
\par\end{centering}
\caption{Circuit \label{fig:u1-lemma-illustration} of $\protect\cirU^{(1)}$.
The dashed box highlights the diagonal circuit ${\cal D}$.}
\end{figure}

\begin{lem}
\label{lem:rows_of_pauli}Consider the circuit,
\[
\cirU^{(q)}={\cal D}_{(q)}\cdot\mathcal{{\cal P}}_{(q)}\cdot\mathcal{U}_{\mathrm{comb}}^{(q)}\cdot\bigl({\cal H}^{\otimes q}\otimes{\cal I}_{q}\bigr)\cdot\mathcal{U}_{\mathrm{comb}}^{(q)}
\]
where ${\cal D}_{(q)}=\bigotimes_{i=0}^{q}{\cal D}$ with $\matM({\cal D})=\diag{1,1,\si,1}$,
$\mathcal{{\cal P}}_{(q)}$ is a circuit that interlaces qubits of
two size $q$ registers, that is if the original qubits are labeled
$(0,1,\dots,q-1,\;q,q+1,\dots,2q-1)$ then after ${\cal P}_{(q)}$
the qubits are ordered $(0,q,1,q+1,\dots,q-1,2q-1)$, and ${\cal H}^{\otimes q}$
denotes the $q$-qubit circuit that has an Hadamard gate on each qubit
in parallel. See Figure~\ref{fig:U_q} for a visualization of the
circuit. Then the rows of the matrix $\matM\bigl(\cirU^{(q)}\bigr)$
are exactly the vectorizations of the \emph{$q$-wise Pauli matrices}
$\sigma_{\w_{0}},\dots,\sigma_{\w_{4^{q}-1}}$, in lexicographic order:
\[
\matM\bigl(\cirU^{(q)}\bigr)=\frac{1}{\sqrt{2^{q}}}\begin{bmatrix}\vec{\sigma_{\w_{0}}^{\T}}^{\T}\\
\vec{\sigma_{\w_{1}}^{\T}}^{\T}\\
\vdots\\
\vec{\sigma_{\w_{4^{q}-1}}^{\T}}^{\T}
\end{bmatrix}.
\]
\end{lem}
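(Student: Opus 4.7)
The plan is to compute $\matM(\cirU^{(q)})$ row by row by propagating through the three structural pieces of the circuit: the CNOT-comb sandwich $\mathcal{M}^{(q)}\coloneqq\mathcal{U}_{\mathrm{comb}}^{(q)}\cdot({\cal H}^{\otimes q}\otimes{\cal I}_q)\cdot\mathcal{U}_{\mathrm{comb}}^{(q)}$, the qubit-interleaving permutation $\mathcal{P}_{(q)}$, and the tensor-product diagonal ${\cal D}_{(q)}$. The guiding identity is $\matX\matZ=-\si\matY$: the $-\si$ produced when collecting $\matX$'s to the left of $\matZ$'s is precisely cancelled by the $\si$-entry of $\diag{1,1,\si,1}$, which is placed exactly at the slot that will encode the letter $Y$.

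First, I would compute $\matM(\mathcal{M}^{(q)})$ on the computational basis. By Proposition~\ref{prop:u-comb}, $\mathcal{U}_{\mathrm{comb}}^{(q)}$ multiplexes bit-string $\matX$-products and acts as the XOR $\ket{k}\ket{k''}\mapsto\ket{k}\ket{k\oplus k''}$. Sandwiching with Hadamards on the first register gives
\[
\mathcal{M}^{(q)}\ket{k,k''}=\tfrac{1}{\sqrt{2^q}}\sum_{k'}(-1)^{k\cdot k'}\ket{k',k'\oplus k\oplus k''},
\]
so $\bra{b',c'}\mathcal{M}^{(q)}\ket{k,k''}=\tfrac{1}{\sqrt{2^q}}(-1)^{b'\cdot k}$ when $k''=b'\oplus c'\oplus k$ and is zero otherwise. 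Setting $d\coloneqq b'\oplus c'$, the row $(b',c')$ is supported on columns $(k,k\oplus d)$ with values $\tfrac{1}{\sqrt{2^q}}(-1)^{b'\cdot k}$. Row-major vectorization then identifies this row with $\tfrac{1}{\sqrt{2^q}}\vec{(X^{(d)}Z^{(b')})^{\T}}^{\T}$, where $X^{(d)}\coloneqq\bigotimes_i\matX^{d_i}$ and $Z^{(b')}\coloneqq\bigotimes_i\matZ^{b'_i}$.

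Next, I would translate the $X/Z$-product into a Pauli word. Each single-qubit factor $\matX^{d_i}\matZ^{b'_i}$ equals $\matI,\matZ,\matX,\matX\matZ$ for $(d_i,b'_i)=(0,0),(0,1),(1,0),(1,1)$, respectively. Using $d_i=b'_i\oplus c'_i$, the pair $(b'_i,c'_i)\in\{(0,0),(0,1),(1,0),(1,1)\}$ picks out a Pauli letter $w_i\in\{I,X,Y,Z\}$, with a factor of $-\si$ coming out precisely when $w_i=Y$. Thus $X^{(d)}Z^{(b')}=(-\si)^{n_Y}\sigma_w$, where $n_Y$ counts the $Y$'s in $w=w_0w_1\cdots w_{q-1}$, and the row $(b',c')$ of $\matM(\mathcal{M}^{(q)})$ equals $\tfrac{(-\si)^{n_Y}}{\sqrt{2^q}}\vec{\sigma_w^{\T}}^{\T}$.

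Finally, I would apply $\mathcal{P}_{(q)}$ and ${\cal D}_{(q)}$. Left-multiplication by $\matM(\mathcal{P}_{(q)})$ is a row permutation: row $a=a_0a_1\cdots a_{2q-1}$ of $\matM(\mathcal{P}_{(q)}\mathcal{M}^{(q)})$ equals row $(b',c')$ of $\matM(\mathcal{M}^{(q)})$ with $b'_i=a_{2i}$ and $c'_i=a_{2i+1}$. The $2$-bit block $(a_{2i},a_{2i+1})$ is therefore the base-$4$ digit of $w_i$ under the identification $I{=}00$, $X{=}01$, $Y{=}10$, $Z{=}11$, so as $a$ runs from $0$ to $4^q-1$ the words $w$ appear in lexicographic order. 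Left-multiplication by $\matM({\cal D}_{(q)})$ then scales row $a$ by $\lambda_a=\prod_i({\cal D})_{(a_{2i}a_{2i+1}),(a_{2i}a_{2i+1})}=\si^{n_Y}$, because the $\si$-entry of $\diag{1,1,\si,1}$ sits exactly at the $(1,0)$ slot corresponding to $w_i=Y$. Since $\si^{n_Y}\cdot(-\si)^{n_Y}=1$, the phases cancel and the row becomes $\tfrac{1}{\sqrt{2^q}}\vec{\sigma_w^{\T}}^{\T}$. The main obstacle is entirely bookkeeping: pinning down MSB/LSB conventions, unpacking how $\mathcal{P}_{(q)}$ permutes the interleaved row indices, and making the two sources of $\si$-phases line up. Once the mapping $(a_{2i},a_{2i+1})\mapsto w_i$ is fixed, the remainder is a short direct computation powered by the designed phase cancellation.
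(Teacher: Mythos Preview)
Your proof is correct and takes a genuinely different route from the paper. The paper proceeds by induction on $q$: after checking $q=1$ directly, it establishes the recursive identity $\cirU^{(q+1)}=(\cirU^{(q)}\otimes\cirU^{(1)})\cdot\mathcal{S}_{(q+1)}^{-1}$ (via separate recursions for the comb sandwich $\mathcal{V}^{(q)}$ and for $\mathcal{P}_{(q)}$), and then invokes the Kronecker-product vectorization identity $\vec{\matA\otimes\matB}=(\matI\otimes\matP\otimes\matI)(\vec{\matA}\otimes\vec{\matB})$ together with Proposition~\ref{prop:kron-vec} to stack the rows in the right order. Your argument is instead a single direct computation: you read the comb sandwich as XOR--Hadamard--XOR, identify each row as (the vectorization of) an $\matX^{(d)}\matZ^{(b')}$ string, let $\mathcal{P}_{(q)}$ pair up $(b'_i,c'_i)$ into base-$4$ digits of the row index, and let the diagonal supply the $\si^{n_Y}$ that cancels the $(-\si)^{n_Y}$ coming from $\matX\matZ=-\si\matY$. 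Your approach is shorter and makes the circuit's design principle transparent (comb--H--comb produces all $X/Z$ strings, the permutation converts register indices into lex-ordered Pauli words, the diagonal fixes the $Y$ phase); the paper's inductive approach, while heavier on the Kronecker-permutation machinery, yields the tensor-recursive structure $\cirU^{(q+1)}\cong(\cirU^{(q)}\otimes\cirU^{(1)})\cdot\mathcal{S}_{(q+1)}^{-1}$ as a byproduct, which aligns with the recursive flavor of the surrounding Pauli-decomposition material.
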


\begin{proof}
We prove by induction on $q$. For $q=1$ one checks directly that
(see Figure~\ref{fig:u1-lemma-illustration}):
\begin{align*}
\cirU^{(1)} & =\frac{1}{\sqrt{2}}{\cal D}\left[\begin{array}{cc}
\matI_{2}\\
 & \matX
\end{array}\right]\left[\begin{array}{cc}
\matI_{2} & \matI_{2}\\
\matI_{2} & -\matI_{2}
\end{array}\right]\left[\begin{array}{cc}
\matI_{2}\\
 & \matX
\end{array}\right]\\
 & =\frac{1}{\sqrt{2}}\diag{1,1,\si,1}\left[\begin{array}{cc}
\matI & \matX\\
\matX & -\matI
\end{array}\right]\\
 & =\frac{1}{\sqrt{2}}\diag{1,1,\si,1}\left[\begin{array}{cccc}
1 & 0 & 0 & 1\\
0 & 1 & 1 & 0\\
0 & 1 & -1 & 0\\
1 & 0 & 0 & -1
\end{array}\right]\\
 & =\frac{1}{\sqrt{2}}\left[\begin{array}{cccc}
1 & 0 & 0 & 1\\
0 & 1 & 1 & 0\\
0 & \si & -\si & 0\\
1 & 0 & 0 & -1
\end{array}\right]\\
 & =\frac{1}{\sqrt{2}}\left[\begin{array}{c}
\vec{\matI_{2}}^{\T}\\
\vec{\matX}^{\T}\\
\vec{\matY^{\T}}^{\T}\\
\vec{\matZ}^{\T}
\end{array}\right]
\end{align*}
which agrees with the four one-qubit Paulis $(\sigma_{\w_{0}},\dots,\sigma_{\w_{3}})=(I,X,Y,Z)$
in lexicographic order.

Assume the statement holds for $q$, i.e.,
\[
\matM\bigl(\cirU^{(q)}\bigr)=\frac{1}{\sqrt{2^{q}}}\begin{bmatrix}\vec{\sigma_{w_{0}}^{\T}}^{\T}\\
\vdots\\
\vec{\sigma_{w_{4^{q}-1}}^{\T}}^{\T}
\end{bmatrix}.
\]
We first make the following observation. Let ${\cal V}^{(q)}\coloneqq\mathcal{U}_{\mathrm{comb}}^{(q)}\cdot\bigl({\cal H}^{\otimes q}\otimes{\cal I}_{q}\bigr)\cdot\mathcal{U}_{\mathrm{comb}}^{(q)}$,
so that by definition $\cirU^{(q)}={\cal D}_{(q)}\cdot\mathcal{{\cal P}}_{(q)}\cdot{\cal V}^{(q)}$.
We have the following recursive relation:
\begin{align*}
\matM({\cal V}^{(1)}) & =\matM(\mathcal{U}_{\mathrm{comb}}^{(1)}\cdot\bigl({\cal H}\otimes{\cal I}_{1})\cdot\mathcal{U}_{\mathrm{comb}}^{(1)})\\
 & =\ket 0\bra 0\otimes\matI+\ket 0\bra 1\otimes\matX+\ket 1\bra 0\otimes\matX-\ket 1\bra 1\otimes\matI
\end{align*}
and 
\[
{\cal V}^{(q+1)}=\mathcal{S}_{(q+1)}\cdot\left({\cal V}^{(q)}\otimes{\cal V}^{(1)}\right)\cdot\mathcal{S}_{(q+1)}^{-1}
\]
Indeed, we have {\footnotesize
\begin{align*}
\mathcal{V}^{(q+1)} & =\mathcal{U}_{\text{comb}}^{(q+1)}\cdot\left({\cal H}^{\otimes q+1}\otimes{\cal I}_{q+1}\right)\cdot\mathcal{U}_{\text{comb}}^{(q+1)}\\
 & =\left[\mathcal{S}_{(q+1)}\cdot(\mathcal{U}_{\text{comb}}^{(q)}\otimes\text{CNOT}_{2q,2q+1})\cdot\mathcal{S}_{(q+1)}^{-1}\right]\cdot\left[\mathcal{S}_{(q+1)}\cdot({\cal H}^{\otimes q}\otimes{\cal I}_{q}\otimes{\cal H}\otimes{\cal I}_{1})\cdot\mathcal{S}_{(q+1)}^{-1}\right]\cdot\left[\mathcal{S}_{(q+1)}\cdot(\mathcal{U}_{\text{comb}}^{(q)}\otimes\text{CNOT}_{2q,2q+1})\cdot\mathcal{S}_{(q+1)}^{-1}\right]\\
 & =\mathcal{S}_{(q+1)}\cdot(\mathcal{U}_{\text{comb}}^{(q)}\otimes\text{CNOT}_{2q,2q+1})\cdot({\cal H}^{\otimes q}\otimes{\cal I}_{q}\otimes{\cal H}\otimes{\cal I}_{1})\cdot(\mathcal{U}_{\text{comb}}^{(q)}\otimes\text{CNOT}_{2q,2q+1})\cdot\mathcal{S}_{(q+1)}^{-1}\\
 & =\mathcal{S}_{(q+1)}\cdot\left((\mathcal{U}_{\text{comb}}^{(q)}\cdot({\cal H}^{\otimes q}\otimes{\cal I}_{q})\cdot\mathcal{U}_{\text{comb}}^{(q)})\otimes(\text{CNOT}_{2q,2q+1}\cdot({\cal H}\otimes{\cal I}_{1})\cdot\text{CNOT}_{2q,2q+1})\right)\cdot\mathcal{S}_{(q+1)}^{-1}\\
 & =\mathcal{S}_{(q+1)}\cdot\left(\mathcal{V}^{(q)}\otimes\mathcal{V}^{(1)}\right)\cdot\mathcal{S}_{(q+1)}^{-1}
\end{align*}
}where we used the Proposition~\ref{prop:u-comb} and ${\cal H}^{\otimes q+1}\otimes{\cal I}_{q+1}=\mathcal{S}_{(q+1)}\cdot\left(({\cal H}^{\otimes q}\otimes{\cal I}_{q})\otimes({\cal H}\otimes{\cal I}_{1})\right)\cdot\mathcal{S}_{(q+1)}^{-1}$.

The next observation is that:
\[
\mathcal{{\cal P}}_{(q+1)}=({\cal P}_{(q)}\otimes{\cal I}_{1})\cdot\mathcal{S}_{(q+1)}^{-1}
\]
where $\mathcal{S}_{(q+1)}$ is defined in the previous proposition.
Proof of this observation is deferred to the end, in order not to
disturb the flow of the proof. The last equation implies that, 
\begin{align*}
\cirU^{(q+1)} & ={\cal D}_{(q+1)}\cdot\mathcal{{\cal P}}_{(q+1)}\cdot{\cal V}^{(q+1)}\\
 & ={\cal D}_{(q+1)}\cdot({\cal P}_{(q)}\otimes{\cal I}_{1})\cdot\mathcal{S}_{(q+1)}^{-1}\cdot\mathcal{S}_{(q+1)}\cdot\left({\cal V}^{(q)}\otimes V^{(1)}\right)\cdot\mathcal{S}_{(q+1)}^{-1}\\
 & ={\cal D}_{(q+1)}\cdot(\mathcal{{\cal P}}_{(q)}{\cal V}^{(q)}\otimes{\cal I}_{1}{\cal V}^{(1)})\cdot\mathcal{S}_{(q+1)}^{-1}\\
 & =(D_{(q)}\otimes{\cal D})\cdot\left(\mathcal{{\cal P}}_{(q)}{\cal V}^{(q)}\otimes{\cal V}^{(1)}\right)\cdot\mathcal{S}_{(q+1)}^{-1}\\
 & =(\cirU^{(q)}\otimes\cirU^{(1)})\cdot\mathcal{S}_{(q+1)}^{-1}
\end{align*}

The following is a known identity: for every $\matA\in\C^{N\times M}$
and $\matB\in^{L\times K}$ we have 
\[
\cvec{\matA\otimes\matB}=(\matI_{M}\otimes\matP_{K,N}\otimes\matI_{L})\left[\cvec{\matA}\otimes\cvec{\matB}\right]
\]
where $\cvec{\cdot}$ denotes column-major vectorization of a matrix,
and $\matP_{N,M}$ is the Kronecker permutation matrix. 
\[
\matP_{N,M}=\sum_{i=0}^{N-1}\sum_{j=0}^{M-1}\matE_{ij}^{(N\times M)}\otimes\matE_{ji}^{(M\times N)}
\]
(see \cite[Fact 7.4.29.(xii)]{bernstein2009matrix}). Translating
to row-major vectorization, we have 
\[
\vec{\matA^{\T}\otimes\matB^{\T}}=(\matI_{M}\otimes\matP_{K,N}\otimes\matI_{L})\left[\vec{\matA^{\T}}\otimes\vec{\matB^{\T}}\right]
\]

In our case, we have for any 2-by-2 Pauli matrix $\sigma_{i}$ and
for any $q$-wise Pauli matrix $\sigma_{\w}\in\C^{N\times N}$ (note
that in the following, we are using transpose and not conjugate transpose):
\[
\vec{\sigma_{\w}^{\T}\otimes\sigma_{i}^{\T}}=(\matI_{N}\otimes\matP_{2,N}\otimes\matI_{2})\left[\vec{\sigma_{\w}^{\T}}\otimes\vec{\sigma_{i}^{\T}}\right]
\]
 So, we have{\footnotesize
\begin{align*}
\matM\bigl(\cirU^{(q+1)}\bigr) & =\left(\matM\bigl(\cirU^{(q)}\bigr)\otimes\matM\bigl(\cirU^{(1)}\bigr)\right)\matM(\mathcal{S}_{(q+1)}^{-1})\\
 & =\left(\frac{1}{\sqrt{2^{q}}}\begin{bmatrix}\vec{\sigma_{w_{0}}^{\T}}^{\T}\\
\vec{\sigma_{w_{1}}^{\T}}^{\T}\\
\vdots\\
\vec{\sigma_{w_{4^{q}-1}}^{\T}}^{\T}
\end{bmatrix}\otimes\frac{1}{\sqrt{2}}\left[\begin{array}{c}
\vec{\matI^{\T}}^{\T}\\
\vec{\matX^{\T}}^{\T}\\
\vec{\matY^{\T}}^{\T}\\
\vec{\matZ^{\T}}^{\T}
\end{array}\right]\right)\matM(\mathcal{S}_{(q+1)}^{-1})\\
 & =\frac{1}{\sqrt{2^{q+1}}}\begin{bmatrix}\vec{\sigma_{\w_{0}}^{\T}}^{\T}\otimes\vec{\matI^{\T}}^{\T}\\
\vec{\sigma_{\w_{0}}^{\T}}^{\T}\otimes\vec{\matX^{\T}}^{\T}\\
\vec{\sigma_{\w_{0}}^{\T}}^{\T}\otimes\vec{\matY^{\T}}^{\T}\\
\vec{\sigma_{\w_{0}}^{\T}}^{\T}\otimes\vec{\matZ^{\T}}^{\T}\\
\vec{\sigma_{\w_{1}}^{\T}}^{\T}\otimes\vec{\matI^{\T}}^{\T}\\
\vdots
\end{bmatrix}\matM(\mathcal{S}_{(q+1)}^{-1})\\
 & =\frac{1}{\sqrt{2^{q+1}}}\left(\matM(\mathcal{S}_{(q+1)})\begin{bmatrix}\vec{\sigma_{\w_{0}}^{\T}}^{\T}\otimes\vec{\matI^{\T}}^{\T}\\
\vec{\sigma_{\w_{0}}^{\T}}^{\T}\otimes\vec{\matX^{\T}}^{\T}\\
\vec{\sigma_{\w_{0}}^{\T}}^{\T}\otimes\vec{\matY^{\T}}^{\T}\\
\vec{\sigma_{\w_{0}}^{\T}}^{\T}\otimes\vec{\matZ^{\T}}^{\T}\\
\vec{\sigma_{\w_{1}}^{\T}}^{\T}\otimes\vec{\matI^{\T}}^{\T}\\
\vdots
\end{bmatrix}^{\T}\right)^{\T}\\
 & =\frac{1}{\sqrt{2^{q+1}}}\left(\matI_{N}\otimes\matP_{2,N}\otimes\matI_{2}\begin{bmatrix}\vec{\sigma_{\w_{0}}^{\T}}^{\T}\otimes\vec{\matI^{\T}}^{\T}\\
\vec{\sigma_{\w_{0}}^{\T}}^{\T}\otimes\vec{\matX^{\T}}^{\T}\\
\vec{\sigma_{\w_{0}}^{\T}}^{\T}\otimes\vec{\matY^{\T}}^{\T}\\
\vec{\sigma_{\w_{0}}^{\T}}^{\T}\otimes\vec{\matZ^{\T}}^{\T}\\
\vec{\sigma_{\w_{1}}^{\T}}^{\T}\otimes\vec{\matI^{\T}}^{\T}\\
\vdots
\end{bmatrix}^{\T}\right)^{\T}\\
 & =\frac{1}{\sqrt{2^{q+1}}}\begin{bmatrix}\vec{(\sigma_{\w_{0}}\otimes\matI)^{\T}}^{\T}\\
\vec{(\sigma_{\w_{0}}\otimes\matX)^{\T}}^{\T}\\
\vec{(\sigma_{\w_{0}}\otimes\matY)^{\T}}^{\T}\\
\vec{(\sigma_{\w_{0}}\otimes\matZ)^{\T}}^{\T}\\
\vec{(\sigma_{\w_{1}}\otimes\matI)^{\T}}^{\T}\\
\vdots\\
\vec{(\sigma_{\w_{4^{q}-1}}^{\T}\otimes\matZ)^{\T}}^{\T}
\end{bmatrix}
\end{align*}
}where the fifth equality follows from Proposition~\ref{prop:kron-vec}.

Finally we need to complete the argument $\mathcal{{\cal P}}_{(q+1)}=({\cal P}_{(q)}\otimes{\cal I}_{1})\cdot\mathcal{S}_{(q+1)}^{-1}$.
It is easy to see that we have $\mathcal{P}_{(q)}={\cal S}_{\tau(q)}$
where $\tau_{(q)}\in S_{2q}$ is the following permutation:
\[
\tau_{(q)}(i)=\begin{cases}
2i & \text{if }0\leq i<q\\
2(i-q)+1 & \text{if }q\leq i<2q
\end{cases}
\]
\[
\]
The permutation $\tau_{(q+1)}$ can be decomposed as $\tau_{(q+1)}=(\tau_{(q)}\oplus\text{id}_{1})\circ(\sigma^{(q+1)})^{-1}$,
which implies that 
\[
\mathcal{{\cal P}}_{(q+1)}=({\cal P}_{(q)}\otimes{\cal I}_{1})\cdot\mathcal{S}_{(q+1)}^{-1}
\]
as required. To see that $\tau_{(q+1)}=(\tau_{(q)}\oplus\text{id}_{1})\circ(\sigma^{(q+1)})^{-1}$
we show that both permutations produce identical results when applied
to any index $i\in\{0,1,\ldots,2q+1\}$.
\begin{itemize}
\item For $i\in\{0,1,\ldots,q-1\}$: since $i<q$, we have $(\sigma^{(q+1)})^{-1}(i)=i$.
Thus: 
\begin{align*}
(\tau_{(q)}\oplus\text{id}_{1})(\sigma^{(q+1)})^{-1})(i) & =(\tau_{(q)}\oplus\text{id}_{1})(i)\\
 & =\tau_{(q)}(i)\quad\text{(since }i<2q)\\
 & =2i
\end{align*}
On the other hand, since $i<q<q+1$ we have, by definition: 
\[
\tau_{(q+1)}(i)=2i
\]
\item For $i=q$: we have $(\sigma^{(q+1)})^{-1}(q)=2q$, so since $2q\geq2q$
we have
\begin{align*}
(\tau_{(q)}\oplus\text{id}_{1})(\sigma^{(q+1)})^{-1})(q) & =(\tau_{(q)}\oplus\text{id}_{1})(2q)\\
 & =2q\quad\text{(since }\tau_{(q)}\oplus\text{id}_{1}\text{ fixes }2q\text{)}
\end{align*}
On the other hand, since $q<q+1$: 
\[
\tau_{(q+1)}(q)=2q
\]
\item For $i\in\{q+1,q+2,\ldots,2q\}$: we have $(\sigma^{(q+1)})^{-1}(i)=i-1$,
so since $q\leq i-1<2q$: 
\begin{align*}
(\tau_{(q)}\oplus\text{id}_{1})(\sigma^{(q+1)})^{-1})(i) & =(\tau_{(q)}\oplus\text{id}_{1})(i-1)\\
 & =\tau_{(q)}(i-1)\quad\text{(since }i-1<2q\text{)}\\
 & =2((i-1)-q)+1\quad\text{(since }i-1\geq q\text{)}\\
 & =2i-2q-1
\end{align*}
On the other hand, since $i\geq q+1$: 
\begin{align*}
\tau_{(q+1)}(i) & =2(i-(q+1))+1\\
 & =2i-2q-1
\end{align*}
\item For $i=2q+1$: since $i=2q+1>2q$ we have $(\sigma^{(q+1)})^{-1}(2q+1)=2q+1$,
and so
\begin{align*}
((\tau_{(q)}\oplus\text{id}_{1})(\sigma^{(q+1)})^{-1})(2q+1) & =(\tau_{(q)}\oplus\text{id}_{1})(2q+1)\\
 & =2q+1\quad\text{(since }(\tau_{(q)}\oplus\text{id}_{1})\text{ fixes }2q+1\text{)}
\end{align*}
On the other hand, since $2q+1\geq q+1$: 
\begin{align*}
\tau_{(q+1)}(2q+1) & =2((2q+1)-(q+1))+1\\
 & =2q+1
\end{align*}
\end{itemize}
This covers all cases. Since the two permutations are equal for all
indices, the two permutations are equal.
\end{proof}
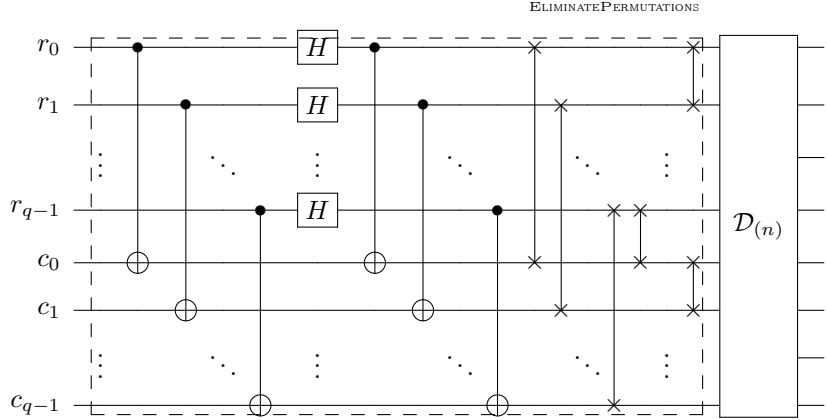
\begin{figure}[t]
\begin{centering}
\[
\Qcircuit @C=1em @R=.9em {
&                  &         &      &                    &            &             &          &            &          &          &          &                                      &      & {\hspace{2.0em}\tiny\textsc{EliminatePermutations}} 											&\\
& \lstick{r_0}     & \qw     & \ctrl{4} & \qw          & \qw       & \qw        & \gate{H} & \ctrl{4}   & \qw       & \qw       & \qw        & \qswap & \qw        & \qw       & \qw             & \qw          & \qw    & \qswap \qwx[1] 				& \multigate{7}{{\cal D}_{(n)}}  & \qw	& \\
& \lstick{r_1}     & \qw      & \qw      & \ctrl{4}     & \qw       & \qw        & \gate{H} & \qw        & \ctrl{4}  & \qw       & \qw               & \qw    & \qswap     & \qw       & \qw             & \qw          & \qw    & \qswap    			& \ghost{{\cal D}_{(n)}}         & \qw	 & \\
&                  & \vdots   &          &              & \ddots    &            & \vdots   &            &           & \ddots    &                  &       &            & \ddots    &                 &              & \vdots &           			& \nghost{{\cal D}_{(n)}}        & \qw	 & \\
& \lstick{r_{q-1}} & \qw      & \qw      & \qw          & \qw       & \ctrl{4}   & \gate{H} & \qw        & \qw       & \qw       & \ctrl{4}          & \qw    & \qw        & \qw       & \qswap          & \qswap \qwx[1] & \qw  & \qw        			& \ghost{{\cal D}_{(n)}}         &   \qw	& \\
& \lstick{c_0}     & \qw      & \targ    & \qw          & \qw       & \qw        & \qw      & \targ      & \qw       & \qw       & \qw               & \qswap \qwx[-4] & \qw & \qw       & \qw             & \qswap       & \qw    & \qswap \qwx[1] 	& \ghost{{\cal D}_{(n)}}         &   \qw	& \\
& \lstick{c_1}     & \qw      & \qw      & \targ        & \qw       & \qw        & \qw      & \qw        & \targ     & \qw       & \qw               & \qw    & \qswap \qwx[-4] & \qw   & \qw             & \qw          & \qw    & \qswap 		 	& \ghost{{\cal D}_{(n)}}         &   \qw	  & \\
&                  & \vdots   &          &              & \ddots    &            & \vdots   &            &           & \ddots    &                 &       &            & \ddots    &                 &              & \vdots &            			& \nghost{{\cal D}_{(n)}}        &   \qw	 & \\
& \lstick{c_{q-1}} & \qw      & \qw      & \qw          & \qw       & \targ      & \qw      & \qw        & \qw       & \qw       & \targ      \gategroup{2}{3}{9}{19}{.7em}{--} & \qw & \qw & \qw & \qswap \qwx[-4] & \qw & \qw & \qw      			& \ghost{{\cal D}_{(n)}}         &   \qw	 & \\
}
\]
\par\end{centering}
\caption{\label{fig:U_q}Visualization of the $\protect\cirU^{(q)}={\cal D}_{(q)}\cdot\mathcal{{\cal P}}_{(q)}\cdot\mathcal{U}_{\mathrm{comb}}^{(q)}\cdot\bigl({\cal H}^{\otimes q}\otimes{\cal I}_{q}\bigr)\cdot\mathcal{U}_{\mathrm{comb}}^{(q)}$.
In order to use the \noun{EliminatePermutations} optimization, the
final circuit must be a state preparation circuit, meaning it must
act on the ground state.}
\end{figure}

\begin{rem}
In order to create ${\cal D}$ such that $\matM({\cal D})=\diag{1,1,\si,1}$
we note that
\[
\matM({\cal D})=\left[\begin{array}{cc}
\matI\\
 & \matX\matS\matX
\end{array}\right]=\left[\begin{array}{cc}
\matX\\
 & \matX
\end{array}\right]\left[\begin{array}{cc}
\matI\\
 & \matS
\end{array}\right]\left[\begin{array}{cc}
\matX\\
 & \matX
\end{array}\right]
\]
See Figure~\ref{fig:u1-lemma-illustration} for a circuit diagram
of ${\cal D}$.

\begin{figure}[t]
\begin{centering}
\[
\Qcircuit @C=1em @R=.9em {
&                   &        & 		&      &                    &            &             &          &            &          &          &          &                                      &      & {\hspace{2.0em}\tiny\textsc{EliminatePermutations}} 											& &&&&&&&\\
& \lstick{a_0}      &  \qw   & \multigate{11}{{\cal U}^{\text{SP}}_{\matA}} 	&  \qw    &    \qw  &  \qw          &    \qw         &   \qw       &    \qw        &    \qw      &  \qw        &  \qw        &     \qw                                 & \qw     & 			 	\qw										&\qw&    \qw         &   \qw       &    \qw        &    \qw      &  \qw        &  \qw        &\\
& \lstick{a_1}      &  \qw   & \ghost{{\cal U}^{\text{SP}}_{\matA}} 	&  \qw    &    \qw                &    \qw        &      \qw       &   \qw       &      \qw      &   \qw       &    \qw      &    \qw      &      \qw                                &  \qw    & 		\qw	 											&\qw&    \qw         &   \qw       &    \qw        &    \qw      &  \qw        &  \qw        &\\
&                   & \vdots & \nghost{{\cal U}^{\text{SP}}_{\matA}} &      &                    &            &             &   {\cal V}^{(n)}        &            &          &          &          &       &      & & \mathcal{P}_{(n)} &&&&&&&&\\
& \lstick{a_{q_{N}}}&  \qw   & \ghost{{\cal U}^{\text{SP}}_{\matA}} &  \qw    &   \qw                 &    \qw        &   \qw          &  \qw        &     \qw       &    \qw      &   \qw       &  \qw        &    \qw       &  \qw    & 	\qw		&\qw&    \qw         &   \qw       &    \qw        &    \qw      &  \qw        &  \qw        &\\
& \lstick{r_0}      & \qw    & \ghost{{\cal U}^{\text{SP}}_{\matA}} & \ctrl{4} & \qw          & \qw       & \qw        & \gate{H} & \ctrl{4}   & \qw       & \qw       & \qw        & \qswap & \qw        & \qw       & \qw             & \qw          & \qw    & \qswap \qwx[1] 				& \multigate{7}{{\cal D}_{(n)}}  & \qw	& \\
& \lstick{r_1}      & \qw    & \ghost{{\cal U}^{\text{SP}}_{\matA}}         & \qw      & \ctrl{4}     & \qw       & \qw        & \gate{H} & \qw        & \ctrl{4}  & \qw       & \qw               & \qw    & \qswap     & \qw       & \qw             & \qw          & \qw    & \qswap    			& \ghost{{\cal D}_{(n)}}         & \qw	 & \\
&                   & \vdots & \nghost{{\cal U}^{\text{SP}}_{\matA}}        &          &              & \ddots    &            & \vdots   &            &           & \ddots    &                  &       &            & \ddots    &                 &              & \vdots &           			& \nghost{{\cal D}_{(n)}}        & 	 & \\
& \lstick{r_{n-1}}  & \qw    & \ghost{{\cal U}^{\text{SP}}_{\matA}}         & \qw      & \qw          & \qw       & \ctrl{4}   & \gate{H} & \qw        & \qw       & \qw       & \ctrl{4}          & \qw    & \qw        & \qw       & \qswap          & \qswap \qwx[1] & \qw  & \qw        			& \ghost{{\cal D}_{(n)}}         &   \qw	& \\
& \lstick{c_0}      & \qw    & \ghost{{\cal U}^{\text{SP}}_{\matA}}         & \targ    & \qw          & \qw       & \qw        & \qw      & \targ      & \qw       & \qw       & \qw  \gategroup{5}{14}{13}{20}{0.7em}{^\}}             & \qswap \qwx[-4] & \qw & \qw       & \qw             & \qswap       & \qw    & \qswap \qwx[1] 	& \ghost{{\cal D}_{(n)}}         &   \qw	& \\
& \lstick{c_1}      & \qw    & \ghost{{\cal U}^{\text{SP}}_{\matA}}         & \qw      & \targ        & \qw       & \qw        & \qw      & \qw        & \targ     & \qw       & \qw  \gategroup{5}{5}{13}{13}{0.7em}{^\}}             & \qw    & \qswap \qwx[-4] & \qw   & \qw             & \qw          & \qw    & \qswap 		 	& \ghost{{\cal D}_{(n)}}         &   \qw	  & \\
&                   & \vdots & \nghost{{\cal U}^{\text{SP}}_{\matA}}        &          &              & \ddots    &            & \vdots   &            &           & \ddots    &                 &       &            & \ddots    &                 &              & \vdots &            			& \nghost{{\cal D}_{(n)}}        &   	 & \\
& \lstick{c_{n-1}}  & \qw    & \ghost{{\cal U}^{\text{SP}}_{\matA}}         & \qw      & \qw          & \qw       & \targ      & \qw      & \qw        & \qw       & \qw       & \targ   \gategroup{2}{3}{13}{20}{1.1em}{--}    & \qw & \qw & \qw & \qswap \qwx[-4] & \qw & \qw & \qw      			& \ghost{{\cal D}_{(n)}}         &   \qw	 & \\
}
\]
\par\end{centering}
\caption{\label{fig:U_A_to_U_Ap}Circuit of $\mathcal{U}_{\protect\tenA_{P}}^{\text{SP}}$.
The top $q_{N}\protect\coloneqq q(\mathcal{U}_{\protect\matA}^{\text{SP}})-2n$
wires correspond to ancillary qubits on which only $\mathcal{U}_{\protect\matA}^{\text{SP}}$
acts. The bottom $2n$ wires are the data qubits, where after applying
$\mathcal{U}_{\protect\matA}^{\text{SP}}$ the sequence of circuits
${\cal V}^{(n)}$, $\mathcal{P}_{(n)}$, and $\mathcal{D}_{(n)}$
are applied.}
\end{figure}
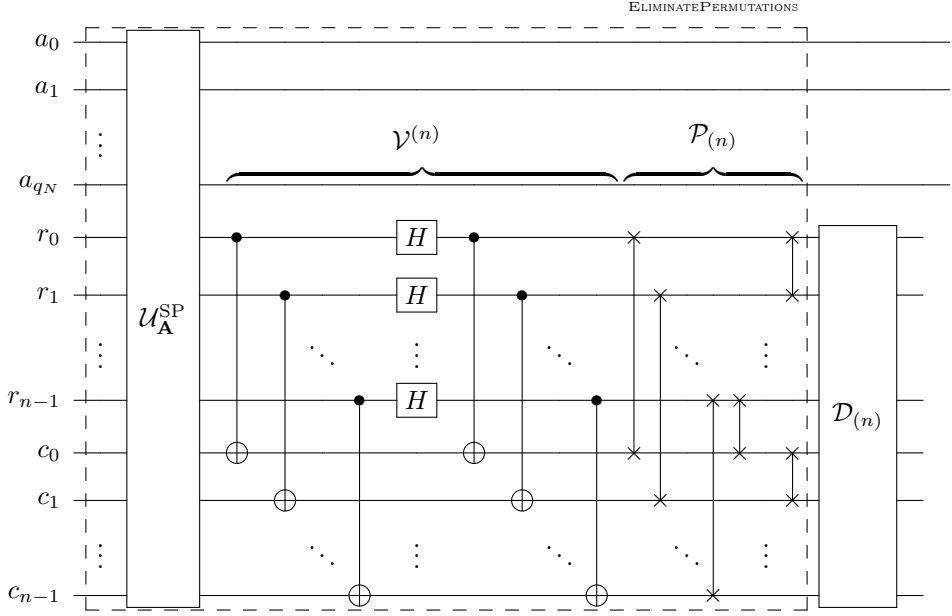
\end{rem}

We are now ready to state the main theorem of this section.
\begin{thm}
\emph{\label{thm:ua-to-uap-and-back}Let $\matA\in\C^{N\times N}$
where $N=2^{n}$. Given a classical description of a matrix state
preparation circuit ${\cal U}_{\matA}^{\text{SP}}\in\MS{\matA}{\alpha}$,
let $\mathcal{U}_{\tenA_{P}}^{\text{SP}}\coloneqq({\cal I}_{q(\cirU_{\matA})-2n}\otimes\cirU^{(n)})\cdot\cirU_{\matA}^{\text{SP}}$
(see Lemma~\ref{lem:rows_of_pauli} for $\cirU^{(n)}$'s definition;
if $q(\cirU_{\matA}^{\text{SP}})=2n$ then then identity part is omitted);
see Figure~\ref{fig:U_A_to_U_Ap} for a circuit diagram. The circuit
$\mathcal{U}_{\tenA_{P}}^{\text{SP}}$ is a hypermatrix state preparation
for the Pauli coefficient hypermatrix $\tenA_{P}$. In particular,
$\mathcal{U}_{\tenA_{P}}^{\text{SP}}\in\TS{\tenA_{P}}{\nicefrac{\alpha}{\sqrt{N}}}$.
The circuit $\mathcal{U}_{\tenA_{P}}^{\text{SP}}$ can be constructed
with classical cost $O(g({\cal U}_{\matA}^{\text{SP}})+n)$ , and
with $g({\cal U}_{\matA}^{\text{SP}})+6n$ gate complexity. This construction
yields a depth of $d({\cal U}_{\matA}^{\text{SP}})+5$, and with surplus
T-gate and T-depth (compared to the T-costs of $\cirU_{\matA}^{\text{SP}}$)
is $n$ and $1$ (respectively).}
\end{thm}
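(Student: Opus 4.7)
The crux of the argument is a single clean identity: $\matM(\cirU^{(n)})\vec{\matA} = \sqrt{N}\,\vec{\tenA_P}$. This follows immediately by combining Lemma~\ref{lem:rows_of_pauli} (the $j$-th row of $\matM(\cirU^{(n)})$ is $\tfrac{1}{\sqrt{N}}\vec{\sigma_{\w_j}^{\T}}^{\T}$, with $\w_j\in\Sigma^n$ in lexicographic order) with the preceding Fact that $\vec{\sigma_\w^{\T}}^{\T}\vec{\matA} = N\alpha_\matA(\w)$: the $j$-th entry of the product equals $\tfrac{1}{\sqrt{N}}\cdot N\,\alpha_\matA(\w_j) = \sqrt{N}\,\alpha_\matA(\w_j)$, and these entries in lexicographic order are exactly $\vec{\tenA_P}$.

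With this identity in hand, the plan is to lift it through the state preparation semantics. Unfolding $\cirU_\matA^{\text{SP}}\in\MS{\matA}{\alpha}$: its action on the ground state produces, in the $\ket{0}_{q(\cirU_\matA^{\text{SP}})-2n}$-ancilla branch, data-register amplitudes $\alpha^{-1}e^{-\si\theta}\vec{\matA}$, while the complementary branch $\ket{\perp}$ is supported outside $\ket{0}_{q(\cirU_\matA^{\text{SP}})-2n}\otimes\C^{N^{2}}$. Since $\mathcal{I}\otimes\cirU^{(n)}$ leaves the ancilla untouched and acts as $\matM(\cirU^{(n)})$ on the data, the complementary branch remains complementary, and the main branch's data amplitudes become $\alpha^{-1}e^{-\si\theta}\sqrt{N}\,\vec{\tenA_P} = (\alpha/\sqrt{N})^{-1}e^{-\si\theta}\vec{\tenA_P}$. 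Combined with $\FNorm{\tenA_P} = \FNorm{\matA}/\sqrt{N}$ (which also verifies $\alpha/\sqrt{N}\ge\FNorm{\tenA_P}$), this matches Definition~\ref{def:matrix-circ} with scale $\alpha/\sqrt{N}$, so $\cirU_{\tenA_P}^{\text{SP}}\in\TS{\tenA_P}{\alpha/\sqrt{N}}$, and the success probability $\FNormS{\matA}/\alpha^2$ is preserved.

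For complexity, I audit $\cirU^{(n)} = \mathcal{D}_{(n)}\cdot\mathcal{P}_{(n)}\cdot\mathcal{U}_{\mathrm{comb}}^{(n)}\cdot(\mathcal{H}^{\otimes n}\otimes\mathcal{I}_{n})\cdot\mathcal{U}_{\mathrm{comb}}^{(n)}$. The two CNOT-combs each contribute $n$ parallel CNOTs; the Hadamard layer contributes $n$ single-qubit gates; and $\mathcal{D}_{(n)}$ is a tensor of $n$ constant-depth three-gate blocks realising $\diag{1,1,\si,1}$ (an $\matX$ sandwich around a controlled-$\matS$, with only the controlled-$\matS$ non-Clifford). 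Because $\mathcal{D}_{(n)}$ is a tensor of two-qubit gates, the middle permutation satisfies $\mathcal{D}_{(n)}\mathcal{P}_{(n)} = \mathcal{P}_{(n)}\tilde{\mathcal{D}}_{(n)}$, where $\tilde{\mathcal{D}}_{(n)}$ acts on the pre-interlaced qubit pairs; hence all SWAPs move to the outermost layer of $\cirU_{\tenA_P}^{\text{SP}}$ and are absorbed by \textsc{EliminatePermutations} without adding quantum cost. Summing yields surplus gate count $6n$, surplus depth $O(1)$, and---since all $n$ controlled-$\matS$ gates act on disjoint qubit pairs and can be executed simultaneously---surplus T-depth $1$ with surplus T-count $n$. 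The classical cost is dominated by copying the input description and emitting the $O(n)$ surplus gates.

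The main source of friction I anticipate is not any deep step but the $\sqrt{N}$-bookkeeping between Lemma~\ref{lem:rows_of_pauli}'s normalization and the norm identity $\FNorm{\tenA_P} = \FNorm{\matA}/\sqrt{N}$, together with justifying that the permutation $\mathcal{P}_{(n)}$ sitting \emph{inside} $\cirU^{(n)}$ can indeed be commuted through the tensor-product layer $\mathcal{D}_{(n)}$ (rather than through the earlier Hadamard and comb layers) before being absorbed. Both are mechanical once carefully set up, but each is a natural place for an off-by-$\sqrt{N}$ or off-by-permutation slip.
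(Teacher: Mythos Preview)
Your proposal is correct and follows essentially the same approach as the paper: combine Lemma~\ref{lem:rows_of_pauli} with the Fact $N\alpha_{\matA}(\w)=\vec{\sigma_{\w}^{\T}}^{\T}\vec{\matA}$ to obtain the key identity, then push it through the state-preparation definition, and finally invoke \textsc{EliminatePermutations} for the complexity claims. Your treatment is in fact slightly more careful on one point: you explicitly commute $\mathcal{P}_{(n)}$ past the tensor-product layer $\mathcal{D}_{(n)}$ before applying \textsc{EliminatePermutations}, whereas the paper applies the procedure to the sub-circuit ending at $\mathcal{P}_{(n)}$ and leaves implicit how $\mathcal{D}_{(n)}$ is handled afterward.
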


\begin{proof}
From Lemma~\ref{lem:rows_of_pauli} the matrix $\matM(\cirU^{(n)})$
consists of rows that are the vectorized (and transposed) Pauli operators:
\[
\matM(\cirU^{(n)})=\frac{1}{\sqrt{N}}\begin{bmatrix}\vec{\sigma_{\w_{0}}^{\T}}^{\T}\\
\vec{\sigma_{\w_{1}}^{\T}}^{\T}\\
\vdots\\
\vec{\sigma_{\w_{4^{n}-1}}^{\T}}^{\T}
\end{bmatrix}
\]
Since $\cirU_{\matA}^{\text{SP}}\in\MS{\matA}{\alpha}$ , there exists
a $\theta$ such that 
\[
\alpha e^{\si\theta}\matM(\cirU_{\matA}^{\text{SP}})=\begin{bmatrix}\vdots & * & \cdots & *\\[1mm]
\vec{{\cal \matA}} & \vdots &  & \vdots\\[1mm]
\vdots & \vdots &  & \vdots\\[1mm]
\vpsi & \vdots &  & \vdots\\[1mm]
\vdots & * & \cdots & *
\end{bmatrix}
\]
where $\vpsi\in\C^{2^{q(\cirU_{\matA}^{\text{SP}})-2n}}$. So,
\begin{align*}
\alpha e^{\si\theta}\matM({\cal I}_{q(\cirU_{\matA})-2n}\otimes\cirU^{(n)})\matM(\cirU_{\matA}^{\text{SP}}) & =\left[\begin{array}{ccc}
\matM(\cirU^{(n)}) & 0 & \cdots\\
0 & \ddots & 0\\
\vdots & 0 & \matM(\cirU^{(n)})
\end{array}\right]\begin{bmatrix}\vdots & * & \cdots & *\\[1mm]
\vec{{\cal \matA}} & \vdots &  & \vdots\\[1mm]
\vdots & \vdots &  & \vdots\\[1mm]
\vpsi & \vdots &  & \vdots\\[1mm]
\vdots & * & \cdots & *
\end{bmatrix}\\
 & =\sqrt{N}\begin{bmatrix}\alpha_{\matA}(\w_{0}) & * & \cdots & *\\
\vdots & \vdots &  & \vdots\\
\alpha_{\matA}(\w_{4^{q-1}}) & \vdots &  & \vdots\\*
* & \vdots &  & \vdots\\
\vdots\\*
* & * & \cdots & *
\end{bmatrix}
\end{align*}
where in the last equality we used the identity $N\alpha_{\matA}(\w)=\vec{\sigma_{\w}^{\T}}^{\T}\vec{\matA}$.
Dividing by $\sqrt{N}$ on both sides, we see that\emph{ }$\mathcal{U}_{\tenA_{P}}^{\text{SP}}\in\TS{\tenA_{P}}{\nicefrac{\alpha}{\sqrt{N}}}$
by definition of $\TS{\tenA_{P}}{\nicefrac{\alpha}{\sqrt{N}}}$.

Using \noun{EliminatePermutations} procedure for $\mathcal{{\cal P}}_{(n)}\cdot\mathcal{U}_{\mathrm{comb}}^{(n)}\cdot\bigl({\cal H}^{\otimes n}\otimes{\cal I}_{n}\bigr)\cdot\mathcal{U}_{\mathrm{comb}}^{(n)}\cdot\cirU_{\matA}^{\text{SP}}$
to eliminate $\mathcal{{\cal P}}_{(n)}$, we have that the depth of
$\mathcal{U}_{\tenA_{p}}^{\text{SP}}$ is $d({\cal U}_{\matA}^{\text{SP}})+6$
the gate complexity and the classical cost are $g({\cal U}_{\matA}^{\text{SP}})+6n$
(see Figure~\ref{fig:U_A_to_U_Ap}).
\end{proof}
By reversing the process, we can convert from a state preparation
of $\tenA_{P}$ to a state preparation of $\matA$. To see this, note
that if 
\[
\mathcal{U}_{\tenA_{P}}^{\text{SP}}=\cirU^{(n)}\cdot\cirU_{\matA}^{\text{SP}}
\]
then
\[
\mathcal{U}_{\mathbf{A}}^{\text{SP}}=(\cirU^{(n)})^{\conj}\cdot\mathcal{U}_{\tenA_{P}}^{\text{SP}}
\]
where $(\cirU^{(n)})^{\conj}$ denotes the inverse of the circuit.
The scaling factor is adjusted by $\sqrt{N}$.
\begin{rem}
Seemingly moving from a state preparation of $\matA$ to a state preperation
of $\tenA_{P}$ improves the scale (makes it smaller) by a factor
of $\nicefrac{1}{\sqrt{N}}$. However, the Frobenius norm of $\FNorm{\tenA_{P}}$
is also smaller by a factor of $\nicefrac{1}{\sqrt{N}}$. Since only
the ratio between the scale and the Frobenius norm of the prepared
hypermatrix matters, there is no gain in terms of the scale for moving
from state preparation of $\tenA_{P}$ to state preparation of $\matA$.
Similarly, when moving from $\tenA_{P}$ to $\matA$ the scale ``worsens''
by a factor of $\sqrt{N}$, however, there is no actual loss here
as well since the Frobenius norm increases by the same factor. Moving
one direction and then back around keeps the scale the same, and only
marginally increases the depth. So the two modes of state preparation
are effectively equivalent.
\end{rem}

\section{\label{sec:multiplexing-pauli-coefficient}Multiplexing higher-order
Pauli matrices}

The algorithms we propose for constructing block encodings employ
multiplexers of higher-order Pauli matrices. This section describes
how to construct such multiplexers efficiently.

\subsection{Constructing a multiplexer for arbitrary set of higher-order Pauli
matrices}

In this subsection, our goal is that given a length $K$ sequence
$\w_{0},\dots,\w_{K-1}$ of length $q$ Pauli words (i.e., $\w_{j}\in\Sigma^{q}$),
construct an efficient circuit ${\cal PMX}(\w_{0},\dots,\w_{K-1})\in{\cal MX}({\cal V}_{\w_{0}},\dots,{\cal V}_{\w_{K-1}})$,
where ${\cal V}_{\w}$ implements the higher order Pauli matrix corresponding
to the word $\w$. That is, the circuit ${\cal PMX}(\w_{0},\dots,\w_{K-1})$
is constructed such that

\begin{equation}
\matM({\cal PMX}(\w_{0},\dots,\w_{K-1}))=\bigoplus_{j=0}^{K-1}\sigma_{\mathbf{w}_{j}}.\label{eq:mpx-def}
\end{equation}

Define the rotation matrices: 
\[
\matR_{y}(\theta)\coloneqq\left[\begin{array}{cc}
\cos(\frac{\theta}{2}) & -\sin(\frac{\theta}{2})\\
\sin(\frac{\theta}{2}) & \cos(\frac{\theta}{2})
\end{array}\right],\,\,\matR_{z}(\theta)\coloneqq\left[\begin{array}{cc}
e^{-\si\frac{\theta}{2}} & 0\\
0 & e^{\si\frac{\theta}{2}}
\end{array}\right].
\]
These are the operators that correspond to the ${\cal R}_{y}$ and
${\cal R}_{z}$ rotation gates. The key to our approach is the well
known fact that we can write every Pauli matrix as the product of
specific combination of $\matR_{y}$ and $\matR_{z}$ rotation matrices:
\begin{align}
\matI & =\matR_{y}(0)\matR_{z}(0)\nonumber \\
-\si\matX & =\matR_{y}(\pi)\matR_{z}(\pi)\nonumber \\
-\si\matY & ={\cal \matR}_{y}(\pi)\matR_{z}(0)\nonumber \\
-\si\matZ & ={\cal \matR}_{y}(0)\matR_{z}(\pi)\label{eq:pauli-angels}
\end{align}
Going from Pauli matrices to higher-order Pauli matrices, given a
Pauli string $\w$ we can find a set of angles $\{\theta_{i}^{(y)}\}_{i=0}^{q-1}$
and $\{\theta_{i}^{(z)}\}_{i=0}^{q-1}$, all of them $0$ or $\pi$,
and a phase $\varphi$ such that 
\[
e^{-\si\varphi}\sigma_{\w}=\bigotimes_{i=0}^{q-1}\matR_{y}(\theta_{i}^{(y)})\matR_{z}(\theta_{i}^{(z)})
\]
The phase $\varphi$ corresponds to $\nicefrac{\pi}{2}$ times the
number number of $X,Y,Z$ in $\w$. Thus, given words $\w_{0},\dots,\w_{K-1}$,
there exists angles $\{\theta_{ij}^{(y)}\}$ and $\{\theta_{ij}^{(z)}\}$
(where $i$ runs from $0$ to $q-1$, and $j$ from $0$ to $K-1$),
and phases $\varphi_{0},\dots,\varphi_{K-1}$, such that 
\begin{equation}
\bigoplus_{j=0}^{K-1}e^{-\text{\ensuremath{\si}}\varphi_{j}}\sigma_{\w_{j}}=\bigoplus_{j=0}^{K-1}\bigotimes_{i=0}^{q-1}\matR_{y}(\theta_{ij}^{(y)})\matR_{z}(\theta_{ij}^{(z)})\label{eq:hopauli-mux-angles}
\end{equation}

Both $\matR_{y}$ and $\matR_{z}$ rotation matrices can be implemented
using the corresponding rotation gates, ${\cal R}_{y}$ for $\matR_{y}$
and ${\cal R}_{z}$ for $\matR_{z}$. This allows us to implement
a multiplexer of higher-order Pauli matrices, up to per element scaling,
using a sequence of two 2D multiplexers, first a ${\cal R}_{z}$ multiplexer,
and then a ${\cal R}_{y}$ multiplexers. The key is the following
identity: given matrices $\text{\ensuremath{\matA}}_{ij},\matB_{ij}$
for $i=0,\dots.L-1,j=0,\dots,M-1$ of compatible sizes, we have 
\begin{eqnarray*}
\bigoplus_{i=0}^{L-1}\bigotimes_{j=0}^{M-1}(\matA_{ij}\matB_{ij}) & = & \bigoplus_{i=0}^{L-1}\left[\left(\bigotimes_{j=0}^{M-1}\matA_{ij}\right)\left(\bigotimes_{j=0}^{M-1}\matB_{ij}\right)\right]\\
 & = & \left(\bigoplus_{i=0}^{L-1}\bigotimes_{j=0}^{M-1}\matA_{ij}\right)\left(\bigoplus_{i=0}^{L-1}\bigotimes_{j=0}^{M-1}\matB_{ij}\right)
\end{eqnarray*}
where the first equality follows mixed-product property of Kronecker
products, and the second one is product of block diagonal matrices.
Applying this identity to Eq.~(\ref{eq:hopauli-mux-angles}), we
have 
\begin{equation}
\bigoplus_{j=0}^{K-1}e^{-\text{i}\varphi_{j}}\sigma_{\w_{j}}=\left(\bigoplus_{j=0}^{K-1}\bigotimes_{i=0}^{q-1}\matR_{y}(\theta_{ij}^{(y)})\right)\left(\bigoplus_{j=0}^{K-1}\bigotimes_{i=0}^{q-1}\matR_{z}(\theta_{ij}^{(z)})\right)\label{eq:all-pauli-decompose}
\end{equation}

All angles in the two multiplexers are either $0$ or $\pi$. Finally,
the phases can be corrected by a diagonal gate with diagonal values
\[
\underbrace{\e^{\text{i}\varphi_{0}},\dots,\e^{\text{i}\varphi_{0}}}_{q\text{ times}},\underbrace{\e^{\text{i}\varphi_{1}},\dots,\e^{\text{i}\varphi_{1}}}_{q\text{ times}},\dots,\underbrace{\e^{\text{i}\varphi_{K-1}},\dots,\e^{\text{i}\varphi_{K-1}}}_{q\text{ times}}
\]
Such a diagonal matrix is the Kronecker product of the identity matrix
with the diagonal matrix with distinct phases, so we need only to
implement a diagonal gate for the phases themselves. Thus, the diagonal
gate for correcting the phases, using a generic implementation, has
depth complexity of $O(\nicefrac{K}{k})$ and T-gate and T-depth of
$O(Kk)$ and $O(K)$ respectively.

Using the generic rotation multiplexer construction described in Section
\ref{subsec:rotation-multiplexers} we can construct a multiplexer
for any set of Pauli words of the same length. Since for $K$ Pauli
words of length $q$ we need two 2D multiplexers of size $q$-by-$K$,
the number of single qubit rotations in the multiplexers is $O(qK)$,
number of fanout-CNOTs is $O(K)$, and depth is $O(K)$. Combined
with the costs associated with the diagonal gate, we have total depth
of $O(Kq)$. However, in subsequent sections we make use of multiplexers
for higher order Pauli matrices corresponding to words of a given
length. As we will see in the next subsection, such multiplexers can
be implemented using highly efficient circuits.

\subsection{Multiplexing all higher-order Pauli matrices}

In this section we consider the construction of a multiplexer for
all Pauli words in $\PP_{N}$. The construction in the previous section
can be used to that end, i.e., all we need is to impose some order
on the Pauli words, $\w_{0},\dots,\w_{N^{2}-1}$, and form ${\cal PMX}(\w_{0},\dots,\w_{N^{2}-1})$.
However, we shall show that if we choose a specific order, we can
construct a highly efficient circuit, ${\cal PMX}_{n}$, which implements
a multiplexer of all Pauli words.

Fix $n$, and let $N=2^{n}$. The ordering on $\PP_{N}$ that we use
to implement the multiplexer is the lexicographical ordering on the
Pauli words (most important index is the first, i.e., on the left).
This corresponds to viewing the Pauli words as integers written in
the base-4, with the mapping $0\longleftrightarrow I,1\longleftrightarrow X,2\longleftrightarrow Y,3\longleftrightarrow Z$.

For presentation purposes, it is useful to define an auxiliary matrix
$\matP_{n}\in\Sigma^{n\times N^{2}}$ as follows. Each column in $\matP_{n}$
corresponds to a Pauli word, according to the lexicographical order.
For a column corresponding to word $\w$, we map the individual letters
to rows, with left being on the top. For example, \textbf{
\[
\mathcal{\matP}_{1}=\left[\begin{array}{cccc}
I & X & Y & Z\end{array}\right]
\]
}
\[
\mathcal{\matP}_{2}=\left[\begin{array}{cccccccccccccccc}
I & I & I & I & X & X & X & X & Y & Y & Y & Y & Z & Z & Z & Z\\
I & X & Y & Z & I & X & Y & Z & I & X & Y & Z & I & X & Y & Z
\end{array}\right]
\]
\[
\mathcal{\matP}_{3}=\left[\begin{array}{ccccccccccccccccc}
I & I & I & I & I & I & I & I & I & I & I & I & I & I & I & I & \cdots\\
I & I & I & I & X & X & X & X & Y & Y & Y & Y & Z & Z & Z & Z & \cdots\\
I & X & Y & Z & I & X & Y & Z & I & X & Y & Z & I & X & Y & Z & \cdots
\end{array}\right]
\]
A recursive formula for $\matP_{n}$ is as follows:
\begin{equation}
\mathcal{\matP}_{n}=\left[\begin{array}{cccc}
\underbrace{I\,I\,\cdots\,I}_{4^{n-1}\,\textrm{times}} & \underbrace{X\,X\,\cdots\,X}_{4^{n-1}\,\textrm{times}} & \underbrace{Y\,Y\,\cdots\,Y}_{4^{n-1}\,\textrm{times}} & \underbrace{Z\,Z\,\cdots\,Z}_{4^{n-1}\,\textrm{times}}\\
\matP_{n-1} & \matP_{n-1} & \matP_{n-1} & \matP_{n-1}
\end{array}\right]\label{eq:recursive_pq}
\end{equation}

Assume that $\w_{0},\dots,\w_{N^{2}-1}$ are all the Pauli words,
ordered according to the lexicographical order. We already seen in
the previous section that there exists matrices $\mat{\Theta}_{n}^{(P,y)}=\left[\theta_{ij}^{(P,y)}\right]_{n\times N^{2}}$
and $\mat{\Theta}_{n}^{(P,z)}=\left[\theta_{ij}^{(P,z)}\right]_{n\times N^{2}}$
and angles $\{\varphi_{j}^{(P)}\}$ (the superscript $P$ denotes
the use for multiplexing all Pauli matrices) such that 
\[
\bigoplus_{j=0}^{4^{n}-1}e^{\text{-i}\varphi_{j}^{(P)}}\sigma_{\w_{j}}=\left(\bigoplus_{j=0}^{4^{n}-1}\bigotimes_{i=0}^{n-1}\matR_{y}(\theta_{ij}^{(P,y)})\right)\left(\bigoplus_{j=0}^{4^{n}-1}\bigotimes_{i=0}^{n-1}\matR_{z}(\theta_{ij}^{(P,z)})\right)
\]
The entries matrix $\mat{\Theta}^{(P,y)}$ and $\mat{\Theta}^{(P,z)}$
are easily defined by the corresponding entries of $\matP_{n}$: 
\[
\theta_{ij}^{(P,z)}=\begin{cases}
\pi & (\matP_{n})_{ij}=X\text{ or }Z\\
0 & (\matP_{n})_{ij}=I\text{ or }Y
\end{cases}\quad\quad\theta_{ij}^{(P,y)}=\begin{cases}
\pi & (\matP_{n})_{ij}=X\text{ or }Y\\
0 & (\matP_{n})_{ij}=I\text{ or }Z
\end{cases}
\]
From the recursive formula for $\matP_{n}$ (Eq.~(\ref{eq:recursive_pq})),
we have the following recursive formula for $\mat{\Theta}^{(P,y)}$
and $\mat{\Theta}^{(P,z)}$:
\[
\mat{\Theta}_{1}^{(P,z)}=\left[\begin{array}{cccc}
0 & \pi & 0 & \pi\end{array}\right]\quad\text{and}\quad\Theta_{1}^{(P,y)}=\left[\begin{array}{cccc}
0 & \pi & \pi & 0\end{array}\right]
\]
\[
\mat{\Theta}_{n}^{(P,z)}=\left[\begin{array}{cccc}
\underbrace{0\cdots0}_{4^{n-1}\,\textrm{times}} & \underbrace{\pi\cdots\pi}_{4^{n-1}\,\textrm{times}} & \underbrace{0\cdots0}_{4^{n-1}\,\textrm{times}} & \underbrace{\pi\cdots\pi}_{4^{n-1}\,\textrm{times}}\\
\mat{\Theta}_{n-1}^{(P,z)} & \mat{\Theta}_{n-1}^{(P,z)} & \mat{\Theta}_{n-1}^{(P,z)} & \mat{\Theta}_{n-1}^{(P,z)}
\end{array}\right]
\]

\[
\mat{\Theta}_{n}^{(P,y)}=\left[\begin{array}{cccc}
\underbrace{0\cdots0}_{4^{n-1}\,\textrm{times}} & \underbrace{\pi\cdots\pi}_{4^{n-1}\,\textrm{times}} & \underbrace{\pi\cdots\pi}_{4^{n-1}\,\textrm{times}} & \underbrace{0\cdots0}_{4^{n-1}\,\textrm{times}}\\
\mat{\Theta}_{n-1}^{(P,y)} & \mat{\Theta}_{n-1}^{(P,y)} & \mat{\Theta}_{n-1}^{(P,y)} & \mat{\Theta}_{n-1}^{(P,y)}
\end{array}\right]
\]
For example, 
\[
\mat{\Theta}_{2}^{(P,z)}=\left[\begin{array}{cccccccccccccccc}
0 & 0 & 0 & 0 & \pi & \pi & \pi & \pi & 0 & 0 & 0 & 0 & \pi & \pi & \pi & \pi\\
0 & \pi & 0 & \pi & 0 & \pi & 0 & \pi & 0 & \pi & 0 & \pi & 0 & \pi & 0 & \pi
\end{array}\right]
\]

\[
\mat{\Theta}_{2}^{(P,y)}=\left[\begin{array}{cccccccccccccccc}
0 & 0 & 0 & 0 & \pi & \pi & \pi & \pi & \pi & \pi & \pi & \pi & 0 & 0 & 0 & 0\\
0 & \pi & \pi & 0 & 0 & \pi & \pi & 0 & 0 & \pi & \pi & 0 & 0 & \pi & \pi & 0
\end{array}\right]
\]

\subsubsection{\label{subsec:Efficient-Rotation-Multiplexers}Ultra-sparsity of
$\hat{\protect\mat{\Theta}}_{n}^{(P,y)}$ and $\hat{\protect\mat{\Theta}}_{n}^{(P,z)}$}

Key to an efficient multiplexer for all Pauli matrices is the ultra
sparsity of $\hat{\mat{\Theta}}_{n}^{(P,y)}$ and $\hat{\mat{\Theta}}_{n}^{(P,z)}$.
Recall that when using matrix $\mat{\Theta}$ with $M$ columns in
a rotation multiplexer, we set up an array of rotation gates with
CNOTs between them, where the angles are given by $\hat{\mat{\Theta}}=\frac{1}{\sqrt{M}}\mat{\Theta}\mat H_{M}\matG_{M}$
with $\mat H_{M}$ being the $M\times M$ Walsh-Hadamard matrix, and
$\matG_{M}$ is the permutation matrix that transforms binary ordering
to Gray code ordering (see Fact~\ref{fact:GN-fact}). For multiplexing
all Pauli words, for the two multiplexers we have $\hat{\mat{\Theta}}_{n}^{(P,y)}=\frac{1}{N}\mat{\Theta}_{n}^{(P,y)}\mat H_{N^{2}}\matG_{N^{2}}$
and $\hat{\mat{\Theta}}_{n}^{(P,z)}=\frac{1}{N}\mat{\Theta}_{n}^{(P,z)}\mat H_{N^{2}}\matG_{N^{2}}$
. The following proposition shows that these two matrices are ultra
sparse and contain only $\pm\nicefrac{\pi}{2}$:
\begin{prop}
\label{prop:direct-angels}The following holds: (recall that we use
$0$ based indexing)
\end{prop}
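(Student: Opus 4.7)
The plan is to prove the proposition by induction on $n$, combining the recursive structure of the angle matrices $\mat{\Theta}_{n}^{(P,z)}$ and $\mat{\Theta}_{n}^{(P,y)}$ (given just above the proposition statement) with the Kronecker factorization $\mat H_{N^{2}}=\mat H_{4}\otimes\mat H_{N^{2}/4}$ of the Walsh--Hadamard matrix. Since the Gray permutation $\matG_{N^{2}}$ only permutes columns without altering values, the argument is cleanest if one first establishes the $\pm\pi/2$ sparsity pattern for the pre-Gray products $\tilde{\mat{\Theta}}_{n}^{(P,\ast)}\coloneqq\tfrac{1}{N}\mat{\Theta}_{n}^{(P,\ast)}\mat H_{N^{2}}$, and only then applies $\matG_{N^{2}}$ as a relabeling of column indices in a single final step. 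The base case $n=1$ is dispatched by direct computation: $[0,\pi,0,\pi]\mat H_{4}=[2\pi,-2\pi,0,0]$ and $[0,\pi,\pi,0]\mat H_{4}=[2\pi,0,0,-2\pi]$, and after the prescribed normalization each single row has exactly two nonzero entries equal to $\pm\pi/2$ at the claimed positions.

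For the inductive step on $\mat{\Theta}_{n}^{(P,z)}$, I will split the matrix into its top row and its bottom $n-1$ rows using the recursion given in the text. The top row factorizes as $[0,\pi,0,\pi]\otimes\mathbf{1}_{4^{n-1}}^{\T}$; applying the mixed-product identity $(\matA\otimes\matB)(\matC\otimes\matD)=\matA\matC\otimes\matB\matD$ with $\mat H_{N^{2}}=\mat H_{4}\otimes\mat H_{N^{2}/4}$ gives $([0,\pi,0,\pi]\mat H_{4})\otimes(\mathbf{1}_{4^{n-1}}^{\T}\mat H_{4^{n-1}})$. The second factor is proportional to the first standard basis row vector, since every non-first column of $\mat H_{4^{n-1}}$ is orthogonal to the all-ones vector, so the top row of $\tilde{\mat{\Theta}}_{n}^{(P,z)}$ has exactly two nonzero entries, located at columns $0$ and $4^{n-1}$, with values that normalize to $\pm\pi/2$. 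The bottom $n-1$ rows take the Kronecker form $\mathbf{1}_{4}^{\T}\otimes\mat{\Theta}_{n-1}^{(P,z)}$, and a second application of the mixed-product identity gives $(\mathbf{1}_{4}^{\T}\mat H_{4})\otimes(\mat{\Theta}_{n-1}^{(P,z)}\mat H_{4^{n-1}})$; because $\mathbf{1}_{4}^{\T}\mat H_{4}$ is again a delta on column $0$ of $\mat H_{4}$, the whole contribution is confined to the leftmost $4^{n-1}$ columns and, after matching normalization factors, equals exactly $\tilde{\mat{\Theta}}_{n-1}^{(P,z)}$. Invoking the inductive hypothesis then closes the argument for $\tilde{\mat{\Theta}}_{n}^{(P,z)}$.

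The case of $\mat{\Theta}_{n}^{(P,y)}$ is structurally identical: the only change is the leaf computation $[0,\pi,\pi,0]\mat H_{4}=[2\pi,0,0,-2\pi]$, which places the second nonzero of the top row at column $3\cdot 4^{n-1}$ rather than at $4^{n-1}$; the bottom rows satisfy the same recursive reduction as in the $z$ case. Once both pre-Gray structures are established, a final application of $\matG_{N^{2}}$ moves each nonzero at column $j$ to column $j\oplus(j\gg 1)$ without altering its value, recovering the proposition's explicit coordinates. The main technical obstacle I anticipate is bookkeeping rather than ideas: matching normalization constants across recursive levels so the inductive hypothesis plugs in cleanly, and then translating pre-Gray column indices into their Gray-code images at the end. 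Isolating the Kronecker recursion from the Gray-code combinatorics, as planned above, keeps each stage of the proof transparent and self-contained.
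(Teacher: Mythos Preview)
Your approach is correct and somewhat cleaner than the paper's, though built on the same inductive skeleton. The paper carries the Gray permutation through the induction by decomposing $\matG_{4^{n}}$ into a $4\times4$ block structure (using $\matG_{M}=\matG_{M/2}\oplus\matG_{M/2}\matJ_{M/2}$ twice) and multiplies this simultaneously with the $4\times4$ block form of $\mat H_{4^{n}}$; the resulting block products are then simplified row by row. Your plan instead exploits the Kronecker factorizations $\mat H_{N^{2}}=\mat H_{4}\otimes\mat H_{N^{2}/4}$ and the Kronecker form of each row block of $\mat{\Theta}_{n}^{(P,\ast)}$, so that the mixed-product identity does the work; deferring the Gray relabeling to a single final step keeps the induction free of the messier $\matG$ recursion. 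Both routes arrive at the same intermediate object (the bottom rows reduce exactly to the level-$(n-1)$ matrix supported on the first $4^{n-1}$ columns), so the difference is organizational rather than conceptual, but yours is tidier.

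One small correction to watch for in the final step: right-multiplication by $\matG_{N^{2}}$ moves a nonzero at pre-Gray column $c$ to post-Gray column $g^{-1}(c)$, not to $g(c)=c\oplus(c\gg1)$ as you wrote (check against the explicit $\matG_{4}$ in the paper's base-case computation). This does not affect the argument, and in fact $g^{-1}(4^{n-i-1})=2\cdot4^{n-i-1}-1$ and $g^{-1}(3\cdot4^{n-i-1})=2\cdot4^{n-i-1}$ come out exactly as claimed, but you should state the direction correctly when you write it out.
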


\begin{enumerate}
\item All entries in the first column of $\hat{\mat{\Theta}}_{n}^{(P,y)}$
and $\hat{\mat{\Theta}}_{n}^{(P,z)}$ have value $\nicefrac{\pi}{2}$.
\item For $\hat{\mat{\Theta}}_{n}^{(P,z)}$, each row contains one other
non-zero. At row $i$ it is in column $2\cdot4^{n-i-1}-1$ with value
$-\nicefrac{\pi}{2}$.
\item For $\hat{\mat{\Theta}}_{n}^{(P,y)}$, each row contains one other
non-zero. At row $i$ it is in column $2\cdot4^{n-i-1}$ with value
$-\nicefrac{\pi}{2}$.
\end{enumerate}
\begin{proof}
For the first column, note that the each row in $\mat{\Theta}_{n}^{(P,z)}$
and $\mat{\Theta}_{n}^{(P.y)}$ consists of exactly $N^{2}/2$ entries
non-zero entries, each of them equal to $\pi$. The first column of
$\mat H_{N^{2}/2}\matG_{N^{2}}$ is the all ones vector divided by
$N$, since this the value of the first column in the Walsh-Hadamard
matrix and BRGC does not change the location of the first index. So,
the first column in $\hat{\mat{\Theta}}_{n}^{(P,z)}$ and $\hat{\mat{\Theta}}_{n}^{(P,y)}$
is equal to the sum of entries in the rows of $\mat{\Theta}_{n}^{(P,z)}$
and $\mat{\Theta}_{n}^{(P,y)}$ divided by $N^{2}$. Together, we
see that the value is indeed equal to $\pi/2$.

For other non-zero entries, we prove using induction. The base case
follows from simple calculations:
\begin{align*}
\hat{\mat{\Theta}}_{1}^{(P,z)} & =\frac{1}{\sqrt{4}}\left[\begin{array}{cccc}
0 & \pi & 0 & \pi\end{array}\right]\underbrace{\left(\frac{1}{\sqrt{4}}\left[\begin{array}{cccc}
1 & 1 & 1 & 1\\
1 & -1 & 1 & -1\\
1 & 1 & -1 & -1\\
1 & -1 & -1 & 1
\end{array}\right]\right)}_{\mat H_{4}}\underbrace{\left[\begin{array}{cccc}
1 & 0 & 0 & 0\\
0 & 1 & 0 & 0\\
0 & 0 & 0 & 1\\
0 & 0 & 1 & 0
\end{array}\right]}_{\matG_{4}}\\
 & =\frac{1}{4}\left[\begin{array}{cccc}
0 & \pi & 0 & \pi\end{array}\right]\left[\begin{array}{cccc}
1 & 1 & 1 & 1\\
1 & -1 & -1 & 1\\
1 & 1 & -1 & -1\\
1 & -1 & 1 & -1
\end{array}\right]\\
 & =\left[\begin{array}{cccc}
\nicefrac{\pi}{2} & -\nicefrac{\pi}{2} & 0 & 0\end{array}\right]
\end{align*}
Likewise,
\begin{align*}
\hat{\mat{\Theta}}_{1}^{(P,y)} & =\frac{1}{4}\left[\begin{array}{cccc}
0 & \pi & \pi & 0\end{array}\right]\left[\begin{array}{cccc}
1 & 1 & 1 & 1\\
1 & -1 & -1 & 1\\
1 & 1 & -1 & -1\\
1 & -1 & 1 & -1
\end{array}\right]\\
 & =\left[\begin{array}{cccc}
\nicefrac{\pi}{2} & 0 & -\nicefrac{\pi}{2} & 0\end{array}\right]
\end{align*}

For the inductive step, we use the following well known formula for
$\matG_{M}$ when $M$ is a power of $2$. Let $\matJ_{M}$ denote
the $M$-by-$M$ index reversal matrix, i.e., the $M$-by-$M$ matrix
with $1$ on the anti-diagonal. Then, (see~\cite[Section 5]{oteo2005fractal})
\[
\matG_{M}=\matG_{\nicefrac{M}{2}}\oplus\matG_{\nicefrac{M}{2}}\matJ_{\nicefrac{M}{2}}
\]
Thus, we have
\begin{align*}
\matG_{4^{n}} & =\begin{bmatrix}\matG_{\nicefrac{4^{n}}{2}} & \mat 0\\
\mat 0 & \matG_{\nicefrac{4^{n}}{2}}\matJ_{\nicefrac{4^{n}}{2}}
\end{bmatrix}\\
 & =\left[\begin{array}{cccc}
\matG_{4^{n-1}}\\
 & \matG_{4^{n-1}}\matJ_{4^{n-1}}\\
 &  & \matG_{4^{n-1}}\\
 &  &  & \matG_{4^{n-1}}\matJ_{4^{n-1}}
\end{array}\right]\left[\begin{array}{cccc}
\matI_{4^{n-1}}\\
 & \matI_{4^{n-1}}\\
 &  &  & \matJ_{4^{n-1}}\\
 &  & \matJ_{4^{n-1}}
\end{array}\right]\\
 & =\left[\begin{array}{cccc}
\matG_{4^{n-1}} & \mat 0 & \mat 0 & \mat 0\\
\mat 0 & \matG_{4^{n-1}}\matJ_{4^{n-1}} & \mat 0 & \mat 0\\
\mat 0 & \mat 0 & \mat 0 & \matG_{4^{n-1}}\matJ_{4^{n-1}}\\
\mat 0 & \mat 0 & \matG_{4^{n-1}} & \mat 0
\end{array}\right]
\end{align*}
using the fact that $\matJ_{M}^{2}=\matI_{M}$. The recursive formula
for for $\mat H_{4^{n}}$ implies that:
\[
\mat H_{4^{n}}=\frac{1}{\sqrt{2}}\begin{bmatrix}\mat H_{\nicefrac{4^{n}}{2}} & \mat H_{\nicefrac{4^{n}}{2}}\\
\mat H_{\nicefrac{4^{n}}{2}} & -\mat H_{\nicefrac{4^{n}}{2}}
\end{bmatrix}=\frac{1}{2}\left[\begin{array}{cccc}
\mat H_{4^{n-1}} & \mat H_{4^{n-1}} & \mat H_{4^{n-1}} & \mat H_{4^{n-1}}\\
\mat H_{4^{n-1}} & -\mat H_{4^{n-1}} & \mat H_{4^{n-1}} & -\mat H_{4^{n-1}}\\
\mat H_{4^{n-1}} & \mat H_{4^{n-1}} & -\mat H_{4^{n-1}} & -\mat H_{4^{n-1}}\\
\mat H_{4^{n-1}} & -\mat H_{4^{n-1}} & -\mat H_{4^{n-1}} & \mat H_{4^{n-1}}
\end{array}\right]
\]
So,{\scriptsize
\begin{align*}
\hat{\mat{\Theta}}_{n}^{(P,z)}=\frac{1}{N}\Theta_{n}^{(z)}\mat H_{4^{n}}\matG_{4^{n}} & =\frac{1}{2^{n+1}}\left[\begin{array}{cccc}
\mat 0_{1\times4^{n-1}} & \pi\e_{1\times4^{n-1}} & \mat 0_{1\times4^{n-1}} & \pi\e_{1\times4^{n-1}}\\
\mat{\Theta}_{n-1}^{(P,z)} & \mat{\Theta}_{n-1}^{(P,z)} & \mat{\Theta}_{n-1}^{(P,z)} & \mat{\Theta}_{n-1}^{(P,z)}
\end{array}\right]\\
 & \,\,\,\,\,\,\,\cdot\left[\begin{array}{cccc}
\mat H_{4^{n-1}} & \mat H_{4^{n-1}} & \mat H_{4^{n-1}} & \mat H_{4^{n-1}}\\
\mat H_{4^{n-1}} & -\mat H_{4^{n-1}} & \mat H_{4^{n-1}} & -\mat H_{4^{n-1}}\\
\mat H_{4^{n-1}} & \mat H_{4^{n-1}} & -\mat H_{4^{n-1}} & -\mat H_{4^{n-1}}\\
\mat H_{4^{n-1}} & -\mat H_{4^{n-1}} & -\mat H_{4^{n-1}} & \mat H_{4^{n-1}}
\end{array}\right]\left[\begin{array}{cccc}
\matG_{4^{n-1}} & \mat 0 & \mat 0 & \mat 0\\
\mat 0 & \matG_{4^{n-1}}\matJ_{4^{n-1}} & \mat 0 & \mat 0\\
\mat 0 & \mat 0 & \mat 0 & \matG_{4^{n-1}}\matJ_{4^{n-1}}\\
\mat 0 & \mat 0 & \matG_{4^{n-1}} & \mat 0
\end{array}\right]\\
 & =\frac{1}{2^{n+1}}\left[\begin{array}{cccc}
\mat 0_{1\times4^{n-1}} & \pi\e_{1\times4^{n-1}} & \mat 0_{1\times4^{n-1}} & \pi\e_{1\times4^{n-1}}\\
\mat{\Theta}_{n-1}^{(P,z)} & \mat{\Theta}_{n-1}^{(P,z)} & \mat{\Theta}_{n-1}^{(P,z)} & \mat{\Theta}_{n-1}^{(P,z)}
\end{array}\right]\\
 & \,\,\,\,\,\,\,\cdot\begin{bmatrix}\mat H_{4^{n-1}}\matG_{4^{n-1}} & \mat H_{4^{n-1}}\matG_{4^{n-1}}\matJ_{4^{n-1}} & \mat H_{4^{n-1}}\matG_{4^{n-1}} & \mat H_{4^{n-1}}\matG_{4^{n-1}}\matJ_{4^{n-1}}\\
\mat H_{4^{n-1}}\matG_{4^{n-1}} & -\mat H_{4^{n-1}}\matG_{4^{n-1}}\matJ_{4^{n-1}} & -\mat H_{4^{n-1}}\matG_{4^{n-1}} & \mat H_{4^{n-1}}\matG_{4^{n-1}}\matJ_{4^{n-1}}\\
\mat H_{4^{n-1}}\matG_{4^{n-1}} & \mat H_{4^{n-1}}\matG_{4^{n-1}}\matJ_{4^{n-1}} & -\mat H_{4^{n-1}}\matG_{4^{n-1}} & -\mat H_{4^{n-1}}\matG_{4^{n-1}}\matJ_{4^{n-1}}\\
\mat H_{4^{n-1}}\matG_{4^{n-1}} & -\mat H_{4^{n-1}}\matG_{4^{n-1}}\matJ_{4^{n-1}} & \mat H_{4^{n-1}}\matG_{4^{n-1}} & -\mat H_{4^{n-1}}\matG_{4^{n-1}}\matJ_{4^{n-1}}
\end{bmatrix}\\
 & =\frac{1}{2^{n+1}}\left[\begin{array}{cccc}
2\pi\e_{1\times4^{n-1}}\mat H_{4^{n-1}}\matG_{4^{n-1}} & -2\pi\e_{1\times4^{n-1}}\mat H_{4^{n-1}}\matG_{4^{n-1}}\matJ_{4^{n-1}} & \mat 0_{1\times4^{n-1}} & \mat 0_{1\times4^{n-1}}\\
4\Theta_{n-1}^{(P,z)}\mat H_{4^{n-1}} & \mat 0_{1\times4^{n-1}} & \mat 0_{1\times4^{n-1}} & \mat 0_{1\times4^{n-1}}
\end{array}\right]\\
 & =\frac{1}{2^{n+1}}\left[\begin{array}{cccc}
\underbrace{\begin{array}{cccc}
2\pi\sqrt{4^{n-1}} & 0 & \cdots & 0\end{array}}_{4^{n-1}\,\textrm{length}} & \underbrace{\begin{array}{cccc}
0 & \cdots & 0 & -2\pi\sqrt{4^{n-1}}\end{array}}_{4^{n-1}\,\textrm{length}} & \underbrace{\begin{array}{ccc}
0 & \cdots & 0\end{array}}_{4^{n-1}\,\textrm{length}} & \underbrace{\begin{array}{ccc}
0 & \cdots & 0\end{array}}_{4^{n-1}\,\textrm{length}}\\
4\mat{\Theta}_{n-1}^{(P,z)}\mat H_{4^{n-1}}\matG_{4^{n-1}} & \mat 0 & \mat 0 & \mat 0
\end{array}\right]\\
 & =\left[\begin{array}{cccc}
\underbrace{\begin{array}{cccc}
\nicefrac{\pi}{2} & 0 & \cdots & 0\end{array}}_{4^{n-1}\,\textrm{length}} & \underbrace{\begin{array}{cccc}
0 & \cdots & 0 & -\nicefrac{\pi}{2}\end{array}}_{4^{n-1}\,\textrm{length}} & \underbrace{\begin{array}{ccc}
0 & \cdots & 0\end{array}}_{4^{n-1}\,\textrm{length}} & \underbrace{\begin{array}{ccc}
0 & \cdots & 0\end{array}}_{4^{n-1}\,\textrm{length}}\\
\frac{1}{2^{n-1}}\mat{\Theta}_{q-1}^{(P,z)}\mat H_{4^{n-1}}\matG_{4^{n-1}} & \mat 0 & \mat 0 & \mat 0
\end{array}\right]
\end{align*}
}In the above, we used the fact that for any $M$ which is a power
of 2, we have $\e_{1\times M}\mat H_{M}=\left[\begin{array}{cccc}
\sqrt{M} & 0 & \cdots & 0\end{array}\right]$ (this follows from the fact that the first column of $\sqrt{M}\mat H_{M}$
is the all ones vector, while other columns are balanced in the number
of $+1$ and $-1$). This implies that $\e_{1\times M}\mat H_{M}\matG_{4^{n-1}}=\left[\begin{array}{cccc}
\sqrt{M} & 0 & \cdots & 0\end{array}\right]$ and $\e_{1\times M}\mat H_{M}\matG_{4^{n-1}}\matJ_{4^{n-1}}=\left[\begin{array}{cccc}
0 & \cdots & 0 & \sqrt{M}\end{array}\right]$ . We see that the claim holds for the first row, i.e., index $i=0$.
The claim holds for the rest of the rows by induction, that gives
the structure of $\frac{1}{2^{n-1}}\mat{\Theta}_{n-1}^{(z)}\mat H_{4^{n-1}}\matG_{4^{n-1}}$.

The proof for $\hat{\mat{\Theta}}_{n}^{(P,y)}$ follows the same logic,
as follows:

{\scriptsize
\begin{align*}
\hat{\mat{\Theta}}_{n}^{(P,y)}=\frac{1}{N}\Theta_{n}^{(y)}\mat H_{4^{n}}\matG_{4^{n}} & =\frac{1}{2^{n+1}}\left[\begin{array}{cccc}
\mat 0_{1\times4^{n-1}} & \pi\e_{1\times4^{n-1}} & \pi\e_{1\times4^{n-1}} & \mat 0_{1\times4^{n-1}}\\
\mat{\Theta}_{n-1}^{(P,y)} & \mat{\Theta}_{n-1}^{(P,y)} & \mat{\Theta}_{n-1}^{(P,y)} & \mat{\Theta}_{n-1}^{(P,y)}
\end{array}\right]\\
 & \,\,\,\,\,\,\,\,\,\cdot\begin{bmatrix}\mat H_{4^{n-1}}\matG_{4^{n-1}} & \mat H_{4^{n-1}}\matG_{4^{n-1}}\matJ_{4^{n-1}} & \mat H_{4^{n-1}}\matG_{4^{n-1}} & \mat H_{4^{n-1}}\matG_{4^{n-1}}\matJ_{4^{n-1}}\\
\mat H_{4^{n-1}}\matG_{4^{n-1}} & -\mat H_{4^{n-1}}\matG_{4^{n-1}}\matJ_{4^{n-1}} & -\mat H_{4^{n-1}}\matG_{4^{n-1}} & \mat H_{4^{n-1}}\matG_{4^{n-1}}\matJ_{4^{n-1}}\\
\mat H_{4^{n-1}}\matG_{4^{n-1}} & \mat H_{4^{n-1}}\matG_{4^{n-1}}\matJ_{4^{n-1}} & -\mat H_{4^{n-1}}\matG_{4^{n-1}} & -\mat H_{4^{n-1}}\matG_{4^{n-1}}\matJ_{4^{n-1}}\\
\mat H_{4^{n-1}}\matG_{4^{n-1}} & -\mat H_{4^{n-1}}\matG_{4^{n-1}}\matJ_{4^{n-1}} & \mat H_{4^{n-1}}\matG_{4^{n-1}} & -\mat H_{4^{n-1}}\matG_{4^{n-1}}\matJ_{4^{n-1}}
\end{bmatrix}\\
 & =\frac{1}{2^{n+1}}\left[\begin{array}{cccc}
2\pi\e_{1\times n}\mat H_{4^{n-1}}\matG_{4^{q-1}} & \mat 0_{1\times4^{n-1}} & -2\pi\e_{1\times4^{n-1}}\mat H_{4^{n-1}}\matG_{4^{n-1}} & \mat 0_{1\times4^{n-1}}\\
4\Theta_{n-1}^{(P.y)}\mat H_{4^{n-1}} & \mat 0_{1\times4^{n-1}} & \mat 0_{1\times4^{n-1}} & \mat 0_{1\times4^{n-1}}
\end{array}\right]\\
 & =\frac{1}{2^{n+1}}\left[\begin{array}{cccc}
\underbrace{\begin{array}{cccc}
2\pi\sqrt{4^{n-1}} & 0 & \cdots & 0\end{array}}_{4^{n-1}\,\textrm{length}} & \underbrace{\begin{array}{ccc}
0 & \cdots & 0\end{array}}_{4^{n-1}\,\textrm{length}} & \underbrace{\begin{array}{cccc}
-2\pi\sqrt{4^{n-1}} & 0 & \cdots & 0\end{array}} & \underbrace{\begin{array}{ccc}
0 & \cdots & 0\end{array}}_{4^{n-1}\,\textrm{length}}\\
4\mat{\Theta}_{n-1}^{(P.y)}\mat H_{4^{n-1}}\matG_{4^{n-1}} & \mat 0 & \mat 0 & \mat 0
\end{array}\right]\\
 & =\left[\begin{array}{cccc}
\underbrace{\begin{array}{cccc}
\nicefrac{\pi}{2} & 0 & \cdots & 0\end{array}}_{4^{n-1}\,\textrm{length}} & \underbrace{\begin{array}{ccc}
0 & \cdots & 0\end{array}}_{4^{n-1}\,\textrm{length}} & \underbrace{\begin{array}{cccc}
-\nicefrac{\pi}{2} & 0 & \cdots & 0\end{array}}_{4^{n-1}\,\textrm{length}} & \underbrace{\begin{array}{ccc}
0 & \cdots & 0\end{array}}_{4^{n-1}\,\textrm{length}}\\
\frac{1}{2^{n-1}}\mat{\Theta}_{n-1}^{(P.y)}\mat H_{4^{n-1}}\matG_{4^{n-1}} & \mat 0 & \mat 0 & \mat 0
\end{array}\right]
\end{align*}
}{\scriptsize\par}
\end{proof}

\subsubsection{\label{subsec:Ultra-low-depth}Ultra low depth rotation multiplexers
for $\protect\mat{\Theta}_{n}^{(P,y)}$ and $\protect\mat{\Theta}_{n}^{(P,z)}$}

By Proposition~\ref{prop:direct-angels}, the angles matrices $\hat{\mat{\Theta}}_{n}^{(P,z)}$
and $\hat{\mat{\Theta}}_{n}^{(P,y)}$ are ultra sparse (each row has
only two nonzero entries $\pm\pi/2$). In this section, we leverage
this to construct two families of efficient circuits, $2d{\cal PMX}_{y,n}$
and $2d{\cal PMX}_{z,n}$ (where $n=1,2,\dots$ is a parameter), such
that 
\begin{equation}
2d{\cal PMX}_{y,n}\in{\cal MX}_{y}(\mat{\Theta}_{n}^{(P,y)}),\quad2d{\cal PMX}_{z,n}\in{\cal MX}_{z}(\mat{\Theta}_{n}^{(P,z)})\label{eq:pmx-spec}
\end{equation}

Derivation of $2d{\cal PMX}_{y,n}$ and $2d{\cal PMX}_{z,n}$ is based
on two well known circuit rewriting rules:
\begin{fact}
Consider a circuit ${\cal U}$ which contains either a ${\cal R}_{y}$
or ${\cal R}_{z}$ gate with angle $0$. Then the circuit obtained
by erasing that gate from ${\cal U}$ is equivalent to ${\cal U}$.
\end{fact}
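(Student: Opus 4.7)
The plan is to verify this directly from the matrix definitions of the rotation gates given just above the statement. Specifically, ${\cal R}_y(0)$ and ${\cal R}_z(0)$ correspond to the matrices
\[
\matR_y(0)=\begin{bmatrix} \cos 0 & -\sin 0 \\ \sin 0 & \cos 0 \end{bmatrix}=\matI_2, \qquad \matR_z(0)=\begin{bmatrix} e^{-\si \cdot 0} & 0 \\ 0 & e^{\si \cdot 0}\end{bmatrix}=\matI_2,
\]
so both gates, when the angle is $0$, act as the single-qubit identity.

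Next, I would describe how this local identity propagates to the full circuit. Suppose the gate in question acts on qubit $j$ of an $n$-qubit system, and let $\cirU'$ denote the circuit obtained from $\cirU$ by deleting that gate. Writing $\cirU$ as a product of its timesteps $\cirU = {\cal L}_T\cdots{\cal L}_{t+1}\cdot {\cal G}\cdot {\cal L}_{t-1}\cdots{\cal L}_1$, where ${\cal G}$ is the ${\cal R}_y(0)$ (or ${\cal R}_z(0)$) gate on qubit $j$ and the other operators in timestep $t$ act on the remaining qubits as some operator ${\cal L}'_t$, we have $\matM({\cal G}\otimes {\cal L}'_t)=\matI_2\otimes \matM({\cal L}'_t)$, since tensor product is bilinear. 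Thus this timestep coincides with the one in $\cirU'$ where the ${\cal R}$ gate has been erased but ${\cal L}'_t$ remains. Multiplying the layers together yields $\matM(\cirU)=\matM(\cirU')$, which is exactly the definition of circuit equivalence $\cirU\cong\cirU'$.

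There is no real obstacle here; the statement is essentially a restatement of the fact that an identity gate can be inserted or removed anywhere without changing the induced unitary. The only mild subtlety is the bookkeeping that an ``erased'' gate really corresponds to tensoring with $\matI_2$ on the appropriate wire, which we handle by the decomposition above.
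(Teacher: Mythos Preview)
Your proof is correct. The paper states this as a ``Fact'' without proof, treating it as a well-known elementary circuit rewriting rule; your argument---verifying directly that $\matR_y(0)=\matR_z(0)=\matI_2$ and then observing that an identity gate on one wire contributes $\matI_2$ to the tensor product and hence can be removed---is exactly the standard justification and matches what the paper is implicitly appealing to.
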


\begin{fact}
\label{fact:cancell-CNOTs}Consider a circuit ${\cal U}$. Suppose
that in $\cirU$ there are two CNOTs with the same target and control,
and in which between them there are only CNOTs with the same target
and different control, or gates that operate on neither the target
or control. Then the circuit $\cirU'$ obtained by removing the two
aforementioned CNOTs from $\cirU$ is equivalent to ${\cal U}$ .
\end{fact}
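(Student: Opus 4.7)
The plan is to reduce the claim to two elementary identities combined with a sliding argument. The first identity is the involution property $\text{CX}(c,t)\cdot\text{CX}(c,t)\cong\mathcal{I}$, which is immediate from $\matM(\text{CX}(c,t))=\ket 0\bra 0_c\otimes\matI_t+\ket 1\bra 1_c\otimes\matX_t$ and $\matX_t^2=\matI_t$. The second is that each of the two types of intermediate gates allowed by the hypothesis commutes with $\text{CX}(c,t)$.

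For a gate $\mathcal{G}$ whose support is disjoint from $\{c,t\}$, commutation is trivial: $\text{CX}(c,t)$ factors as $\text{CX}(c,t)\otimes\mathcal{I}_{\text{rest}}$ while $\mathcal{G}$ factors as $\mathcal{I}_{\{c,t\}}\otimes\mathcal{G}$, and tensor factors on disjoint qubit sets commute. For a CNOT $\text{CX}(c',t)$ with $c'\neq c$ sharing the same target $t$, I would verify commutation by expanding both products in spectral form, using that the projectors on $c$ and $c'$ act on different qubits and $\matX_t$ commutes with itself; equivalently, acting on an arbitrary computational basis vector $\ket{b_c,b_{c'},b_t}$ yields $\ket{b_c,b_{c'},b_t\oplus b_c\oplus b_{c'}}$ regardless of which of the two CNOTs is applied first.

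With these two commutations in hand, I would argue by induction on the number of intermediate gates between the two boundary CNOTs: at each step, slide the left boundary $\text{CX}(c,t)$ past its immediate right neighbor (which it commutes with), preserving the induced unitary at every step. After finitely many such swaps the two boundary CNOTs sit adjacent to each other, where the involution identity collapses them to the identity. Removing that trivial segment yields $\mathcal{U}'$ with $\matM(\mathcal{U}')=\matM(\mathcal{U})$, i.e., $\mathcal{U}'\cong\mathcal{U}$.

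There is no real obstacle here; this is standard circuit-rewriting folklore. The only subtlety worth flagging in the write-up is to check that the hypothesis "between them there are only CNOTs with the same target and different control, or gates that operate on neither the target nor the control" is exactly what the two commutation identities cover. Anything outside these cases (for example, a CNOT whose control is $c$, or an arbitrary two-qubit gate touching the control wire $c$) would in general obstruct the slide, so it is worth stating this delimitation explicitly to justify why the inductive step goes through for every admissible intermediate gate.
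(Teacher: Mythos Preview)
Your proposal is correct and follows the standard argument. The paper does not actually prove this statement: it is recorded as a \emph{Fact} (well-known circuit-rewriting folklore) and invoked without justification, so there is no paper proof to compare against. Your write-up supplies exactly the elementary commutation-and-cancellation argument one would expect.
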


We now show that repeated application of these two facts starting
with $2d{\cal MR}_{y}(\mat{\Theta}_{n}^{(P,y)})\in{\cal MX}_{y}(\mat{\Theta}_{n}^{(P,y)})$
and $2d{\cal MR}_{z}(\mat{\Theta}_{n}^{(P,z)})\in{\cal MX}_{z}(\mat{\Theta}_{z}^{(P,z)})$
results in the following circuits, which are our proposed $2d{\cal PMX}_{y,n}$
and $2d{\cal PMX}_{z,n}$. The circuits $2d{\cal PMX}_{y,n}$ and
$2d{\cal PMX}_{z,n}$ have $3n$ qubits. Of these, $n$ are target
qubits, and are labeled $t_{0},\dots,t_{n-1}$ (each corresponding
to a row in $\mat{\Theta}_{n}^{(P,\cdot)}$), and the rest are control
qubits labeled $c_{0},\dots,c_{2n-1}$. On each target we operate
with rotation gates, or with CNOTs with one or two specific control
qubits. Each control qubit is connected to exactly one target qubit,
and some control qubits are empty. Thus, we can describe the circuits
by just detailing the gates on each target qubit. These description
is as follows:
\begin{itemize}
\item \textbf{Target qubit $t_{i}$ in $2d{\cal PMX}_{z,n}$} \textbf{,
$i=0,1,\ldots,n-1$:}
\begin{equation}
2d{\cal PMX}_{z,n}\Big|_{t_{i}}={\cal R}_{z,t_{i}}\left(\nicefrac{\pi}{2}\right)\cdot\text{CX}(c_{2i+1},t_{i})\cdot{\cal R}_{z,t_{i}}\left(\nicefrac{-\pi}{2}\right)\cdot\text{CX}(c_{2i+1},t_{i}).\label{eq:rz-row}
\end{equation}
\item \textbf{Target qubit $t_{i}$ in }$2d{\cal PMX}_{y,n}$\textbf{, $i=0,1,\ldots,n-1$:}
\begin{equation}
2d{\cal PMX}_{y,n}\Big|_{t_{i}}={\cal R}_{y,t_{i}}\left(\nicefrac{\pi}{2}\right)\cdot\text{CX}(c_{2i+1},t_{i})\cdot\text{CX}(c_{2i},t)\cdot{\cal R}_{y,t_{i}}\left(\nicefrac{-\pi}{2}\right)\cdot\text{CX}(c_{2i+1},t_{i})\cdot\text{CX}(c_{2i},t_{i}).\label{eq:ry-row}
\end{equation}
\end{itemize}
See Figure~\ref{fig:mux_r} for a graphical illustration. Note that
each data qubit $t_{j}$ is controlled by one or both of $\{c_{2j},c_{2j+1}\}$,
and that the total number of CNOTs is $\#\mathrm{\text{CNOT}}(2d{\cal PMX}_{z,n})=2n$
and $\#\mathrm{\text{CNOT}}(2d{\cal PMX}_{z,n})=4n$. Clearly, the
gate count is $O(n)$ (we have $2n$ rotation angles and $2n$ or
$4n$ CNOT gates for $z$ or $y$, respectively). The circuit depth
is $O(1)$ (each qubit has $2$ rotations and at most $2$ or $4$
CNOT gates, so the critical path is at most $4$ or $6$). The T-count
and T-depth are $0$ (rotations with angles $\pm\nicefrac{\pi}{2}$
are Clifford gates). The classical cost construction is $O(n)$ (no
angle computations are needed since all rotations are $\pm\nicefrac{\pi}{2}$).

We now prove that these constructed circuits indeed implement the
desired multiplexers, i.e., Eqs. (\ref{eq:pmx-spec}) hold. We start
with two auxiliary lemmas. (See Section~\ref{subsec:1d-2d-rotation-multiplexers}
for the definition of ${\cal F}_{m}^{(d)}$.)
\begin{lem}
\label{lem:zero-block} Fix $m$. For any depth $d\in\{0,\dots,m-1\}$,
\[
{\cal F}_{m}^{(d)}(\underbrace{0,\ldots,0}_{2^{m-d}\text{ angles}})\cong\text{CX}(c_{d},t).
\]
\end{lem}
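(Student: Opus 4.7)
The plan is to prove the lemma by reverse induction on the recursion depth $d$, starting from the leaf $d=m-1$ and moving upward to $d=0$. The proof relies on two elementary simplification rules: that ${\cal R}_\xi(0) = {\cal I}$ (so zero-angle rotations can be erased), and that two CNOTs sharing the same target but different controls commute (together with ${\rm CX}(c,t)^2 = {\cal I}$).

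For the base case $d = m-1$, I would plug the all-zero pair $(0,0)$ directly into the leaf definition \eqref{eq:base-f_m}:
\[
{\cal F}_m^{(m-1)}(0,0) \;=\; {\cal R}_{\xi,t}(0)\cdot {\rm CX}(c_{m-1},t)\cdot {\cal R}_{\xi,t}(0) \;\cong\; {\rm CX}(c_{m-1},t),
\]
since both rotation gates reduce to the identity.

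For the inductive step, assume the claim holds at depth $d+1$, so that ${\cal F}_m^{(d+1)}(0,\ldots,0) \cong {\rm CX}(c_{d+1},t)$ (with $2^{m-d-1}$ zero angles). Unfolding \eqref{eq:recursive-f_m} at depth $d$ with all $2^{m-d}$ angles zero, and noting that each half-block of angles is still all zeros of length $2^{m-d-1}$, yields
\[
{\cal F}_m^{(d)}(0,\ldots,0) \;\cong\; {\rm CX}(c_{d+1},t)\cdot {\rm CX}(c_d,t)\cdot {\rm CX}(c_{d+1},t).
\]
The key simplification is then the observation that CNOTs sharing a common target but different controls commute (each maps $\ket{b_0}\ket{b_1}\ket{b_t}$ to a state with $b_t$ replaced by $b_t \oplus b_0$ or $b_t \oplus b_1$, and these XORs commute). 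Using this commutation and ${\rm CX}(c_{d+1},t)^2 = {\cal I}$, the above expression collapses to ${\rm CX}(c_d,t)$, completing the induction.

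There is no real obstacle here: the argument is a straightforward structural induction, with the only substantive point being the CNOT-commutation lemma, which is a direct computation on computational basis states and already captured by Fact~\ref{fact:cancell-CNOTs}. The statement of the present lemma is essentially that the recursive template ${\cal F}_m^{(d)}$ degenerates to a single CNOT whenever all the angle data is trivial, and this collapse propagates cleanly up the recursion tree.
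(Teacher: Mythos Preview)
Your proof is correct and follows essentially the same approach as the paper: both argue by (reverse) induction on the recursion depth, handle the base case $d=m-1$ by erasing zero-angle rotations, and collapse the inductive step ${\rm CX}(c_{d+1},t)\cdot{\rm CX}(c_d,t)\cdot{\rm CX}(c_{d+1},t)\cong{\rm CX}(c_d,t)$ via the CNOT commutation/cancellation captured in Fact~\ref{fact:cancell-CNOTs}.
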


\begin{proof}
By induction on $m-d$. Base case is $m-d=1$, or $d=m-1$, and we
have (Eq.~(\ref{eq:base-f_m}))
\[
{\cal F}_{m}^{(m-1)}(0,0)={\cal R}_{y}(0)\cdot\text{CX}(c_{m-1},t)\cdot{\cal R}_{y}(0)=\text{CX}(c_{m-1},t)
\]
Next, we consider the inductive case, i.e., $m-d$ is increased by
$1$ (so $d$ is decreased by $1$, since we consider $m$ fixed).
The inductive assumption translates to
\[
{\cal F}_{m}^{(d+1)}(\underbrace{0,\ldots,0}_{2^{m-(d+1)}\text{ angles}})\cong\text{CX}(c_{d+1},t)
\]
We have
\begin{align*}
{\cal F}_{m}^{(d)}(\underbrace{0,\ldots,0}_{2^{m-d}\text{ angles}}) & ={\cal F}_{m}^{(d+1)}(0,\dots,0)\cdot\text{CX}(c_{d},t)\cdot{\cal F}_{m}^{(d+1)}(0,\dots,0)\\
 & \cong\text{CX}(c_{d+1},t)\cdot\text{CX}(c_{d},t)\cdot\text{CX}(c_{d+1},t)\\
 & \cong\text{CX}(c_{d},t)
\end{align*}
where we used Eq.~(\ref{eq:recursive-f_m}).
\end{proof}
\begin{lem}
\label{prop:preffix-suffix} Fix $m$. For any depth $d\in\{0,\dots,m-1\}$,
\begin{enumerate}
\item \label{enu:preffix}${\cal F}_{m}^{(d)}\bigl(\phi_{0},\underbrace{0,\ldots,0}_{2^{m-d}-1}\bigr)\cong{\cal R}(\phi_{0})\cdot\text{CX}(c_{d},t)$.
\item \label{enu:suffix}${\cal F}_{m}^{(d)}\bigl(\underbrace{0,\ldots,0}_{2^{m-d}-1},\phi_{2^{m-d}-1}\bigr)\cong\text{CX}(c_{d},t)\cdot{\cal R}(\phi_{2^{m-d}-1}).$
\end{enumerate}
\end{lem}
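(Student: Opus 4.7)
The plan is to prove both claims simultaneously by downward induction on the recursion depth $d$ (equivalently, induction on $m-d$), mirroring the inductive structure of Lemma~\ref{lem:zero-block} and using it as a black box for the ``mostly-zero half'' that appears in each recursive step.

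For the base case $d = m-1$, both statements reduce directly to the leaf formula in Eq.~(\ref{eq:base-f_m}): ${\cal F}_m^{(m-1)}(\phi_0, 0) = {\cal R}(\phi_0) \cdot \text{CX}(c_{m-1},t) \cdot {\cal R}(0)$ which equals ${\cal R}(\phi_0) \cdot \text{CX}(c_{m-1},t)$ after erasing the zero-angle rotation, and symmetrically for the suffix case. This handles Part~\ref{enu:preffix} and Part~\ref{enu:suffix} at depth $m-1$ in one line each.

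For the inductive step, I unfold ${\cal F}_m^{(d)}$ using Eq.~(\ref{eq:recursive-f_m}). For Part~\ref{enu:preffix}, the nonzero angle $\phi_0$ lies entirely in the first half, so
\[
{\cal F}_m^{(d)}\bigl(\phi_0, 0,\ldots,0\bigr) = {\cal F}_m^{(d+1)}\bigl(\phi_0, 0,\ldots,0\bigr) \cdot \text{CX}(c_d,t) \cdot {\cal F}_m^{(d+1)}\bigl(0,\ldots,0\bigr).
\]
The inductive hypothesis rewrites the first factor as ${\cal R}(\phi_0)\cdot\text{CX}(c_{d+1},t)$, and Lemma~\ref{lem:zero-block} rewrites the third factor as $\text{CX}(c_{d+1},t)$. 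The result is ${\cal R}(\phi_0)\cdot\text{CX}(c_{d+1},t)\cdot\text{CX}(c_d,t)\cdot\text{CX}(c_{d+1},t)$. Part~\ref{enu:suffix} proceeds symmetrically, placing the nonzero angle in the second half and reducing to $\text{CX}(c_{d+1},t)\cdot\text{CX}(c_d,t)\cdot\text{CX}(c_{d+1},t)\cdot{\cal R}(\phi_{2^{m-d}-1})$.

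To finish, I collapse the CNOT triple $\text{CX}(c_{d+1},t)\cdot\text{CX}(c_d,t)\cdot\text{CX}(c_{d+1},t)$ down to $\text{CX}(c_d,t)$ via Fact~\ref{fact:cancell-CNOTs}: the outer pair shares control $c_{d+1}$ and target $t$, and the only gate between them is a CNOT with the same target $t$ but a different control $c_d$, which is exactly the hypothesis of the cancellation rule. The step that requires the most care is ensuring this cancellation rule really applies when invoked inside a larger circuit --- the rotation ${\cal R}(\phi_0)$ (respectively ${\cal R}(\phi_{2^{m-d}-1})$) that now sits on target $t$ outside the triple does not interleave between the two CNOTs being cancelled, so it poses no obstruction and no further rewriting is needed.
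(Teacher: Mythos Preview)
Your proof is correct and follows essentially the same approach as the paper: both argue by induction on $m-d$, handle the base case via the leaf formula Eq.~(\ref{eq:base-f_m}) with a zero-angle rotation erased, unfold the recursion Eq.~(\ref{eq:recursive-f_m}) in the inductive step, apply the inductive hypothesis to the half containing the nonzero angle and Lemma~\ref{lem:zero-block} to the all-zero half, and then cancel the resulting CNOT triple $\text{CX}(c_{d+1},t)\cdot\text{CX}(c_d,t)\cdot\text{CX}(c_{d+1},t)$ down to $\text{CX}(c_d,t)$. Your explicit invocation of Fact~\ref{fact:cancell-CNOTs} for the final cancellation is a minor presentational addition; the paper performs the same simplification without naming the fact.
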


\begin{proof}
By induction on $m-d$. Base case is $m-d=1$, or $d=m-1$, and we
have (Eq.~(\ref{eq:base-f_m}))
\[
{\cal F}_{m}^{(m-1)}(\phi_{0},\phi_{1})={\cal R}_{y}(\phi_{0})\cdot\text{CX}(c_{m-1},t)\cdot{\cal R}_{y}(\phi_{1})=\begin{cases}
{\cal R}_{y}(\phi_{0})\cdot\text{CX}(c_{m-1},t) & \text{if\,}\phi_{1}=0\\
\text{CX}(c_{m-1},t)\cdot{\cal R}_{y}(\phi_{1}) & \text{if\,}\phi_{0}=0
\end{cases}
\]
Now we consider the inductive case, i.e., $m-d$ is increased by $1$
(so $d$ is decreased by $1$, since we consider $m$ fixed). The
inductive assumption translates to
\begin{align*}
{\cal F}_{m}^{(d+1)}(\phi_{0},0,\ldots,0) & \cong{\cal R}(\phi_{0})\cdot\text{CX}(c_{d+1},t)\\
{\cal F}_{m}^{(d+1)}(0,\ldots,0,\phi_{2^{m-(d+1)}-1}) & \cong\text{CX}(c_{d+1},t)\cdot{\cal R}(\phi_{2^{m-(d+1)}-1})
\end{align*}
We show for the prefix case (first identity). The proof for suffix
case (second identity) follows the same logic. We have,
\begin{align*}
{\cal F}_{m}^{(d)}(\phi_{0},0,\ldots,0) & =\underbrace{{\cal F}_{m}^{(d+1)}(\phi_{0},0,\ldots,0)}_{{\displaystyle {\cal R}(\phi_{0})\cdot\text{CX}(c_{d+1},t)}}\cdot\text{CX}(c_{d},t)\cdot\underbrace{{\cal F}_{m}^{(d+1)}(0,\ldots,0)}_{\text{CNOT}(c_{d+1},t)}\\
 & \cong{\cal R}(\phi_{0})\cdot\text{CX}(c_{d+1},t)\cdot\text{CX}(c_{d},t)\cdot\text{CX}(c_{d+1},t)\\
 & \cong{\cal R}(\phi_{0})\cdot\text{CX}(c_{d},t)
\end{align*}
where we used Eq.~(\ref{eq:recursive-f_m}) and Lemma~\ref{lem:zero-block}.
\end{proof}

\begin{prop}
We have,
\[
2d{\cal PMX}_{y,n}\in{\cal MX}_{y}(\mat{\Theta}_{n}^{(P,y)}),\quad2d{\cal PMX}_{z,n}\in{\cal MX}_{z}(\mat{\Theta}_{n}^{(P,z)})
\]
Equivalently:
\[
\matM(2d{\cal PMX}_{y,n})=\bigoplus_{j=0}^{N^{2}-1}\bigotimes_{i=0}^{n-1}\matR_{y}\bigl(\theta_{ij}^{(P,y)}\bigr),\qquad\matM(2d{\cal PMX}_{z,n})=\bigoplus_{j=0}^{N^{2}-1}\bigotimes_{i=0}^{n-1}\matR_{z}\bigl(\theta_{ij}^{(P,z)}\bigr).
\]
\end{prop}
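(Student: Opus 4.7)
The plan is to reduce the 2D claim to $n$ independent per-row verifications. Observe that the generic $2d{\cal MXR}_\xi(\mat{\Theta})$ decomposes as a product of $n$ per-row 1D multiplexers $1d{\cal MXR}_\xi(\mat{\Theta}[i,:])$ acting on distinct targets $t_0,\dots,t_{n-1}$ with a shared control register, since rotations on distinct target qubits commute across the shared CNOT/fanout-CNOT structure. The proposed $2d{\cal PMX}_{\xi,n}$ has the same per-row form, and moreover its row $i$ touches only $c_{2i+1}$ (case $\xi=z$) or $c_{2i},c_{2i+1}$ (case $\xi=y$); these control sets are disjoint across rows. Hence it suffices to show, for each $i\in\{0,\dots,n-1\}$, that the row-$i$ sub-circuit of $2d{\cal PMX}_{\xi,n}$ is equivalent to $1d{\cal MXR}_\xi(\hat{\mat{\Theta}}_n^{(P,\xi)}[i,:])$ on target $t_i$ with $2n$ controls $c_0,\dots,c_{2n-1}$. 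By Proposition~\ref{prop:direct-angels}, that angle row is ultra sparse: it carries value $\pi/2$ at position $0$ and $-\pi/2$ at position $k_z=2\cdot 4^{n-i-1}-1$ (for $\xi=z$) or $k_y=2\cdot 4^{n-i-1}$ (for $\xi=y$), and is zero elsewhere.

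The heart of the proof is a careful unfolding of the recursion ${\cal F}_{2n}^{(d)}$ on this sparse angle vector. Writing indices as $2n$-bit strings $b_0b_1\cdots b_{2n-1}$ (MSB first), $0$ and $k_z$ first differ in bit $b_{2i+1}$, whereas $0$ and $k_y$ first differ in bit $b_{2i}$. For the $z$ case, at depths $d=0,1,\dots,2i$ both non-zeros share the first half of the current subrange, so at each such step Lemma~\ref{lem:zero-block} collapses the second-half recursive call to a single $\text{CX}(c_{d+1},t_i)$. Unrolling these $2i+1$ reductions appends to the right the CNOT tail $\text{CX}(c_{2i})\text{CX}(c_{2i+1})\cdot\text{CX}(c_{2i-1})\text{CX}(c_{2i})\cdots\text{CX}(c_0)\text{CX}(c_1)$. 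At the separation depth $d=2i+1$, the prefix case of Lemma~\ref{prop:preffix-suffix} applies to the first half and the suffix case to the second half; the inner pair of $\text{CX}(c_{2i+2},t_i)$ gates cancels via Fact~\ref{fact:cancell-CNOTs}, yielding ${\cal F}_{2n}^{(2i+1)}(\cdot)\cong{\cal R}_z(\pi/2)\cdot\text{CX}(c_{2i+1},t_i)\cdot{\cal R}_z(-\pi/2)$. Composing everything and then appending the outer $\text{CX}(c_0,t_i)$ from the definition of $1d{\cal MXR}_z$ produces a long same-target CNOT suffix in which every control $c_j$ except $c_{2i+1}$ appears an even number of times; iterated application of Fact~\ref{fact:cancell-CNOTs}, peeling pairs off from the right (each pair separated only by a single different-control CNOT), collapses the entire suffix to the single gate $\text{CX}(c_{2i+1},t_i)$. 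This matches Eq.~(\ref{eq:rz-row}).

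The $y$ case runs in parallel, but with the separation depth shifted to $d=2i$ and hence only $2i$ preceding zero-block reductions. At separation, both halves now have their non-zero at position $0$, so the prefix case of Lemma~\ref{prop:preffix-suffix} applies to each half; the two inner $\text{CX}(c_{2i+1},t_i)$ gates are separated by a rotation and therefore cannot be cancelled by Fact~\ref{fact:cancell-CNOTs}. One obtains ${\cal F}_{2n}^{(2i)}(\cdot)\cong{\cal R}_y(\pi/2)\cdot\text{CX}(c_{2i+1},t_i)\text{CX}(c_{2i},t_i)\cdot{\cal R}_y(-\pi/2)\cdot\text{CX}(c_{2i+1},t_i)$, and the same iterated CNOT-tail collapse (now with both $c_{2i+1}$ and $c_{2i}$ surviving with odd parity) leaves the residue $\text{CX}(c_{2i+1},t_i)\text{CX}(c_{2i},t_i)$, matching Eq.~(\ref{eq:ry-row}). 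The main obstacle throughout is the bookkeeping of the growing CNOT suffix and verifying the parity count for each control, so that the iterated cancellations chain cleanly; a minor boundary case is $i=n-1$ in the $z$ setting, where the separation depth $2n-1$ coincides with the recursion's base level and one replaces Lemma~\ref{prop:preffix-suffix} by direct evaluation of ${\cal F}_{2n}^{(2n-1)}(\pi/2,-\pi/2)$ using Eq.~(\ref{eq:base-f_m}), which yields the same form.
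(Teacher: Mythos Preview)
Your argument is correct and reaches the same conclusion via the same lemmas (Lemma~\ref{lem:zero-block}, Lemma~\ref{prop:preffix-suffix}, Fact~\ref{fact:cancell-CNOTs}), but the organization differs from the paper's. The paper proceeds by induction on $n$: it verifies $n=1$ directly, and for the step $n\to n+1$ it exploits the block-recursive form of $\hat{\mat{\Theta}}_{n+1}^{(P,\xi)}$ to argue that the ``old'' rows $i\ge1$ reduce, after a two-index shift of the controls, to the level-$n$ rows (covered by the inductive hypothesis), and then handles only the single ``new'' row $i=0$ explicitly. Your approach instead fixes $n$ and treats every row $i$ uniformly by a direct unrolling of ${\cal F}_{2n}^{(d)}$ on the ultra-sparse angle vector, locating the separation depth from the bit position where $0$ and $k_\xi$ first differ and then telescoping the accumulated CNOT tail by a parity count. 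Your route is arguably cleaner in that it avoids the index-shifting bookkeeping of old versus new rows, at the price of having to track and cancel a longer CNOT suffix for general $i$; the paper's inductive route keeps each step's CNOT manipulation shorter (always the $i=0$ pattern) but pays for it with the relabeling argument. One minor notational slip: where you write ``equivalent to $1d{\cal MXR}_\xi(\hat{\mat{\Theta}}_n^{(P,\xi)}[i,:])$'' you mean the un-hatted $\mat{\Theta}_n^{(P,\xi)}[i,:]$, since $1d{\cal MXR}_\xi$ takes the raw angles and computes the hat internally; the rest of your text makes clear you intend to unfold ${\cal F}_{2n}^{(0)}$ on the hatted row, which is correct.
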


\begin{proof}
The idea is to start with $2d{\cal MXR}_{y}(\mat{\Theta}_{n}^{(P,y)})$
and $2d{\cal MXR}_{z}(\mat{\Theta}_{n}^{(P,z)})$ and via repeated
applications of Lemmas~\ref{lem:zero-block}~and~\ref{prop:preffix-suffix}
obtain smaller and smaller equivalent circuits, until reaching $2d{\cal PMX}_{y,n}$
and $2d{\cal PMX}_{z,n}$ . Put differently, we show that $2d{\cal PMX}_{y,n}\Big|_{t_{i}}\cong2d{\cal MXR}_{y}(\mat{\Theta}_{n}^{(P,y)})\Big|_{t_{i}}$
and $2d{\cal PMX}_{z,n}\Big|_{t_{i}}\cong2d{\cal MXR}_{z}(\mat{\Theta}_{n}^{(P,z)})\Big|_{t_{i}}$
for $i=0,\dots,n-1$, and this implies the result.

Consider the base case of $n=1$. We have $\hat{\mat{\Theta}}_{1}^{(P,z)}=[+\pi/2,-\pi/2,0,0]$,
and $\hat{\mat{\Theta}}_{1}^{(P,y)}=[+\pi/2,0,-\pi/2,0]$. The fact
that the claim holds is straightforward (see Figure~\ref{fig:recursive-multiplexed}).

Assume that the claim holds for $n$, i.e., $2d{\cal MXR}_{\xi}(\mat{\Theta}_{n}^{(P,\xi)})\Big|_{t_{i}}\cong2d{\cal PMX}_{\xi,n}\Big|_{t_{i}}$
for $i=0,\dots n-1$ (in the equations we use $\xi$ as a placeholder
for either $y$ or $z$). From Eqs.~(\ref{eq:rz-row}) and (\ref{eq:ry-row})
note that when going from $n$ to $n+1$, in order to construct $2d{\cal PMX}_{y,n+1}$
and $2d{\cal PMX}_{z,n+1}$ we can start from $2d{\cal PMX}_{y,n}$
and $2d{\cal PMX}_{z,n}$, shift existing controls by two indices
($c_{i}\to c_{i+2}$), add two new control qubits ($c_{0},c_{1}$),
shift existing targets by one index ($t_{i}\to t_{i+1}$), add a new
target qubit $t_{0}$, and finally we add gates on $t_{0}$ according
to Eqs.~(\ref{eq:rz-row}) and (\ref{eq:ry-row}). Now lets analyze
$2d{\cal MXR}_{y}(\mat{\Theta}_{n+1}^{(P,y)})$ and $2d{\cal MXR}_{z}(\mat{\Theta}_{n+1}^{(P,z)})$.

Keeping in mind that the goal is to show that $2d{\cal MXR}_{\xi}(\mat{\Theta}_{n+1}^{(P,\xi)})\Big|_{t_{i}}\cong2d{\cal PMX}_{\xi,n+1}\Big|_{t_{i}}$for
$i=0,\dots,n$, from Proposition~\ref{prop:direct-angels} we have
the following:

\noindent
\[
\hat{\mat{\Theta}}_{n+1}^{(P,z)}=\begin{bmatrix}\underbrace{\pi/2,0,\cdots,0}_{4^{n}} & \underbrace{0,\cdots,0,-\pi/2}_{4^{n}} & \underbrace{0,\cdots,0}_{4^{n}} & \underbrace{0,\cdots,0}_{4^{n}}\\[2pt]
\hat{\mat{\Theta}}_{n}^{(P,z)} & 0 & 0 & 0
\end{bmatrix},
\]
\[
\hat{\mat{\Theta}}_{n+1}^{(P,y)}=\begin{bmatrix}\underbrace{\pi/2,0,\cdots,0}_{4^{n}} & \underbrace{0,\cdots,0}_{4^{n}} & \underbrace{-\pi/2,0,\cdots,0}_{4^{n}} & \underbrace{0,\cdots,0}_{4^{n}}\\[2pt]
\hat{\mat{\Theta}}_{n}^{(P,y)} & 0 & 0 & 0
\end{bmatrix}.
\]

\noindent Each row in the above matrices correspond to a target qubit,
which correspond to a row in the circuit $2d{\cal MXR}_{\xi}(\mat{\Theta}_{n+1}^{(P,\xi)})$.
There are two types of rows.
\begin{enumerate}
\item ``Old rows'', i.e., rows with index $i\ge1$. These rows already
exist at level $n$. At level $n+1$ they are extended by trailing
zeros. All angles beyond column $4^{n}$ are zero, and corresponding
rotations are identities and theirs gates can be removed. Next, we
argue that $2d{\cal MXR}_{\xi}(\mat{\Theta}_{n+1}^{(P,\xi)})\Big|_{t_{i}}$
equivalent to $2d{\cal MXR}_{y}(\mat{\Theta}_{n}^{(P,\xi)})|_{t_{i-1}}$
with shifted controls $(c_{i-2}\mapsto c_{i})$. To do so, we show
that\textcolor{red}{{} }
\begin{equation}
2d{\cal MXR}_{\xi}(\mat{\Theta}_{n+1}^{(P,\xi)})\Big|_{t_{i}}\cong{\cal F}_{4^{n+1}}^{(2)}\bigl(\phi_{i,0},\ldots,\phi_{i,4^{n}-1}\bigr)\cdot\mathrm{\text{CX}}(c_{2},t_{i}),\label{eq:recursive-2-morton}
\end{equation}
for some angles $\phi_{i,0},\ldots,\phi_{i,4^{n}-1}$ where ${\cal F}_{4^{+1}}^{(2)}(\cdot)$
denotes a row in the recursive multiplexer block from depth $2$ onward,
i.e., row in the circuit $2d{\cal MXR}_{\xi}(\mat{\Theta}_{n}^{(P,\xi)})$
with relabeled wires $(c_{j}\mapsto c_{j+2},\;t_{i-1}\mapsto t_{i})$.
Then, for $i=1,\dots,n$ 
\begin{align*}
{\cal F}_{4^{n+1}}^{(2)}\bigl(\phi_{i,0},\ldots,\phi_{i,4^{n}-1}\bigr)\cdot\text{CX}(c_{2},t_{i}) & \cong2d{\cal MXR}_{\xi}(\mat{\Theta}_{n}^{(P,\xi)})|_{t_{i-1}}
\end{align*}
\\
To see Eq.~(\ref{eq:recursive-2-morton}), expand the recursion for
the block of size $4^{q+1}$ for any $t_{i}$, $i\ge1$ we have:{\small
\begin{align*}
{\cal F}_{4^{n+1}}^{(0)}(\underbrace{\phi_{i,0},\ldots,\phi_{i,4^{n}-1}}_{4^{n}\text{ zeros}},\underbrace{0,0,\ldots,0}_{3\cdot4^{q}\text{ zeros}}) & ={\cal F}_{4^{n+1}}^{(1)}(\underbrace{\phi_{i,0},\ldots,\phi_{i,4^{n}-1}}_{4^{n}\text{ zeros}},\underbrace{0,0,\ldots,0}_{4^{n}\text{ zeros}})\cdot\text{CX}(c_{0},t_{i})\cdot{\cal F}_{4^{n+1}}^{(1)}(\underbrace{0,0,\ldots,0}_{2\cdot4^{n}\text{ zeros}})\\
 & \cong{\cal F}_{4^{n+1}}^{(2)}(\underbrace{\phi_{i,0},\ldots,\phi_{i,4^{n}-1}}_{4^{n}\text{ zeros}})\cdot\text{CX}(c_{1},t_{i})\cdot{\cal F}_{4^{n+1}}^{(2)}(\underbrace{0,0,\ldots,0}_{4^{n}\text{ zeros}})\cdot\text{CX}(c_{0},t_{i})\cdot\text{CX}(c_{1},t_{i})\\
 & \cong{\cal F}_{4^{n+1}}^{(2)}(\underbrace{\phi_{i,0},\ldots,\phi_{i,4^{n}-1}}_{4^{n}\text{ zeros}})\cdot\text{CX}(c_{1},t_{i})\cdot\text{CX}(c_{2},t_{i})\cdot\text{CX}(c_{0},t_{i})\cdot\text{CX}(c_{1},t_{i})\\
 & \cong{\cal F}_{4^{n+1}}^{(2)}(\underbrace{\phi_{i,0},\ldots,\phi_{i,4^{n}-1}}_{4^{q}\text{ zeros}})\cdot\text{CX}(c_{2},t_{i})\cdot\text{CX}(c_{0},t_{i})
\end{align*}
}combining with the full circuit the top level closing CNOT (the final
circuit is $2d{\cal MXR}_{\xi}(\mat{\Theta}_{n+1}^{(P,\xi)})\Big|_{t_{i}}={\cal F}_{4^{n+1}}^{(0)}(\hat{\vtheta})\cdot\text{CX}(c_{0},t_{i})$
for some row $\hat{\vtheta}$){\small
\begin{align*}
{\cal F}_{4^{n+1}}^{(0)}(\underbrace{\phi_{i,0},\ldots,\phi_{i,4^{n}-1}}_{4^{n}\text{ zeros}},\underbrace{0,0,\ldots,0}_{3\cdot4^{n}\text{ zeros}})\cdot\text{CX}(c_{0},t_{i}) & \cong F^{(2)}(\underbrace{\phi_{i,0},\ldots,\phi_{i,4^{n}-1}}_{4^{n}\text{ zeros}})\cdot\text{CX}(c_{2},t_{i})\cdot\text{CX}(c_{0},t_{i})\cdot\text{CX}(c_{0},t_{i})\\
 & \cong{\cal F}_{4^{n+1}}^{(2)}\bigl(\phi_{i,0},\ldots,\phi_{i,4^{n}-1}\bigr)\cdot\text{CX}(c_{2},t_{i})
\end{align*}
}as desired. Now, use the inductive assumption that $2d{\cal MXR}_{\xi}(\mat{\Theta}_{n}^{(P,\xi)})|_{t_{i-1}}\cong2d{\cal PMX}_{\xi,n}\Big|_{t_{i-1}}$
to show equivalence for ``old rows''.
\item A new top row with index $i=0$. For $\hat{\mat{\Theta}}_{n+1}^{(P,z)}$
we have sparse row angles given in Proposition~\ref{prop:direct-angels},
and we need to show that
\begin{align*}
2d{\cal MXR}_{z}(\mat{\Theta}_{n+1}^{(P,z)})\Big|_{t_{0}} & \cong{\cal R}_{z,t_{0}}\left(\nicefrac{\pi}{2}\right)\cdot\text{CX}(c_{1},t_{0})\cdot{\cal R}_{z,t_{0}}\left(\nicefrac{-\pi}{2}\right)\cdot\text{CX}(c_{1},t_{0})\\
2d{\cal MXR}_{y}(\mat{\Theta}_{n+1}^{(P,y)})\Big|_{t_{0}} & \cong{\cal R}_{y,t_{0}}\left(\nicefrac{\pi}{2}\right)\cdot\text{CX}(c_{1},t_{0})\cdot\text{CX}(c_{0},t_{0})\cdot{\cal R}_{y,t_{0}}\left(\nicefrac{-\pi}{2}\right)\cdot\text{CX}(c_{1},t_{0})\cdot\text{CX}(c_{0},t_{0})
\end{align*}
 Using the induction hypothesis for $z$, Eq.~(\ref{eq:recursive-f_m}),
Lemma~\ref{lem:zero-block} and Proposition~\ref{prop:preffix-suffix},
we have{\scriptsize
\begin{align*}
{\cal F}_{4^{n+1}}^{(0)}(\underbrace{\nicefrac{\pi}{2},0,\ldots,0}_{4^{n}\text{ zeros}},\underbrace{0,0,\ldots,-\nicefrac{\pi}{2}}_{4^{n}\text{ zeros}},\underbrace{0,0,\ldots,0}_{2\cdot4^{n}\text{ zeros}}) & ={\cal F}_{4^{n+1}}^{(1)}(\underbrace{\nicefrac{\pi}{2},0,\ldots,0}_{4^{n}},\underbrace{0,\ldots0,-\nicefrac{\pi}{2}}_{4^{n}})\cdot\text{CX}(c_{0},t_{0})\cdot{\cal F}_{4^{n+1}}^{(1)}(\underbrace{0,0,\ldots,0}_{2\cdot4^{n}\text{ zeros}})\\
 & \cong F^{(1)}(\underbrace{\nicefrac{\pi}{2},0,\ldots,0}_{4^{n}})\cdot\text{CX}(c_{1},t_{0})\cdot F^{(1)}(\underbrace{0,\ldots0,-\nicefrac{\pi}{2}}_{4^{n}})\cdot\text{CX}(c_{0},t_{0})\cdot\text{CX}(c_{1},t_{0})\cdot\\
 & \cong{\cal R}_{z,t_{0}}(\nicefrac{\pi}{2})\cdot\text{CX}(c_{1},t_{0})\cdot\text{CX}(c_{1},t_{0})\cdot\text{CX}(c_{1},t_{0})\cdot{\cal R}_{z,t_{0}}\left(\nicefrac{\pi}{2}\right)\cdot\text{CX}(c_{0},t_{0})\cdot\text{CX}(c_{1},t_{0})\\
 & \cong{\cal R}_{z,t_{0}}\left(\nicefrac{\pi}{2}\right)\cdot\text{CX}(c_{1},t_{0})\cdot{\cal R}_{z,t_{0}}\left(\nicefrac{\pi}{2}\right)\cdot\text{CX}(c_{0},t_{0})\cdot\text{CX}(c_{1},t_{0})
\end{align*}
}combining with the full circuit the top level closing CNOT,
\[
{\cal F}_{4^{n+1}}^{(0)}(\underbrace{\nicefrac{\pi}{2},0,\ldots,0}_{4^{n}\text{ zeros}},\underbrace{0,0,\ldots,-\nicefrac{\pi}{2}}_{4^{n}\text{ zeros}},\underbrace{0,0,\ldots,0}_{2\cdot4^{n}\text{ zeros}})\cdot\mathrm{\text{CX}(c_{0},t)}={\cal R}_{z,t_{0}}\left(\nicefrac{\pi}{2}\right)\cdot\text{CX}(c_{1},t_{0})\cdot{\cal R}_{z,t_{0}}\left(\nicefrac{\pi}{2}\right)\cdot\text{CX}(c_{1},t_{0})
\]
The proof the $y$ case follows the same logic. For $\hat{\mat{\Theta}}_{n+1}^{(P,y)}$,
we expand 
\[
{\scriptsize {\normalsize {\cal F}_{4^{n+1}}^{(0)}(\underbrace{\nicefrac{\pi}{2},0,\ldots,0}_{4^{n}\text{ zeros}},\underbrace{0,0,\ldots,0}_{4^{n}\text{ zeros}},\underbrace{-\nicefrac{\pi}{2},0,\ldots,0}_{4^{n}\text{ zeros}},\underbrace{0,0,\ldots,0}_{4^{n}\text{ zeros}})}}
\]
 as follows, {\tiny
\begin{align*}
{\cal F}_{4^{n+1}}^{(0)}(\cdot) & ={\cal F}_{4^{n+1}}^{(1)}(\underbrace{\nicefrac{\pi}{2},0,\ldots,0}_{4^{n}\text{ zeros}},\underbrace{0,0,\ldots,0}_{4^{n}\text{ zeros}})\cdot\text{CX}(c_{0},t_{0})\cdot{\cal F}_{4^{n+1}}^{(1)}(\underbrace{-\nicefrac{\pi}{2},0,\ldots,0}_{4^{n}\text{ zeros}},\underbrace{0,0,\ldots,0}_{4^{n}\text{ zeros}})\\
 & \cong{\cal F}_{4^{n+1}}^{(2)}(\underbrace{\nicefrac{\pi}{2},0,\ldots,0}_{4^{n}\text{ zeros}})\cdot\text{CX}(c_{1},t_{0})\cdot{\cal F}_{4^{n+1}}^{(2)}(\underbrace{0,0,\ldots,0}_{4^{n}\text{ zeros}})\cdot\text{CX}(c_{0},t_{0})\cdot{\cal F}_{4^{n+1}}^{(2)}(\underbrace{-\nicefrac{\pi}{2},0,\ldots,0}_{4^{n}\text{ zeros}})\cdot\text{CX}(c_{1},t_{0})\cdot{\cal F}_{4^{n+1}}^{(2)}(\underbrace{0,0,\ldots,0}_{4^{n}\text{ zeros}})\\
 & \cong{\cal F}_{4^{n+1}}^{(2)}(\underbrace{\nicefrac{\pi}{2},0,\ldots,0}_{4^{n}\text{ zeros}})\cdot\text{CX}(c_{1},t_{0})\cdot\text{CX}(c_{2},t_{0})\cdot\text{CX}(c_{0},t_{0})\cdot{\cal F}_{4^{n+1}}^{(2)}(\underbrace{-\nicefrac{\pi}{2},0,\ldots,0}_{4^{n}\text{ zeros}})\cdot\text{CX}(c_{1},t_{0})\cdot\text{CX}(c_{2},t_{0})\\
 & \cong{\cal R}_{y,t_{0}}\left(\nicefrac{\pi}{2}\right)\cdot\text{CX}(c_{2},t_{0})\cdot\text{CX}(c_{1},t_{0})\cdot\text{CX}(c_{2},t_{0})\cdot\text{CX}(c_{0},t_{0})\cdot{\cal R}_{y,t_{0}}\left(\nicefrac{\pi}{2}\right)\cdot\text{CX}(c_{2},t_{0})\cdot\text{CX}(c_{1},t_{0})\cdot\text{CX}(c_{2},t_{0})\\
 & \cong{\cal R}_{y,t_{0}}\left(\nicefrac{\pi}{2}\right)\cdot\text{CX}(c_{1},t_{0})\cdot\text{CX}(c_{0},t_{0})\cdot{\cal R}_{y,t_{0}}\left(\nicefrac{\pi}{2}\right)\cdot\text{CX}(c_{1},t_{0})\cdot
\end{align*}
}combining with the full circuit the top level closing CNOT we have,
\[
2d{\cal MXR}_{y}^{(n+1)}(\mat{\Theta}^{(y)})\Big|_{t_{0}}={\cal R}_{y,t_{0}}\left(\nicefrac{\pi}{2}\right)\cdot\text{CX}(c_{1},t_{0})\cdot\text{CX}(c_{0},t_{0})\cdot{\cal R}_{y,t_{0}}\left(\nicefrac{\pi}{2}\right)\cdot\text{CX}(c_{1},t_{0})\cdot\text{CX}(c_{0},t_{0})
\]
\end{enumerate}
Combining the above, we have for any $i=0,\dots,n$ that
\[
2d{\cal MXR}_{\xi}(\mat{\Theta}_{n+1}^{(P,\xi)})\Big|_{t_{i}}\cong2d{\cal PMX}_{\xi,n+1}\Big|_{t_{i}}
\]
as desired.
\end{proof}
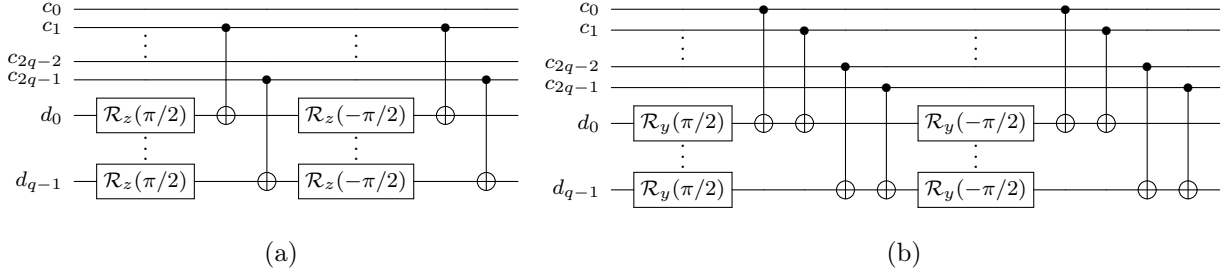
\begin{figure}[t]
\centering{}%
\begin{tabular}{ccc}
{\footnotesize{}
\Qcircuit @C=1em @R=.7em {
& \lstick{c_{0}}    & \qw               		  & \qw      & \qw          & \qw      					     & \qw      & \qw          &  \qw \\
& \lstick{c_{1}}    & \qw               		  & \ctrl{4} & \qw          & \qw      					     & \ctrl{4} & \qw          &  \qw \\
&   			    & \raisebox{0.7em} \vdots     &  		 &      	    & \raisebox{0.7em} \vdots        &  		&      	       &   \\
& \lstick{c_{2q-2}} & \qw              			  & \qw      & \qw          & \qw      					     & \qw      & \qw          &  \qw \\
& \lstick{c_{2q-1}} & \qw               		  & \qw      & \ctrl{3}     & \qw      					     & \qw      & \ctrl{3}     &  \qw \\
& \lstick{d_0}      & \gate{{\cal R}_{z}(\pi/2)}  & \targ    & \qw          & \gate{{\cal R}_{z}(-\pi/2)}  	 & \targ    & \qw          &  \qw \\
&   				& \raisebox{0.7em} \vdots     & 		 &     	        & \raisebox{0.7em} \vdots      	 & 			&     	       &   \\
& \lstick{d_{q-1}}  & \gate{{\cal R}_{z}(\pi/2)}  & \qw      & \targ        & \gate{{\cal R}_{z}(-\pi/2)}  	 & \qw      & \targ        &  \qw \\
}

} & {\footnotesize ~} & {\footnotesize\Qcircuit @C=1em @R=.7em {
& \lstick{c_{0}}    & \qw               		  & \ctrl{5} & \qw      & \qw      & \qw        & \qw      					     & \ctrl{5} & \qw      & \qw      & \qw          &  \qw \\
& \lstick{c_{1}}    & \qw               		  & \qw      & \ctrl{4} & \qw      & \qw        & \qw      					     & \qw      & \ctrl{4} & \qw      & \qw          &  \qw \\
&   			    & \raisebox{0.7em} \vdots     &  		 &  		&      	   &      	    & \raisebox{0.7em} \vdots        &  		&  		   &      	  &      	     &   \\
& \lstick{c_{2q-2}} & \qw              			  & \qw      & \qw      & \ctrl{4} & \qw        & \qw      					     & \qw      & \qw      & \ctrl{4} & \qw          &  \qw \\
& \lstick{c_{2q-1}} & \qw               		  & \qw      & \qw      & \qw      & \ctrl{3}   & \qw      					     & \qw      & \qw      & \qw      & \ctrl{3}     &  \qw \\
& \lstick{d_0}      & \gate{{\cal R}_{y}(\pi/2)}  & \targ    & \targ    & \qw      & \qw        & \gate{{\cal R}_{y}(-\pi/2)}  	 & \targ    & \targ    & \qw      & \qw          &  \qw \\
&   				& \raisebox{0.7em} \vdots     & 		 & 		    &     	   &     	    & \raisebox{0.7em} \vdots      	 & 			& 		   &     	  &     	     &   \\
& \lstick{d_{q-1}}  & \gate{{\cal R}_{y}(\pi/2)}  & \qw      & \qw      & \targ    & \targ      & \gate{{\cal R}_{y}(-\pi/2)}  	 & \qw      & \qw      & \targ    & \targ        &  \qw \\
}
}\tabularnewline
~ &  & ~\tabularnewline
(a) &  & (b)\tabularnewline
\end{tabular}\caption{\label{fig:mux_r}Quantum 2D rotation multiplexers for all Paulis.
(a) Visualization of the $2d{\cal PMX}_{z,n}$. (b) Visualization
of the $2d{\cal PMX}_{z,n}$.}
\end{figure}

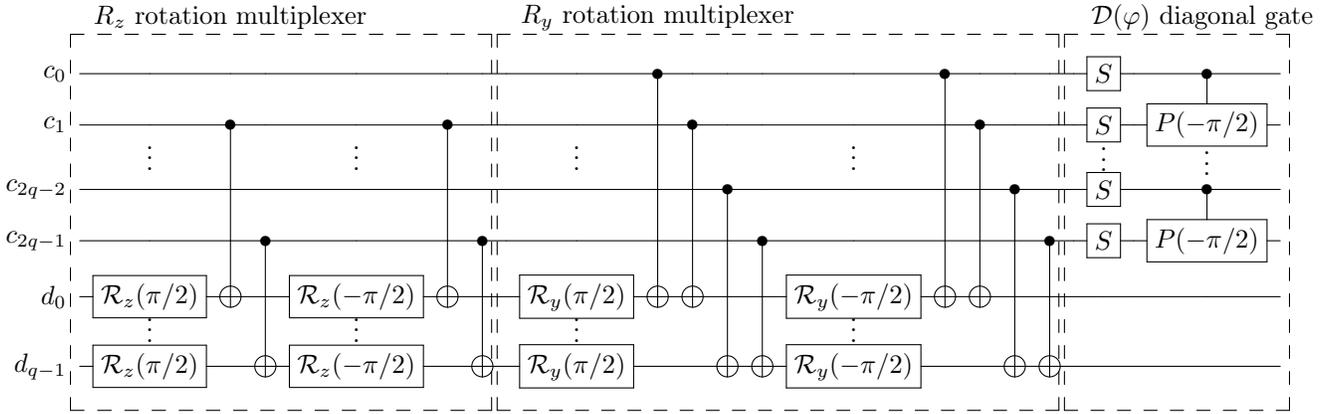
\begin{figure}[t]
\begin{centering}
\[
\Qcircuit @C=0.5em @R=.5em {
& \lstick{}   				&    & 	 \text{$R_{z}$ rotation multiplexer}	 &     	        &       	 & 			&     	       &      &      & 	\text{$R_{y}$ rotation multiplexer}	 & 		    &     	   &     	    &       	 & 			& 		   &     	  &     	     				 & 	 	&		   &	&				\text{${\cal D}(\varphi)$ diagonal gate } 	&	  \\
& \lstick{}   				&    & 	 	 &     	        &       	 & 			&     	       &      &      & 		 & 		    &     	   &     	    &       	 & 			& 		   &     	  &     	     				 & 	 	&		   &	&				 	&	  \\
& \lstick{}   				&    & 	 	 &     	        &       	 & 			&     	       &      &      & 		 & 		    &     	   &     	    &       	 & 			& 		   &     	  &     	     				 & 	 	&		   &	&				 	&	  \\
& \lstick{c_{0}}    & \qw                		  & \qw      & \qw          & \qw      					     & \qw      & \qw           &  \qw & \qw               		  & \ctrl{5} & \qw      & \qw      & \qw        & \qw      					     & \ctrl{5} & \qw      & \qw      & \qw          & 	\qw & \gate{S} & \qw & \ctrl{1}         &	\qw  \gategroup{3}{2}{12}{8}{.7em}{--} \\
& \lstick{c_{1}}    & \qw               		  & \ctrl{4} & \qw          & \qw      					     & \ctrl{4} & \qw          &  \qw & \qw               		  & \qw      & \ctrl{4} & \qw      & \qw        & \qw      					     & \qw      & \ctrl{4} & \qw      & \qw          & 	\qw & \gate{S} & \qw & \gate{P(-\pi/2)} & \qw  \gategroup{3}{9}{12}{19}{.7em}{--} \\
& \lstick{}  			    & \raisebox{0.7em}{\vdots}     &  		 &      	    & \raisebox{0.7em}{\vdots}        &  		&      	       &      & \raisebox{0.7em}{\vdots}     &  		 &  		&      	   &      	    & \raisebox{0.7em}{\vdots}        &  		&  		   &      	  &      	     &     		&	\raisebox{0.6em}{\vdots}	   &	 &	\raisebox{0.3em}{\vdots}				& \\
& \lstick{c_{2q-2}} & \qw              			  & \qw      & \qw          & \qw      					     & \qw      & \qw          &  \qw & \qw              			  & \qw      & \qw      & \ctrl{4} & \qw        & \qw      					     & \qw      & \qw      & \ctrl{4} & \qw          & 	\qw & \gate{S} & \qw & \ctrl{1}         &	 \qw \gategroup{3}{20}{12}{24}{.7em}{--} \\
& \lstick{c_{2q-1}} & \qw               		  & \qw      & \ctrl{3}     & \qw      					     & \qw      & \ctrl{3}     &  \qw & \qw               		  & \qw      & \qw      & \qw      & \ctrl{3}   & \qw      					     & \qw      & \qw      & \qw      & \ctrl{3}     & 	\qw & \gate{S} & \qw & \gate{P(-\pi/2)} &	 \qw  \\
& \lstick{d_0}      & \gate{{\cal R}_{z}(\pi/2)}  & \targ    & \qw          & \gate{{\cal R}_{z}(-\pi/2)}  	 & \targ    & \qw          &  \qw & \gate{{\cal R}_{y}(\pi/2)}  & \targ    & \targ    & \qw      & \qw        & \gate{{\cal R}_{y}(-\pi/2)}  	 & \targ    & \targ    & \qw      & \qw          & 	\qw & \qw	   & \qw & \qw				&			 \qw \\
& \lstick{}  				& \raisebox{0.7em}{\vdots}     & 		 &     	        & \raisebox{0.7em}{\vdots}      	 & 			&     	       &      & \raisebox{0.7em}{\vdots}     & 		 & 		    &     	   &     	    & \raisebox{0.7em}{\vdots}      	 & 			& 		   &     	  &     	     &     	 	&		   &	 &					&   \\
& \lstick{d_{q-1}}  & \gate{{\cal R}_{z}(\pi/2)}  & \qw      & \targ        & \gate{{\cal R}_{z}(-\pi/2)}  	 & \qw      & \targ        &  \qw & \gate{{\cal R}_{y}(\pi/2)}  & \qw      & \qw      & \targ    & \targ      & \gate{{\cal R}_{y}(-\pi/2)}  	 & \qw      & \qw      & \targ    & \targ        & 	\qw & \qw	   & \qw & \qw				&\qw  \\
& \lstick{}   				&     & 		 &     	        &       	 & 			&     	       &      &      & 		 & 		    &     	   &     	    &       	 & 			& 		   &     	  &     	     &     	 	&		   &	 &					&   \\									
}
\]
\par\end{centering}
\caption{\label{fig:pmx_block_diagram}Block circuit diagram of the ${\cal PMX}_{n}$
implementation showing the three main components: (1) ultra-sparse
${\cal R}_{z}$ rotation multiplexer (2) ultra-sparse ${\cal R}_{y}$
rotation multiplexer , and (3) efficient diagonal gate ${\cal D}(\protect\mat{\varphi})$
for phase correction. The dashed boxes highlight each functional block
within the overall circuit architecture.}
\end{figure}

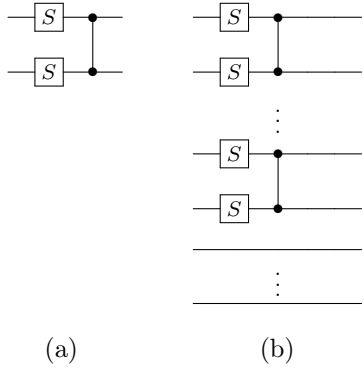
\begin{figure}[tph]
\centering{}%
\begin{tabular}{ccc}
{\footnotesize{}
\Qcircuit @C=1.2em @R=1.2em {
  & \gate{S} & \ctrl{1} & \qw \\
  & \gate{S} & \control \qw & \qw
}
} & {\footnotesize ~} & {\footnotesize\Qcircuit @C=1.2em @R=1.2em {
  & \gate{S} & \ctrl{1} & \qw & \qw & \qw \\
  & \gate{S} & \control \qw & \qw & \qw & \qw \\
  &  & \vdots  &  &  &  \\
  & \gate{S} & \ctrl{1} & \qw & \qw & \qw \\
  & \gate{S} & \control \qw & \qw & \qw & \qw \\
  & \qw      & \qw        & \qw & \qw & \qw \\
  &       &  \vdots       &  &  &  \\
  & \qw      & \qw        & \qw & \qw & \qw
}}\tabularnewline
~ &  & ~\tabularnewline
(a) &  & (b)\tabularnewline
\end{tabular}\caption{\label{fig:D-start}(a) Circuit for $\mathcal{D}^{*}=\mathrm{diag}(1,\protect\si,\protect\si,\protect\si)$.
(b) Circuit for $\mathcal{D}_{(n)}^{*}=(\mathcal{D}^{*})^{\otimes n}\otimes\mathcal{I}_{n}$.
It consists of $n$ stacked copies of the two-qubit diagonal gate
$\mathcal{D}^{*}$ acting on disjoint pairs of qubits, followed by
$n$ lower wires on which the identity acts.}
\end{figure}

\subsubsection{\label{subsec:Multiplexing-all-higher-order}Ultra low depth phase
correction}

Next, we consider how to do the phase correction required for implementing
${\cal PMX}_{n}$ via Eq.~(\ref{eq:all-pauli-decompose}). Recall,
the phases can be corrected by a diagonal gate for diagonal values
\[
\underbrace{\e^{\text{i}\varphi_{0}},\dots,\e^{\text{i}\varphi_{0}}}_{n\text{ times}},\underbrace{\e^{\text{i}\varphi_{1}},\dots,\e^{\text{i}\varphi_{1}}}_{n\text{ times}},\dots,\underbrace{\e^{\text{i}\varphi_{4^{n}-1}},\dots,\e^{\text{i}\varphi_{4^{n}-1}}}_{n\text{ times}}
\]
where $\varphi_{j}$ is equal to $\nicefrac{\pi}{2}$ times number
of $X$,$Y$ and $Z$ in $\w_{j}$. Let $\mathcal{D}_{(q)}^{*}=(\mathcal{D}^{*})^{\otimes n}\otimes{\cal I}_{n}$
where ${\cal D}^{*}$ is the circuit given in Figure~\ref{fig:D-start}
(a). Note that $\matM(\mathcal{D}^{*})=\diag{1,\si,\si,\si}$. In
Figure~\ref{fig:D-start} (b) we have a circuit diagram of $\mathcal{D}_{(n)}^{*}$.
We show that this diagonal gate corrects the phases.

For each index $j$, we have $e^{\si\varphi_{j}}=\si^{n-c(j,q)}$
where $c(j,q)$ is the number of $I$ letters in the Pauli word corresponding
to index $j$. The key observation in the following recursion formula
\begin{equation}
c(j,n)=c(j',n-1)+\mathbf{1}_{r=0}\label{eq:recursive-count}
\end{equation}
where $j=4j'+r$ with $j'=\lfloor j/4\rfloor$ and $r=j\bmod4$, and
$\mathbf{1}_{r=0}=1$ if $r=0$ and $0$ otherwise. Therefore, $c(j,n)=c(\lfloor j/4\rfloor,n-1)+\mathbf{1}_{j\bmod4=0}$.

We now show by induction that
\begin{align}
\matM((\mathcal{D}^{*})^{\otimes n}) & =\diag{\e^{\text{i}\varphi_{0}},\e^{\text{i}\varphi_{1}},\dots,\e^{\text{i}\varphi_{4^{n}-1}}}\nonumber \\
 & =\diag{\si^{n-c(0,n)},\dots,\si^{n-c(4^{n}-1,n)}}\label{eq:pmx-diag}
\end{align}
For $n=1$, the claim follows since $c(0,1)=1$ and $c(1,1)=c(2,1)=c(3,1)=0$,
this equals
\[
\diag{\si^{1-c(0,1)},\si^{1-c(1,1)},\si^{1-c(2,1)},\si^{1-c(3,1)}}.
\]
 As for $n>1$, for any index $j\in\{0,1,\dots,4^{n}-1\}$, we can
write $j=4j'+r$ where $j'=\lfloor j/4\rfloor$ and $r=j\bmod4$.
The $(j,j)$-th diagonal entry of $(\mathcal{D}^{*})^{\otimes n}$
is: 
\[
(j',j')\text{-th entry of }(\mathcal{D}^{*})^{\otimes n-1}\times(r,r)\text{-th entry of }\mathcal{D}^{*}
\]
By the inductive hypothesis and the definition of $\mathcal{D}^{*}$,
we find that the $(j,j)$-th diagonal entry of $\matM((\mathcal{D}^{*})^{\otimes n})$
is 
\[
\si^{n-c(j',n-1)-\mathbf{1}_{r=0}}
\]
By Eq.~(\ref{eq:recursive-count}), $c(j,n)=c(j',n-1)+\mathbf{1}_{r=0}$,
therefore: 
\[
\si^{n-c(j',n-1)-\mathbf{1}_{r=0}}=\si^{n-c(j,n)}
\]
and we conclude by induction that Eq.~(\ref{eq:pmx-diag}) holds.

Now, we have that 
\begin{align*}
\matM(\mathcal{D}_{(n)}^{*})\cdot\bigoplus_{j=0}^{4^{n}-1}e^{\text{-i}\varphi_{j}}\sigma_{\w_{j}} & =(\matM((\mathcal{D}^{*})^{\otimes n})\otimes\matI_{2^{n}})\bigoplus_{j=0}^{4^{n}-1}e^{\text{-i}\varphi_{j}}\sigma_{\w_{j}}\\
 & =\left(\bigoplus_{j=0}^{4^{n}-1}e^{\text{i}\varphi_{j}}\matI_{2^{n}}\right)\cdot\left(\bigoplus_{j=0}^{4^{n}-1}e^{\text{-i}\varphi_{j}}\sigma_{\w_{j}}\right)\\
 & =\bigoplus_{j=0}^{4^{n}-1}\sigma_{\w_{j}}
\end{align*}

\subsubsection{Bringing ${\cal PMX}_{n}$ together}

We now combine the ingredients of the previous subsections. Subsection~\ref{subsec:Ultra-low-depth}
introduced ultra low depth rotation multiplexers for the specific
ultra sparse angles relevant for the all Pauli multiplexer (see Subsection~\ref{subsec:Efficient-Rotation-Multiplexers}).
Subsection~\ref{subsec:Multiplexing-all-higher-order} described
the diagonal correction circuit $\mathcal{D}_{(n)}^{*}$, ensuring
the correct phases. Putting these pieces together on after the other,
the full multiplexer for all Pauli matrices is 
\[
\mathcal{PMX}_{n}\coloneqq\mathcal{D}_{(n)}^{*}\cdot2d{\cal PMX}_{y,n}\cdot2d{\cal PMX}_{z,n}.
\]
A block diagram appears in Figure~\ref{fig:pmx_block_diagram}.

\paragraph{Complexity of ${\cal PMX}_{n}$:}
\begin{itemize}
\item \textbf{Gate count:} $4n$ rotation gates, $6n$ CNOTS, $2n$ S gates,
and $n$ controlled phase gate (non-Clifford gate, T-depth $1$ and
T-count $1$ for phase $-\nicefrac{\pi}{2}$).
\item \textbf{Depth:} $10$.
\item \textbf{T-count:} all come from the controlled phase gates. There
are $n$ such gates, so T-count is $n$.
\item \textbf{T-depth: }all controlled phase gates are in parallel, so T-depth
is $1$.
\end{itemize}
Remarkably, depth is constant regardless of how many matrices are
multiplexed, and the number of gates is logarithmic in the number
multiplexed Pauli matrices.

\section{\label{sec:From-dense-classical}From dense classical matrices to
block encodings circuits}

This section presents algorithms for directly constructing block encoding
circuits from classical matrix representations. Other types of conversions
are discussed in the next section. We address both Hermitian and non-Hermitian
cases and consider matrices specified in either the standard or Pauli
basis. The core methodology is based on the Linear Combination of
Unitaries (LCU) approach and on the building blocks introduced in
the preceding sections.

\subsection{LCU block encodings framework}

A variety of approaches have been proposed in the literature for general
purpose block encoding~\cite{kothari2014efficient,berry2015hamiltonian},
many of them based on the LCU approach. In this subsection, we review
how LCUs can be used to construct block encoding.

To construct a block encoding via LCU a classical matrix $\matA\in\C^{N\times N}$
is expressed as a weighted sum where each summand $\mathbf{U}_{i}\in\C^{N\times N}$
is a unitary operator and the weights are given by a $\boldsymbol{\nu}=(\nu_{0},\dots,\nu_{K-1})$
coefficient vector, where $K$ is a power of $2$, i.e., 
\begin{equation}
\mathbf{A}=\sum_{i=0}^{K-1}\nu_{i}\mathbf{U}_{i},\quad\nu_{i}\in\mathbb{C},\label{eq:lcu-be}
\end{equation}
With such a decomposition, one can construct a block encoding as:
\begin{equation}
\mathcal{U}_{\mathbf{A}}^{\text{BE}}=\Bigl(\cirU_{\sqrt{\boldsymbol{\nu}}}^{\text{SP}}\otimes\mathbf{{\cal I}}_{n}\Bigr)\cdot\mathcal{MX}(\mathcal{U}_{1},\dots,\mathcal{U}_{K-1})\cdot\Bigl((\cirU_{\sqrt{\boldsymbol{\nu}}}^{\text{SP}})^{\T}\otimes\mathbf{{\cal I}}_{n}\Bigr)\label{eq:lcu-be-1}
\end{equation}
where $\cirU_{j}$ is a circuit that implements the unitary $\matU_{j}$
(i.e., $\matM(\cirU_{j})=\matU_{j}$), and $\cirU_{\sqrt{\boldsymbol{\nu}}}^{\text{SP}}\in\SP{\sqrt{\boldsymbol{\nu}}}{\FNorm{\sqrt{\boldsymbol{\nu}}}}$
is a exact state preparation with $\sqrt{\boldsymbol{\nu}}$ denotes
taking entry-wise square-root of $\mat{\nu}$. We have $\mathcal{U}_{\mathbf{A}}^{\text{BE}}\in\BE{\matA}{\FNormS{\sqrt{\boldsymbol{\nu}}}}$.

\begin{figure}[t]
\begin{centering}
\begin{tabular}{ccc}
{\footnotesize\Qcircuit @C=0.3em @R=0.5em {
& \lstick{\ket{0}} & \multigate{2}{\mathcal{U}^{\text{SP}}_{\sqrt{\boldsymbol{\nu}}}} & \multigate{5}{\mathcal{MX}(\mathcal{U}_1,\ldots,\mathcal{U}_K)} & \multigate{2}{(\mathcal{U}_{\sqrt{\boldsymbol{\nu}}}^{\text{SP}})^{-1}} & \qw \\
& \raisebox{0.7em}{\vdots} & \nghost{\mathcal{U}^{\text{SP}}_{\sqrt{\boldsymbol{\nu}}}} & \nghost{\mathcal{MX}(\mathcal{U}_1,\ldots,\mathcal{U}_K)} & \nghost{(\mathcal{U}_{\sqrt{\boldsymbol{\nu}}}^{\text{SP}})^{-1}} & \\
& \lstick{\ket{0}} & \ghost{\mathcal{U}^{\text{SP}}_{\sqrt{\boldsymbol{\nu}}}} & \ghost{\mathcal{MX}(\mathcal{U}_1,\ldots,\mathcal{U}_K)} & \ghost{(\mathcal{U}_{\sqrt{\boldsymbol{\nu}}}^{\text{SP}})^{-1}} & \qw \\
& \lstick{\ket{0}} & \qw & \ghost{\mathcal{MX}(\mathcal{U}_1,\ldots,\mathcal{U}_K)} & \qw & \qw \\
& \raisebox{0.7em}{\vdots} & & \nghost{\mathcal{MX}(\mathcal{U}_1,\ldots,\mathcal{U}_K)} &  & \\
& \lstick{\ket{0}} & \qw & \ghost{\mathcal{MX}(\mathcal{U}_1,\ldots,\mathcal{U}_K)} & \qw & \qw \\
}} &  & {\footnotesize\Qcircuit @C=0.3em @R=0.5em {
& \lstick{\ket{0}} & \multigate{2}{\mathcal{U}^{\text{SP}}_{\sqrt{|\boldsymbol{\nu}|}}} & \multigate{5}{\mathcal{MX}(\mathcal{U}_1,\ldots,\mathcal{U}_K)} & \multigate{2}{\mathcal{D}(\mat{\psi})} & \multigate{2}{(\mathcal{U}_{\sqrt{|\boldsymbol{\nu}|}}^{\text{SP}})^{-1}} & \qw \\
& \raisebox{0.7em}{\vdots} & \nghost{\mathcal{U}^{\text{SP}}_{\sqrt{|\boldsymbol{\nu}|}}} & \nghost{\mathcal{MX}(\mathcal{U}_1,\ldots,\mathcal{U}_K)} & \nghost{\mathcal{D}(\mat{\psi})} & \nghost{(\mathcal{U}_{\sqrt{|\boldsymbol{\nu}|}}^{\text{SP}})^{-1}} & \\
& \lstick{\ket{0}} & \ghost{\mathcal{U}^{\text{SP}}_{\sqrt{|\boldsymbol{\nu}|}}} & \ghost{\mathcal{MX}(\mathcal{U}_1,\ldots,\mathcal{U}_K)} & \ghost{\mathcal{D}(\mat{\psi})} & \ghost{(\mathcal{U}_{\sqrt{|\boldsymbol{\nu}|}}^{\text{SP}})^{-1}} & \qw \\
& \lstick{\ket{0}} & \qw & \ghost{\mathcal{MX}(\mathcal{U}_1,\ldots,\mathcal{U}_K)} & \qw & \qw & \qw \\
& \raisebox{0.7em}{\vdots} & & \nghost{\mathcal{MX}(\mathcal{U}_1,\ldots,\mathcal{U}_K)} &  &  & \\
& \lstick{\ket{0}} & \qw & \ghost{\mathcal{MX}(\mathcal{U}_1,\ldots,\mathcal{U}_K)} & \qw & \qw & \qw \\
}}\tabularnewline
~ &  & ~\tabularnewline
(a) &  & (b)\tabularnewline
 & {\footnotesize{}
\Qcircuit @C=0.3em @R=0.5em {
& \lstick{\ket{0}} & \multigate{5}{\mathcal{U}^{\text{SP}}_{\boldsymbol{\nu}}} & \qw & \qw & \qw \\
& \raisebox{0.7em}{\vdots} & \nghost{\mathcal{U}^{\text{SP}}_{\boldsymbol{\nu}}}  &  &  &  \\
& \lstick{\ket{0}} & \ghost{\mathcal{U}^{\text{SP}}_{\boldsymbol{\nu}}} & \qw & \qw & \qw \\
& \lstick{\ket{0}} & \ghost{\mathcal{U}^{\text{SP}}_{\boldsymbol{\nu}}} & \multigate{5}{\mathcal{MX}(\mathcal{U}_1,\ldots,\mathcal{U}_K)} & \gate{\mathcal{H}} & \qw \\
& \raisebox{0.7em}{\vdots} & \nghost{\mathcal{U}^{\text{SP}}_{\boldsymbol{\nu}}} & \nghost{\mathcal{MX}(\mathcal{U}_1,\ldots,\mathcal{U}_K)} & \raisebox{0.7em}{\vdots} & \\
& \lstick{\ket{0}} & \ghost{\mathcal{U}^{\text{SP}}_{\boldsymbol{\nu}}} & \ghost{\mathcal{MX}(\mathcal{U}_1,\ldots,\mathcal{U}_K)} & \gate{\mathcal{H}} & \qw \\
& \lstick{\ket{0}} & \qw & \ghost{\mathcal{MX}(\mathcal{U}_1,\ldots,\mathcal{U}_K)} & \qw & \qw \\
& \raisebox{0.7em}{\vdots} & & \nghost{\mathcal{MX}(\mathcal{U}_1,\ldots,\mathcal{U}_K)} &  & \\
& \lstick{\ket{0}} & \qw & \ghost{\mathcal{MX}(\mathcal{U}_1,\ldots,\mathcal{U}_K)} & \qw & \qw \\
}} & \tabularnewline
~ & ~ & \tabularnewline
 & (c) & \tabularnewline
\end{tabular}
\par\end{centering}
\caption{\label{fig:lcu-block-encoding}Linear Combination of Unitaries (LCU)
block encoding constructions. (a) Standard LCU block encoding with
complex coefficients using exact state preparation of $\sqrt{\boldsymbol{\nu}}$.
(b) Hermitian block encoding (HBE) with real coefficients using additional
diagonal gate $\mathcal{D}(\protect\vpsi)$ for sign correction. (c)
Alternative construction using full coefficient state preparation
$\mathcal{U}_{\boldsymbol{\nu}}^{\text{SP}}$ followed by Hadamard
gates on \emph{$2k$} coefficients qubits. The circuits in (a) and
(b) use $k$ coefficients qubits (top) and $n$ data qubits (bottom),
while the circuit in (c) accommodates non-exact state preparation
and uses $p$ state preparation ancillary qubits (top), $k$ coefficient
qubits (middle) and $n$ data qubits (bottom).}
\end{figure}
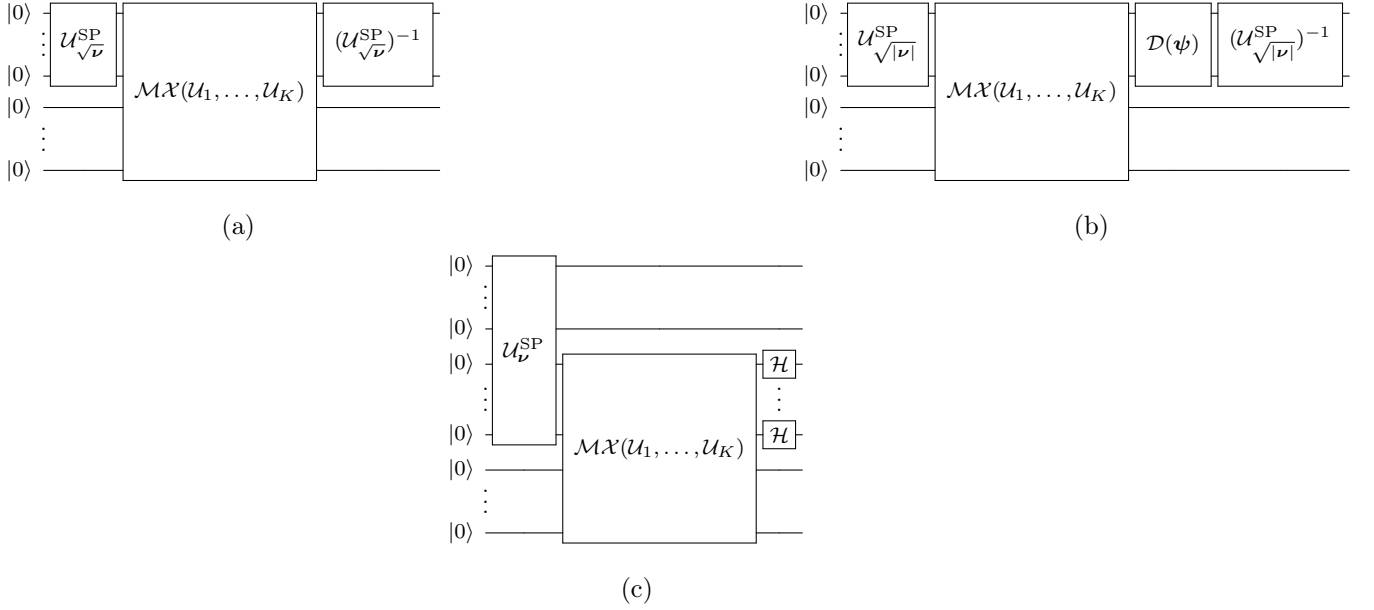

When $\matA$ is Hermitian it is desirable to build a Hermitian block
encoding instead of a regular block encoding. The construction in
Eq.~(\ref{eq:lcu-be-1}) will yield a Hermitian block encoding only
if all coefficients are nonnegative real numbers ($\nu_{i}\in\R_{\geq0}$).
For general real coefficients (possibly negative, i.e., $\nu_{i}\in\R$),
we can construct a Hermitian block encoding by incorporating sign
information through a diagonal gate:

\begin{equation}
\mathcal{U}_{\mathbf{A}}^{\text{HBE}}=\Bigl(\cirU_{\sqrt{|\boldsymbol{\nu}|}}^{\text{SP}}\otimes\mathbf{{\cal I}}_{n}\Bigr)\cdot\mathcal{MX}(\mathcal{U}_{0},\dots,\mathcal{U}_{K-1})\cdot({\cal D}(\vpsi)\otimes\mathbf{{\cal I}}_{n})\cdot\Bigl((\cirU_{\sqrt{|\boldsymbol{\nu}|}}^{\text{SP}})^{\conj}\otimes\mathbf{{\cal I}}_{n}\Bigr)\label{eq:hbe-diag-issue}
\end{equation}
where $\vpsi=\text{sign}(\boldsymbol{\nu})$ and $\cirU_{\sqrt{|\boldsymbol{\nu}|}}^{\text{SP}}\in\SP{\sqrt{|\boldsymbol{\nu}|}}{\FNorm{\sqrt{|\boldsymbol{\nu}|}}}$is
a exact state preparation with $\sqrt{|\boldsymbol{\nu}|}$ denotes
taking entry-wise square-root of the absolute value of $\mat{\nu}$.
We have $\mathcal{U}_{\mathbf{A}}^{\text{HBE}}\in\HBE{\matA}{\FNormS{\boldsymbol{\nu}}}$.
However, this construction involves additional complexity due to the
diagonal gate and must be applied for any sequence of signs (arbitrary
angels), as it depends on the input coefficients yielding a depth
complexity of $O(K)$. Even if we have an highly efficient multiplexer
and state preparation of $\sqrt{\boldsymbol{\nu}}$ , the diagonal
gate might be a bottleneck. We have $\mathcal{U}_{\mathbf{A}}^{\text{HBE}}\in\HBE{\matA}{\OneNorm{\boldsymbol{\nu}}}$.
\begin{rem}
While the LCU method is typically described in the literature with
positive, real coefficients, it is straightforward to extend the decomposition
to include complex coefficients~\cite[Lemma 52]{gilyen2019quantum}.
When complex coefficients are used, the inverse operation (conjugate
transpose) typically employed for the positive, real case is replaced
by the transpose (without the conjugate). This works because the transpose
puts the amplitudes of $\sqrt{\mat{\nu}}$ in the first of the associated
unitary matrix.
\end{rem}

In scenarios where the coefficients are provided via a state preparation
circuit $\cirU_{\boldsymbol{\nu}}^{\text{SP}}\in\SP{\boldsymbol{\nu}}{\alpha}$
for some scale $\alpha$ with $p$ ancillary qubits, rather than as
a classical vector, an analogous construction can be used:
\begin{equation}
\mathcal{U}_{\mathbf{A}}^{\text{BE}}=\Bigl(\cirU_{\boldsymbol{\nu}}^{\text{SP}}\otimes\mathbf{{\cal I}}_{n}\Bigr)\cdot({\cal I}_{p}\otimes\mathcal{MX}(\mathcal{U}_{0},\dots,\mathcal{U}_{K-1}))\cdot\Bigl({\cal I}_{p}\otimes\calH_{k}\otimes\mathbf{{\cal I}}_{n}\Bigr)\label{eq:be-h}
\end{equation}
where $\calH_{k}$ denotes applying $k=\log_{2}K$ Hadamard gates
to different qubits (we assume $K$ is a power of 2). We have $\mathcal{U}_{\mathbf{A}}^{\text{BE}}\in\BE{\matA}{\sqrt{2^{k}}\alpha}$,
so this construction has a worse scale factor.

Figure~\ref{fig:lcu-block-encoding} presents the block diagram implementations
for the three LCU block encoding approaches: (a) the standard construction
with complex coefficients, (b) the Hermitian block encoding with sign
correction, and (c) the alternative Hadamard-based construction.
\begin{rem}
\label{rem:exact-lcu}We note that the constructions in Eqs.~(\ref{eq:hbe-diag-issue})
and (\ref{eq:lcu-be-1}) cannot accommodate inexact state preparation
with ancillary qubits, due to the fact that the ``garbage'' elements
would be accounted for in the final LCU, causing the unitary to fail
to encode $\matA$. However, the Hadamard based construction in Eq.~(\ref{eq:be-h})
is designed to handle this, as the ${\cal I}_{p}\otimes\calH_{k}$
operation prepares a state that effectively zero-pads the subspace
where the garbage terms would appear. This ensures the garbage elements
from the state preparation are multiplied by zero and are therefore
eliminated from the final block-encoding.
\end{rem}

\subsection{\label{subsubsec:PauliMatrix2Be}From Pauli coefficient hypermatrix
to block encoding}

Block encodings of a matrix can be constructed using the matrix's
Pauli decomposition~\cite{dong2021efficient}. Via our highly efficient
multiplexer of all Pauli matrices (Section~\ref{sec:multiplexing-pauli-coefficient})
we obtain an efficient block encoding. Consider a Pauli coefficient
hypermatrix $\mathcal{A}_{P}$ associated with a matrix $\mathbf{A}\in\C^{N\times N}$.
Our algorithms treat $\mathcal{A}_{P}$ as dense. A block encoding
$\mathcal{U}_{\mathbf{A}}^{\text{BE}}$ can be constructed as follows:
\begin{equation}
\mathcal{U}_{\mathbf{A}}^{\text{BE}}=\Bigl(\mathcal{U}_{\sqrt{\vec{\mathcal{A}_{P}}}}^{\text{SP}}\otimes\mathbf{{\cal I}}_{\log N}\Bigr)\cdot\mathcal{PMX}_{n}\cdot\Bigl((\mathcal{U}_{\sqrt{\vec{\mathcal{A}_{P}}}}^{\text{SP}})^{\T}\otimes\mathbf{{\cal I}}_{\log N}\Bigr).\label{eq:uap_to_be}
\end{equation}
This construction works because $\matM(\mathcal{PMX}_{n})$ is a block
diagonal matrix with all higher-order Pauli matrices as diagonal blocks,
and since $\matA$ is the weighted sum of these blocks.

When $\matA$ is Hermitian, all the coefficients are real, and we
can construct an Hermitian block encoding via:

\begin{equation}
\mathcal{U}_{\mathbf{A}}^{\text{HBE}}=\Bigl(\mathcal{U}_{\sqrt{|\vec{\mathcal{A}_{P}}|}}^{\text{SP}}\otimes\mathbf{{\cal I}}_{\log N}\Bigr)\cdot\mathcal{PMX}_{n}\cdot(\mathcal{D}(\text{sgn}(\vec{\mathcal{A}_{P}})\otimes\mathbf{{\cal I}}_{\log N})\cdot\Bigl((\mathcal{U}_{\sqrt{|\vec{\mathcal{A}_{P}}|}}^{\text{SP}})^{\conj}\otimes\mathbf{{\cal I}}_{\log N}\Bigr),\label{eq:uap_to_hbe}
\end{equation}
The scale factor for both block encodings is $\OneNorm{\vec{\tenA_{P}}}$(see
Remark~\ref{rem:exact-lcu}). The classical cost of constructing
the circuit is $O(N^{2}\log N)$ due to the cost of constructing the
state preparation component. The circuit has $3\log N$ qubits, and
the depth is $2d\left(\mathcal{U}_{\sqrt{|\vec{\mathcal{A}_{P}}|}}^{\text{SP}}\right)+d\left(\mathcal{D}(\text{sgn}(\vec{\mathcal{A}_{P}}))\right)+12=O(\nicefrac{N^{2}}{n})$
due to the state preparation circuit. The T-cost is equal to $d_{T}(\mathcal{U}_{\sqrt{|\vec{\mathcal{A}_{P}}|}}^{\text{SP}})+d_{T}(\mathcal{D}(\text{sgn}(\vec{\mathcal{A}_{P}})))+1=O(\frac{N^{2}}{n}\log(N))$.\footnote{See Remark~\ref{rem:depth-tdepth-paradox} for discussion on how
is it possible for the T-depth to be larger than the regular depth.}

Algorithm~\ref{alg:PauliTensor2BE} provides a formal description
for this block encoding construction based on Eqs.~(\ref{eq:uap_to_be})
and (\ref{eq:uap_to_hbe}). Figure~\ref{fig:lcu-block-encoding-pmx}
illustrates the block encoding construction for the both cases.

We can also use Eq.~(\ref{eq:be-h}) to build a block encoding:
\begin{equation}
\mathcal{U}_{\mathbf{A}}^{\text{BE}}=\Bigl(\mathcal{U}_{\vec{{\cal A}_{P}}}^{\text{SP}}\otimes\mathbf{{\cal I}}_{n}\Bigr)\cdot({\cal I}_{p}\otimes\mathcal{PMX}_{n})\cdot\Bigl({\cal I}_{p}\otimes\calH_{2n}\otimes\mathbf{{\cal I}}_{n}\Bigr).\label{eq:uap-to-be}
\end{equation}
 In the above, we can use \emph{any} $\mathcal{U}_{\vec{{\cal A}_{P}}}^{\text{SP}}\in\SP{\vec{{\cal A}_{P}}}{\alpha}$,
and $p$ denotes the number of ancillary qubits in $\mathcal{U}_{\vec{{\cal A}_{P}}}^{\text{SP}}$
(so $\mathcal{U}_{\vec{{\cal A}_{P}}}^{\text{SP}}$ has $2n+p$ qubits).
The classical cost of constructing the circuit is $O(N^{2}\log N)$
due to the cost of constructing the state preparation component. The
circuit has $3n+p$ qubits, and the depth is $d(\mathcal{U}_{\vec{{\cal A}_{P}}}^{\text{SP}})+12$.
The T-cost is equal to the T-cost of the state preparation circuit
plus one. See Figure~\ref{fig:lcu-block-encoding-pmx} (c) for a
block diagram.

The scale of $\mathcal{U}_{\mathbf{A}}^{\text{BE}}$ is $N\alpha$.
When using a exact state preparation for $\vec{{\cal A}_{P}}$ we
have $\alpha=\FNorm{\tenA_{P}}$ and we get a scale factor of $\sqrt{N}\cdot\FNorm{\matA}$
which is always larger or equal to $\OneNorm{\vec{\tenA_{P}}}$.However,
the construction in Eq.~(\ref{eq:uap-to-be}) has an advantage in
that any state preparation for $\vec{\tenA_{P}}$ can be used, and
this can allow us to trade qubits for depth (we see this in the depth
calculation $d(\mathcal{U}_{\vec{{\cal A}_{P}}}^{\text{SP}})+12$).
Indeed, we can utilize various tradeoffs that have been explored in
the literature between number of qubits and depth in state preparation
circuits~\cite{sun2023asymptotically}. This can result in circuits
that have lower depth at the cost of additional qubits. For example,
using constructions similar to the ones described in \cite{clader2022quantum},
we can reduce depth to $O(\log(N))$ if the number of qubits increase
to $O(N^{2})$. We leave detailed exploration of these tradeoffs to
future research.

\begin{algorithm}[t]
\begin{algorithmic}[1]

\STATE  \textbf{Input:} $\tenA_{P}$ of Hermitian matrix $\matA\in\HH_{N}$

~

\STATE  $\vpsi\leftarrow\text{sign}(\vec{\tenA_{P}})$

\STATE  $\mathcal{D}(\vpsi)\leftarrow\text{DiagonalGate}(\vpsi)$
\COMMENT{Construct diagonal gate, e.g. using Qiskit's \texttt{DiagonalGate}.}

\STATE  Construct state preparation circuit $\mathcal{U}_{\sqrt{|\vec{\mathcal{A}_{P}}|}}$

\STATE  $\cirU\leftarrow\Bigl(\mathcal{U}_{\sqrt{|\vec{\mathcal{A}_{P}}|}}\otimes\mathbf{{\cal I}}_{\log N}\Bigr)\cdot\mathcal{PMX}_{n}\cdot(\mathcal{D}(\vpsi)\otimes\mathbf{{\cal I}}_{\log N})\cdot\Bigl(\mathcal{U}_{\sqrt{|\vec{\mathcal{A}_{P}}|}}^{\conj}\otimes\mathbf{{\cal I}}_{\log N}\Bigr)$

\STATE  $\mathcal{U}_{\mathbf{A}}^{\text{HBE}}\leftarrow\textrm{BlockEncoding(U=\ensuremath{\cirU}, shape=\ensuremath{N\times N}, scale=\ensuremath{\OneNorm{\vec{\tenA_{P}}}})}$

\RETURN $\mathcal{U}_{\mathbf{A}}^{\text{HBE}}$

~

\end{algorithmic}

\caption{\label{alg:PauliTensor2BE}\textsc{Pauli2HBE}: Create a Hermitian
block encoding of Hermitian matrix $\protect\matA$ from Pauli coefficients.}
\end{algorithm}

\begin{figure}[t]
\begin{centering}
\begin{tabular}{ccc}
{\footnotesize\Qcircuit @C=0.5em @R=0.5em {
& \lstick{\ket{0}} & \multigate{2}{\mathcal{U}^{\text{SP}}_{\sqrt{\boldsymbol{\nu}}}} & \multigate{5}{\mathcal{PMX}_{n}} & \multigate{2}{(\mathcal{U}_{\sqrt{\boldsymbol{\nu}}}^{\text{SP}})^{-1}} & \qw \\
& \raisebox{0.7em}{\vdots} & \nghost{\mathcal{U}^{\text{SP}}_{\sqrt{\boldsymbol{\nu}}}} & \nghost{\mathcal{PMX}_{n}} & \nghost{(\mathcal{U}_{\sqrt{\boldsymbol{\nu}}}^{\text{SP}})^{-1}} & \\
& \lstick{\ket{0}} & \ghost{\mathcal{U}^{\text{SP}}_{\sqrt{\boldsymbol{\nu}}}} & \ghost{\mathcal{PMX}_{n}} & \ghost{(\mathcal{U}_{\sqrt{\boldsymbol{\nu}}}^{\text{SP}})^{-1}} & \qw \\
& \lstick{\ket{0}} & \qw & \ghost{\mathcal{PMX}_{n}} & \qw & \qw \\
& \raisebox{0.7em}{\vdots} & & \nghost{\mathcal{PMX}_{n}} &  & \\
& \lstick{\ket{0}} & \qw & \ghost{\mathcal{PMX}_{n}} & \qw & \qw \\
}} &  & {\footnotesize\Qcircuit @C=0.5em @R=0.5em {
& \lstick{\ket{0}} & \multigate{2}{\mathcal{U}^{\text{SP}}_{\sqrt{|\boldsymbol{\nu}|}}} & \multigate{5}{\mathcal{PMX}_{n}} & \multigate{2}{\mathcal{D}(\mat{\psi})} & \multigate{2}{(\mathcal{U}_{\sqrt{|\boldsymbol{\nu}|}}^{\text{SP}})^{-1}} & \qw \\
& \raisebox{0.7em}{\vdots} & \nghost{\mathcal{U}^{\text{SP}}_{\sqrt{|\boldsymbol{\nu}|}}} & \nghost{\mathcal{PMX}_{n}} & \nghost{\mathcal{D}(\mat{\psi})} & \nghost{(\mathcal{U}_{\sqrt{|\boldsymbol{\nu}|}}^{\text{SP}})^{-1}} & \\
& \lstick{\ket{0}} & \ghost{\mathcal{U}^{\text{SP}}_{\sqrt{|\boldsymbol{\nu}|}}} & \ghost{\mathcal{PMX}_{n}} & \ghost{\mathcal{D}(\mat{\psi})} & \ghost{(\mathcal{U}_{\sqrt{|\boldsymbol{\nu}|}}^{\text{SP}})^{-1}} & \qw \\
& \lstick{\ket{0}} & \qw & \ghost{\mathcal{PMX}_{n}} & \qw & \qw & \qw \\
& \raisebox{0.7em}{\vdots} & & \nghost{\mathcal{PMX}_{n}} &  &  & \\
& \lstick{\ket{0}} & \qw & \ghost{\mathcal{PMX}_{n}} & \qw & \qw & \qw \\
}}\tabularnewline
~ &  & ~\tabularnewline
(a) &  & (b)\tabularnewline
 & {\footnotesize{}
\Qcircuit @C=0.5em @R=0.5em {
& \lstick{\ket{0}} & \multigate{5}{\mathcal{U}^{\text{SP}}_{\boldsymbol{\nu}}} & \qw & \qw & \qw \\
& \raisebox{0.7em}{\vdots} & \nghost{\mathcal{U}^{\text{SP}}_{\boldsymbol{\nu}}}  &  &  &  \\
& \lstick{\ket{0}} & \ghost{\mathcal{U}^{\text{SP}}_{\boldsymbol{\nu}}} & \qw & \qw & \qw \\
& \lstick{\ket{0}} & \ghost{\mathcal{U}^{\text{SP}}_{\boldsymbol{\nu}}} & \multigate{5}{\mathcal{PMX}_{n}} & \gate{\mathcal{H}} & \qw \\
& \raisebox{0.7em}{\vdots} & \nghost{\mathcal{U}^{\text{SP}}_{\boldsymbol{\nu}}} & \nghost{\mathcal{PMX}_{n}} & \raisebox{0.7em}{\vdots} & \\
& \lstick{\ket{0}} & \ghost{\mathcal{U}^{\text{SP}}_{\boldsymbol{\nu}}} & \ghost{\mathcal{PMX}_{n}} & \gate{\mathcal{H}} & \qw \\
& \lstick{\ket{0}} & \qw & \ghost{\mathcal{PMX}_{n}} & \qw & \qw \\
& \raisebox{0.7em}{\vdots} & & \nghost{\mathcal{PMX}_{n}} &  & \\
& \lstick{\ket{0}} & \qw & \ghost{\mathcal{PMX}_{n}} & \qw & \qw \\
}} & {\footnotesize\begin{tabular}{cc}
     &
    $\begin{aligned}
         &  \\
         &  \\
         &  \\
        \boldsymbol{\nu} & \coloneqq \operatorname{vec}(A_P) \\
        \boldsymbol{\psi} & \coloneqq \operatorname{sgn}(\boldsymbol{\nu})
    \end{aligned}$ \\ 
 \end{tabular}}\tabularnewline
~ & ~ & \tabularnewline
 & (c) & \tabularnewline
\end{tabular}
\par\end{centering}
\caption{\label{fig:lcu-block-encoding-pmx}Pauli-based block encoding constructions.
(a) Standard block encoding with complex coefficients using state
preparation of $\sqrt{\protect\vec{\mathcal{A}_{P}}}$. (b) Hermitian
block encoding (HBE) with real coefficients using additional diagonal
gate $\mathcal{D}(\text{sgn}(\protect\vec{\mathcal{A}_{P}})$ for
sign correction. (c) Alternative construction using full coefficient
state preparation $\mathcal{U}_{\protect\vec{{\cal A}_{P}}}^{\text{SP}}$
followed by Hadamard gates on \emph{$2n$} qubits. The circuits in
(a) and (b) use $2n$ coefficients qubits (top) and $n$ data qubits
(bottom), while the circuit in (c) accommodates non-exact state preparation
and uses $p$ state preparation ancillary qubits (top), $2n$ coefficient
qubits (middle) and $n$ data qubits (bottom).}
\end{figure}
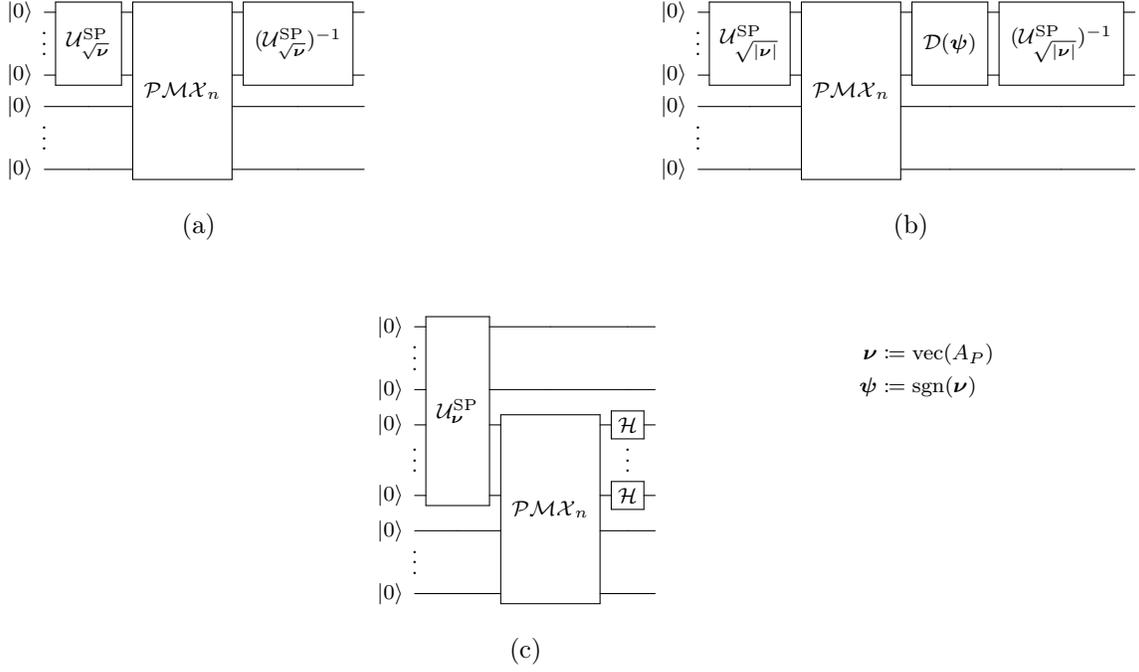

\subsection{\label{subsubsec:Matrix2Be}Constructing a block encoding from a
(dense) classical matrix}

Suppose we are given the matrix $\matA$ explicitly as entries in
classical memory (in the standard basis). A block encoding can be
constructed by converting the matrix to the Pauli basis using Algorithm~\ref{alg:dense2pauli}
(\textsc{Dense2Pauli}), and then proceeding as in the previous section.
We call this algorithm \textsc{Matrix2BE;} see Algorithm~\ref{alg:matrix2BE}
for a pseudo-code description. The classical cost of constructing
the circuit is $O(N^{2}n)$, due to the cost of computing the Pauli
coefficients and of constructing their exact state preparation (same
asymptotic cost for both). A variant of the algorithm, called \noun{Matrix2HBE},
constructs an Hermitian block encoding (for Hermitian matrices), with
similar costs; Algorithm~\ref{alg:matrix2BE} also specifies this
variant. Given such a classical $\matA$, \noun{Matrix2BE} constructs
a $\mathcal{U}_{\mathbf{A}}^{\text{BE}}\in\BE{\matA}{\OneNorm{\vec{\tenA_{P}}}}$,
and \noun{Matrix2HBE} constructs a $\mathcal{U}_{\mathbf{A}}^{\text{BE}}\in\HBE{\matA}{\OneNorm{\vec{\tenA_{P}}}}$.

For both \noun{Matrix2BE} and \noun{Matrix2HBE}, the number of qubits
in the resulting circuit is $3n$, and the depth is $O(\nicefrac{N^{2}}{n})$.
To see that this is the depth for the Hermitian case, from the circuit
description (Eq. (\ref{eq:uap_to_hbe})), we see the cost is $2d(\mathcal{U}_{\mathcal{A}_{P}}^{\text{SP}})+d(\mathcal{D}(\vpsi))+10$
where $10$ is due to the depth of ${\cal PMX}_{n}$. The depth of
$\mathcal{U}_{\mathcal{A}_{P}}^{\text{SP}}$ is $O(\nicefrac{N^{2}}{n})$,
while the diagonal gate can be built with depth $O(\nicefrac{N}{n})$,
so the overall depth is $O(\nicefrac{N^{2}}{n})$. A similar analysis
applies to the non-Hermitian circuit. Asymptotically, the T-depth
is also the same as the T-depth of the state preparation circuit,
which is $O(N^{2})$. 

\paragraph{Other methods for constructing a block-encoding for a dense classical
matrix.}

There is scarce literature on constructing block encodings given a
dense classical matrix $\matA$ (most works focused on the sparse
access model~\cite{ambainis2012variable,childs2017quantum}). Table
\ref{tab:scale_comparison} lists various algorithms, and compares
them to \noun{Matrix2BE} in terms of several parameters.

The FABLE algorithm~\cite{camps2022fable} was the first work to
describe an algorithm for this tasks. It uses a multiplexer structure
that results in a circuit depth of $O(N^{2})$ and requires $2n+1$
qubits. It's scaling factor is $N\max_{i,j}|a_{ij}|$where the $N$
component results from the use of two Hadamard layers. Clader et al.~\cite{clader2022quantum}
studied various tradeoffs for building block encoding using a general
formula from \cite{gilyen2019quantum} which was previously used only
for constructing block encodings in the sparse access model. They
offer two algorithms, one with minimal T-depth and the other with
minimal T-count. The construction is based on circuits implementing
QRAM. While all their methods result in circuits with a scaling factor
of $\FNorm{\matA}$, along with excellent T-complexities for the target
complexity, all his algorithms result in circuits with $\Omega(N)$
qubits (compared to $O(n)$ for all other algorithms, i.e., an exponential
gap). Li et al.~\cite{li2025binary} introduced BITBLE, method for
constructing block encoding quantum circuits with scaling factor of
$\FNorm{\matA}$ and circuit depth $O(N^{2})$~\cite{li2025binary}.
Our method achieves a superior circuit depth compared to other $O(n)$-qubit
algorithms like FABLE and BITBLE: $O(N^{2}/n)$ versus $O(N^{2})$
As for the scale factor, we have
\begin{equation}
\frac{\FNorm{\matA}}{\sqrt{N}}\leq\OneNorm{\vec{\tenA_{P}}}\leq\sqrt{N}\cdot\FNorm{\matA}\label{eq:subnorm-inequal}
\end{equation}
where the lower bound is realized with, for example, the identity
matrix, and the upper bound is realized, for example, when $\tenA_{P}$
is the all-ones tensor. Eq.~(\ref{eq:subnorm-inequal}) shows that
our method obtains a scale factor that is not worse by factor more
than $\sqrt{N}$ than the algorithms in \cite{camps2022fable,clader2022quantum,li2025binary},
but may also be better by at most the same factor. Both \cite{clader2022quantum,li2025binary}
also consider variants of their algorithm whose scaling factor is
the so-called $q$-norm of a matrix. An interesting question is what
is the minimal scale factor achievable with a generic block encoding
methods for classical dense matrices.

\begin{algorithm}[t]
\begin{algorithmic}[1]

\STATE  \textbf{Input:} $\matA\in\mathbb{C}^{N\times N}$

~

\STATE  $\tenA_{P}\leftarrow$ \textsc{Dense2Pauli}($\matA$)

\STATE  $\cirU_{\matA}^{\textrm{BE}}$$\leftarrow$ \textsc{Pauli2BE}($\tenA_{P}$)
(for Hermitian: $\cirU_{\matA}^{\textrm{HBE}}$$\leftarrow$ \textsc{Pauli2HBE}($\tenA_{P}$))

\RETURN $\cirU_{\matA}^{\textrm{BE}}$ (for Hermitian: $\cirU_{\matA}^{\textrm{HBE}}$)

~

\end{algorithmic}

\caption{\label{alg:matrix2BE}Matrix2BE: Create a block encoding from a classical
matrix}
\end{algorithm}

\begin{table}[t]
\centering{}%
\begin{tabular}{|l|l|l|l|l|l|l|}
\hline 
\textbf{Algorithm} & \textbf{\#qubits} & \textbf{Scale factor} & \textbf{Depth} & \textbf{T-Depth} & \textbf{T-Count} & \textbf{Classical Complexity}\tabularnewline
\hline 
\hline 
FABLE~\cite{camps2022fable} & $2n+1$ & $N\max_{i,j}|a_{ij}|$ & $O(N^{2})$ & $O(nN^{2})$ &  & $O(N^{2}n)$\tabularnewline
\hline 
Minimal depth~\cite{clader2022quantum} & $O(N^{2})$ & $\FNorm{\matA}$ &  & $O(n)$ & $O(N^{2})$ & \tabularnewline
\hline 
Minimal count~~\cite{clader2022quantum} & $O(N)$ & $\FNorm{\matA}$ &  & $O(N)$ & $O(N)$ & \tabularnewline
\hline 
BITBLE Algorithm~\cite{li2025binary} & $2n$ & $\FNorm{\matA}$ & $O(N^{2})$ & $O(nN^{2})$ &  & $O(N^{2}n)$\tabularnewline
\hline 
\texttt{\textsc{Matrix2Be (This work)}} & $3n$ & $\OneNorm{\vec{\tenA_{P}}}$ & $O(N^{2}/n)$ & $O(N^{2})$ & $O(nN^{2})$ & $O(N^{2}n)$\tabularnewline
\hline 
\end{tabular}\caption{\label{tab:scale_comparison}Comparison of of various algorithms for
building a block encoding given dense $N\times N$ matrix in classical
memory. Empty entires are ones in which the corresponding paper does
not specify this complexity, and is not easily inferable from the
description.}
\end{table}

\section{\label{sec:algorithms-encoding}Algorithms for converting between
representations}

Consider a dense matrix $\matA\in\C^{N\times N}$ with no additional
structure. Extending beyond the scope of the previous section, we
identify five possible representations of $\matA$: two classical
and three based on quantum circuits\footnote{Representing in a quantum circuit is technically also classical: it
is a classical description of a quantum circuit.}:
\begin{enumerate}
\item The entries of $\matA$ in classical memory (in, e.g., row-major ordering).
\item The entries of $\tenA_{P}$ in classical memory (e.g., $\vec{\tenA_{P}}$).
\item As a matrix state preparation circuit of $\matA$: $\cirU_{\matA}^{\text{SP}}\in\MS{\matA}{\alpha}$
for some $\alpha$.
\item As a hypermatrix state preparation circuit of $\tenA_{P}$: $\cirU_{\tenA_{P}}^{\text{SP}}\in\TS{\tenA_{P}}{\alpha}$
for some $\alpha$.
\item As block encoding circuit of $\matA$: $\cirU_{\matA}^{\text{BE}}\in\BE{\matA}{\alpha}$
for some $\alpha$.
\end{enumerate}
In this section, we discuss algorithms for moving between representations,
both from classical to quantum and from quantum to quantum (of different
types).

In the previous sections, we developed the main tools for encoding
matrices in quantum circuits, and introduced several key algorithms
for specific settings. Since much of the heavy lifting has already
been done, the descriptions in this section are concise and refer
to the methodologies established earlier in the paper. Table~\ref{tab:algorithm-comparison}
summarizes the algorithms, comparing their inputs, outputs, ancillary
qubit requirements, gate complexity, scaling factor, and T-costs.
Figure~\ref{fig:Illustration-of-all-algs} also visually summarizes
the various representations, and algorithm for converting between
them. All algorithms have classical inputs and outputs, producing
quantum circuit descriptions that can be executed on quantum hardware.
Some conversions are already well-established in the literature, and
our goal is simply to summarize the results and describe their computational
complexity as part of a comprehensive description of algorithms to
convert between representations. Other conversions, in particular
the conversion between dense classical to block encoding and the conversion
between state preparation to block encoding, are novel.

\begin{table}[t]
\centering{}{\scriptsize{}%
\begin{tabular}{|l|V{\linewidth}|l|c|c|c|c|}
\hline 
{\scriptsize\textbf{Algorithm}} & {\scriptsize\textbf{Input}} & {\scriptsize\textbf{Output}} & {\scriptsize\textbf{Qubits}} & {\scriptsize\textbf{Depth}} & {\scriptsize\textbf{T-Depth}} & {\scriptsize\textbf{Subsec.}}\tabularnewline
\hline 
\hline 
{\scriptsize\texttt{\textsc{\cite{sun2023asymptotically}}}} & {\scriptsize$\matA\in\mathbb{C}^{N\times N}$} & {\scriptsize$\mathcal{U_{\matA}^{\text{SP}}}\in\MS{\matA}{\FNorm{\matA}}$} & {\scriptsize$2n$} & {\scriptsize$O(\nicefrac{N^{2}}{n})$} & {\scriptsize$O(N^{2})$} & {\scriptsize\ref{subsec:Matrix2Msp_Matrix2PauliMsp}}\tabularnewline
\hline 
{\scriptsize\texttt{\textsc{Matrix2PauliMsp}}} & {\scriptsize$\matA\in\mathbb{C}^{N\times N}$} & {\scriptsize$\mathcal{U_{\tenA_{P}}^{\text{SP}}}\in\TS{\tenA_{P}}{\FNorm{\tenA_{P}}}$} & {\scriptsize$2n$} & {\scriptsize$O(\nicefrac{N^{2}}{n})$} & {\scriptsize$O(N^{2})$} & {\scriptsize\ref{subsec:Matrix2Msp_Matrix2PauliMsp}}\tabularnewline
\hline 
{\scriptsize\texttt{\textsc{Matrix2Be}}} & {\scriptsize$\matA\in\mathbb{C}^{N\times N}$} & {\scriptsize$\mathcal{U_{\matA}^{\text{BE}}}\in\BE{\matA}{\OneNorm{\vec{\tenA_{P}}}}$} & {\scriptsize$3n$} & {\scriptsize$O(\nicefrac{N^{2}}{n})$} & {\scriptsize$O(N^{2})$} & {\scriptsize\ref{subsubsec:Matrix2Be}}\tabularnewline
\hline 
{\scriptsize\texttt{\textsc{Matrix2HBe}}} & \begin{cellvarwidth}[t]
{\scriptsize$\matA\in\mathbb{C}^{N\times N}$, }\\
{\scriptsize Hermitian}
\end{cellvarwidth} & {\scriptsize$\mathcal{U_{\matA}^{\text{BE}}}\in\HBE{\matA}{\OneNorm{\vec{\tenA_{P}}}}$} & {\scriptsize$3n$} & {\scriptsize$O(\nicefrac{N^{2}}{n})$} & {\scriptsize$O(N^{2})$} & {\scriptsize\ref{subsubsec:Matrix2Be}}\tabularnewline
\hline 
\hline 
{\scriptsize\texttt{\textsc{PauliMatrix2Msp}}} & {\scriptsize$\tenA_{P}\in\C^{4\times4\times\cdots\times4}$} & {\scriptsize$\mathcal{U_{\matA}^{\text{SP}}}\in\MS{\matA}{\FNorm{\matA}}$} & {\scriptsize$2n$} & {\scriptsize$O(\nicefrac{N^{2}}{n})$} & {\scriptsize$O(N^{2})$} & {\scriptsize\ref{subsec:PauliMatrix2Msp_PauliMatrix2PauliMsp}}\tabularnewline
\hline 
{\scriptsize\texttt{\textsc{PauliMatrix2PauliMsp}}} & {\scriptsize$\tenA_{P}\in\C^{4\times4\times\cdots\times4}$} & {\scriptsize$\mathcal{U_{\tenA_{P}}^{\text{SP}}}\in\TS{\tenA_{P}}{\FNorm{\tenA_{P}}}$} & {\scriptsize$2n$} & {\scriptsize$O(\nicefrac{N^{2}}{n})$} & {\scriptsize$O(N^{2})$} & {\scriptsize\ref{subsec:PauliMatrix2Msp_PauliMatrix2PauliMsp}}\tabularnewline
\hline 
{\scriptsize\texttt{\textsc{PauliMatrix2Be}}} & {\scriptsize$\tenA_{P}\in\C^{4\times4\times\cdots\times4}$} & {\scriptsize$\mathcal{U_{\matA}^{\text{BE}}}\in\BE{\matA}{\OneNorm{\vec{\tenA_{P}}}}$} & {\scriptsize$3n$} & {\scriptsize$O(\nicefrac{N^{2}}{n})$} & {\scriptsize$O(N^{2})$} & {\scriptsize\ref{subsubsec:PauliMatrix2Be}}\tabularnewline
\hline 
\hline 
{\scriptsize\texttt{\textsc{PauliMsp2Msp}}} & {\scriptsize$\mathcal{U_{\tenA_{P}}^{\text{SP}}}\in\MS{\tenA_{P}}{\alpha}$} & {\scriptsize$\mathcal{U_{\matA}^{\text{SP}}}\in\MS{\matA}{\sqrt{N}\alpha}$} & {\scriptsize$+0$} & {\scriptsize$+6$} & {\scriptsize$+0$} & {\scriptsize\ref{subsec:ua-to-uap}}\tabularnewline
\hline 
{\scriptsize\texttt{\textsc{PauliMsp2Be}}} & {\scriptsize$\mathcal{U_{\tenA_{P}}^{\text{SP}}}\in\MS{\tenA_{P}}{\alpha}$} & {\scriptsize$\mathcal{U_{\matA}^{\text{BE}}}\in\BE{\matA}{N\alpha}$} & {\scriptsize$+n$} & {\scriptsize$+12$} & {\scriptsize$+1$} & {\scriptsize\ref{subsubsec:PauliMsp2Be_Msp2Be}}\tabularnewline
\hline 
\hline 
{\scriptsize\texttt{\textsc{Msp2PauliMsp}}} & {\scriptsize$\mathcal{U_{\matA}^{\text{SP}}}\in\MS{\matA}{\alpha}$} & {\scriptsize$\mathcal{U_{\tenA_{P}}^{\text{SP}}}\in\TS{{\cal A}_{P}}{\nicefrac{\alpha}{\sqrt{N}}}$} & {\scriptsize$+0$} & {\scriptsize$+6$} & {\scriptsize$+0$} & {\scriptsize\ref{subsec:ua-to-uap}}\tabularnewline
\hline 
{\scriptsize\texttt{\textsc{Msp2Be}}} & {\scriptsize$\mathcal{U_{\matA}^{\text{SP}}}\in\MS{\matA}{\alpha}$} & {\scriptsize$\mathcal{U_{\matA}^{\text{BE}}}\in\BE{\matA}{N\alpha}$} & {\scriptsize$+n$} & {\scriptsize$+18$} & {\scriptsize$+1$} & {\scriptsize\ref{subsubsec:PauliMsp2Be_Msp2Be}}\tabularnewline
\hline 
\hline 
{\scriptsize\texttt{\textsc{Be2Msp}}} & {\scriptsize$\mathcal{U_{\matA}^{\text{BE}}}\in\BE{\matA}{\alpha}$} & {\scriptsize$\mathcal{U_{\matA}^{\text{SP}}}\in\MS{\matA}{\sqrt{N}\alpha}$} & {\tiny$+q_{N}$} & {\scriptsize$+2$} & {\scriptsize$+0$} & {\scriptsize\ref{subsubsec:Be2Msp_Be2PauliMsp}}\tabularnewline
\hline 
{\scriptsize\texttt{\textsc{Be2PauliMsp}}} & {\scriptsize$\mathcal{U_{\matA}^{\text{BE}}}\in\BE{\matA}{\alpha}$} & {\scriptsize$\mathcal{U_{\tenA_{P}}^{\text{SP}}}\in\TS{{\cal A}_{P}}{\alpha}$} & {\tiny$+q_{N}$} & {\scriptsize$+8$} & {\scriptsize$+0$} & {\scriptsize\ref{subsubsec:Be2Msp_Be2PauliMsp}}\tabularnewline
\hline 
\end{tabular}}\caption{\label{tab:algorithm-comparison}Summary of encoding algorithms and
their resource requirements. For each algorithm, we list the input,
output, number of \emph{ancillary} qubits, circuit depth, $T$-depth,
and the subsection where the construction is described. When the input
is a circuit (bottom half of the table), columns \textquotedblleft Qubits\textquotedblright ,
\textquotedblleft Depth\textquotedblright{} and \textquotedblleft T-Depth\textquotedblright{}
refer to \emph{surplus} qubits or depth, i.e., qubits/depth on top
of the qubit/depth of the input circuit. In this cases, we prefix
with a plus symbol. We denote $q_{N}\protect\coloneqq q(\mathcal{U_{\protect\matA}^{\text{BE}}})-\log N$.}
\end{table}

\begin{figure}[t]
\begin{centering}
\begin{tabular}{c}
\includegraphics[scale=0.4]{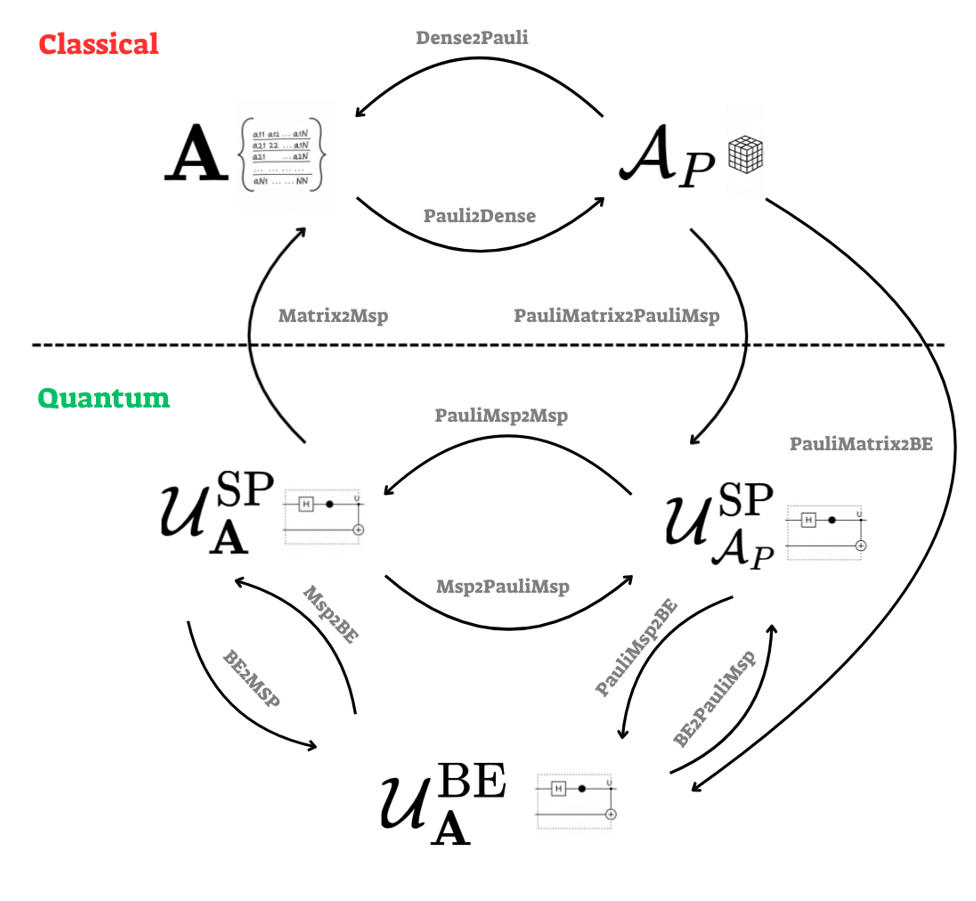}\tabularnewline
\end{tabular}
\par\end{centering}
\caption{\label{fig:Illustration-of-all-algs}Illustration of the paper\textquoteright s
algorithms. Top: classical representations $\protect\matA$ and $\protect\tenA_{P}$
. Bottom: quantum circuit representations ($\protect\cirU_{\protect\matA}^{\text{SP}}$,
$\protect\cirU_{\protect\tenA_{P}}^{\text{SP}},$ and $\protect\cirU_{\protect\matA}^{\text{BE}}$)}
\end{figure}

\subsection{\label{subsec:classical-to-msp}From classical to matrix state preparation}

\subsubsection{\label{subsec:Matrix2Msp_Matrix2PauliMsp}From coefficients in the
standard basis}

The conversion from a classical matrix representation in the standard
basis to a matrix state preparation circuit is a well-established
procedure in quantum computing literature~\cite{mottonen2004transformation,shende2005synthesis,plesch2011quantum}.
Given a classical matrix $\mathbf{A}\in\mathbb{C}^{N\times N}$, we
can construct a matrix state preparation circuit $\mathcal{U}_{\matA}^{\text{SP}}\in\MS{\matA}{\FNorm{\matA}}$
by vectorizing the matrix according to our definition in Section~\ref{sec:preliminaries},
and using standard state preparation techniques to create a circuit
that prepares $\kket{\matA}$.

Seminal works that describe the process of synthesizing arbitrary
state preparation circuits are described in~\cite{mottonen2004transformation,shende2005synthesis}.
These methods are based on rotation multiplexers (see Section~\ref{subsec:rotation-multiplexers}),
require $O(N^{2}\log N)$ classical cost, produce circuits of depth
and T-depth of $O(N^{2})$, and use $O(N^{2})$ gates. Various improvements
were described over the years. The isometry method of \cite{iten2016quantum}
improves constant factors. Plesch and Brukner~\cite{plesch2011quantum}
showed how to reduce costs, specifically by improving the constant
factors in the CNOT gate count and circuit depth for arbitrary state
preparation using universal gate decompositions. Zhang et al. showed
how depth and qubit count can be traded \cite{zhang2022quantum}.
They show that $O(\log N)$ depth suffice if we allow the number of
ancilla qubits to be $O(N)$. They also discuss preparing sparse vectors
with $D$ non zeroes entries, and prove near optimality of their costs
(their proposed algorithm for sparse states achieves an exponentially
faster depth of $O(\log ND)$ with $O(ND\log D)$ ancillary qubits).
More recently, Sun et al.~~\cite{sun2023asymptotically} proved
an asymptotically optimal bound for general state preparation without
ancillary qubits, achieving circuit depth $O(N^{2}/n)$ and T-depth~\cite[Table 1]{low2024trading}
of $O(N^{2})$.

The conversion from a classical matrix representation in the standard
basis to a hypermatrix state preparation circuit of $\tenA_{P}$ can
be achieved through two methods. The straightforward way is to classically
compute all coefficients of the Pauli hypermatrix $\tenA_{P}$ directly
from the classical matrix $\mathbf{A}$, and then create the hypermatrix
state preparation circuit $\cirU_{\tenA_{P}}^{\text{SP}}$ using the
same procedure described above. (This is essentially the same procedure
as standard state preparation techniques once the vectorized form
of the hypermatrix is obtained.) The second method is to build a matrix
state preparation circuit for $\mathbf{A}$, and then use the change
of basis techniques that we described in Subsection~\ref{subsec:ua-to-uap}.
Both methods have the same asymptotic cost (both classical and quantum).

\subsubsection{\label{subsec:PauliMatrix2Msp_PauliMatrix2PauliMsp}From coefficients
in the Pauli basis}

Given a hypermatrix $\tenA_{P}$ representing the Pauli decomposition
of matrix $\mathbf{A}$, we can create the matrix state preparation
circuit $\mathcal{U}_{\mathbf{A}}^{\text{SP}}$ through two approaches.
The first approach reconstructs the standard basis matrix $\mathbf{A}$
from $\tenA_{P}$ using the recursive Pauli decomposition relationship
in Eq.~(\ref{eq:block-pauli}), and then applies the standard matrix
state preparation techniques described in the previous subsection.
The second approach directly constructs the hypermatrix state preparation
circuit $\mathcal{U}_{\tenA_{P}}^{\text{SP}}$ and then applies a
change of basis transformation to obtain $\mathcal{U}_{\mathbf{A}}^{\text{SP}}$.
The change of basis from $\mathcal{U}_{\mathbf{A}}^{\text{SP}}$ to
$\mathcal{U}_{\tenA_{P}}^{\text{SP}}$ is described in Subsection~\ref{subsec:ua-to-uap}.
Both methods have the same asymptotic cost (both classical and quantum).

\subsection{\label{subsec:interplay-msp-be}The interplay between matrix state
preparation and block encoding}

In this subsection, we explore the relationship between the two circuit
representations of matrices discussed in this paper: matrix state
preparation circuits and block encodings. A fundamental question is
whether these two representations are equivalent, where we define
equivalence as the ability to efficiently construct an efficient circuit
for one representation given a circuit for the other representation
with (roughly) the same complexities.

The conversion from a block encoding representation to a matrix state
preparation representation was partially discussed in our previous
work~\cite[Section 3.2.2]{yosef2024multivariate}. It was shown that
an exact block encoding can be converted to a matrix state preparation
by attaching to it a very simple circuit. Here we complete the discussion,
and show how convert from an arbitrary block encoding to a matrix
state preparation. We also discuss the opposite conversion, from a
state preparation to a block encoding. This conversion is much less
straightforward, and requires a more elaborate circuit.

A surprising aspect of the results in this section is that an efficient
bidirectional conversion between the two representations is possible.
In particular, the conversion requires a surplus depth of $O(1)$,
and the surplus T-depth is exactly 1. These costs are typically much
lower than the ones required for the circuit themselves. However,
the conversions do increase the number of qubits and worsen the scale
of the representation.

We highlight one interesting consequence of the efficient bidirectional
conversion. A fundamental tool in many quantum numerical linear algebra
is the Quantum Singular Value Transformation (QSVT) algorithm \cite{gilyen2019quantum}.
However, this algorithm has only been defined for block encodings.
Due to the results of this section, it can also be applied to matrix
state preparation circuits.

\subsubsection{\label{subsubsec:Be2Msp_Be2PauliMsp}From block encoding to matrix
state preparation}

In this subsection, the goal is to construct $\mathcal{U}_{\matA}^{\textrm{SP}}$
and $\mathcal{U}_{\tenA_{P}}^{\text{SP}}$ starting from $\mathcal{U_{\matA}^{\text{BE}}}$.
Due to the construction $\mathcal{U}_{\tenA_{P}}^{\text{SP}}=\cirU^{(q)}\cdot\cirU_{\matA}^{\text{SP}}$
in Subsection~\ref{subsec:ua-to-uap}, it suffices to establish the
construction of $\mathcal{U}_{\matA}^{\textrm{SP}}$ from $\mathcal{U_{\matA}^{\text{BE}}}$.

Let $\matA\in\C^{N\times N}$, and suppose we are given a block encoding
circuit $\mathcal{U_{\matA}^{\text{BE}}}\in\BE{\matA}{\alpha}$ with
$q(\mathcal{U_{\matA}^{\text{BE}}})$ qubits. We show that we can
construct $\mathcal{U}_{\matA}^{\textrm{SP}}\in\MS{\matA}{\sqrt{N}\alpha}$
with $q(\mathcal{U_{\matA}^{\text{BE}}})+q_{N}$ qubits, where $q_{N}=q(\mathcal{U_{\matA}^{\text{BE}}})-\log_{2}N$
is the number of ancillary qubits in the block encoding. The key idea
is to use an extended version of ${\cal U}_{\matI_{N}}^{\text{SP}}$
which is obtained by padding it with zero rows (see our previous work~\cite{yosef2024multivariate}
for how to build ${\cal U}_{\matI_{N}}^{\text{SP}}$ and pad it with
zero rows). Let ${\cal U}_{\hat{\matI}_{N}}^{\text{SP}}$ be a matrix
state preparation for
\[
\hat{\matI}_{N}=\left[\begin{array}{c}
\matI_{N}\\
\mat 0_{(2^{q_{N}}-1)N}
\end{array}\right].
\]
 Then,
\begin{align*}
\mathcal{U_{\matA}^{\text{BE}}}\cdotp_{1}{\cal U}_{\hat{\matI}_{N}}^{\text{SP}}\Ket 0_{q(\mathcal{U_{\matA}^{\text{BE}}})+q_{N}} & =\mathcal{U_{\matA}^{\text{BE}}}\cdotp_{1}\kket{\left[\begin{array}{c}
\frac{1}{\sqrt{N}}\matI_{N}\\
\mat 0_{(2^{q_{N}}-1)N}
\end{array}\right]}\\
 & =\kket{\matM(\mathcal{U_{\matA}^{\text{BE}}})\left[\begin{array}{c}
\frac{1}{\sqrt{N}}\matI_{N}\\
\mat 0_{(2^{q_{N}}-1)N}
\end{array}\right]}\\
 & =\kket{\left[\begin{array}{cc}
\alpha^{-1}\matA & *\\*
* & *
\end{array}\right]\left[\begin{array}{c}
\frac{1}{\sqrt{N}}\matI_{N}\\
\mat 0_{(2^{q_{N}}-1)N}
\end{array}\right]}\\
 & =\kket{\left[\begin{array}{c}
(\sqrt{N}\alpha)^{-1}\matA\\*
*
\end{array}\right]}
\end{align*}
and we have $\mathcal{U}_{\matA}^{\textrm{SP}}\in\MS{\matA}{\sqrt{N}\alpha}$,
i.e., a state preparation for $\matA$ with $q(\mathcal{U_{\matA}^{\text{BE}}})+q_{N}$
qubits.

\subsubsection{\label{subsubsec:PauliMsp2Be_Msp2Be}From matrix state preparation
to block encoding}

In this subsection, the goal is to construct $\mathcal{U_{\matA}^{\text{BE}}}$
starting from $\mathcal{U}_{\matA}^{\textrm{SP}}$ or $\mathcal{U}_{\tenA_{P}}^{\text{SP}}$.
Again, due to the construction $\mathcal{U}_{\tenA_{P}}^{\text{SP}}=\cirU^{(q)}\cirU_{\matA}^{\text{SP}}$
in Subsection~\ref{subsec:ua-to-uap}, it suffices to establish the
conversion from $\mathcal{U}_{\tenA_{P}}^{\text{SP}}$ to $\mathcal{U_{\matA}^{\text{BE}}}$.

In Section~\ref{subsubsec:PauliMatrix2Be} we show that given a classical
Pauli coefficient hypermatrix we can construct block encoding. If
we inspect the method, we see in Eq.~(\ref{eq:uap-to-be}) that the
final circuit uses a state preparation circuit for $\tenA_{P}$. Thus,
if we are given a state preparation for $\tenA_{P}$ we can just proceed
to building the block encoding circuit. The final circuit is:
\[
\mathcal{U}_{\mathbf{A}}^{\text{BE}}=\Bigl(\mathcal{U}_{\mathcal{A}_{P}}^{\text{SP}}\otimes\mathbf{{\cal I}}_{\log N}\Bigr)\cdot({\cal I}_{p}\otimes\mathcal{PMX}_{N})\cdot\Bigl({\cal I}_{p}\otimes\calH_{2\log N}\otimes\mathbf{{\cal I}}_{\log N}\Bigr)
\]
In the case where $\mathcal{U}_{{\cal A}_{P}}^{\text{SP}}\in\TS{{\cal A}_{P}}{\alpha}$
, the scale of $\mathcal{U}_{\mathbf{A}}^{\text{BE}}$ is $N\alpha$
($\sqrt{N}\FNorm{\mathbf{A}}$ for exact state preparation). The classical
cost of constructing the circuit is $O(g(\mathcal{U}_{\mathcal{A}_{P}}^{\text{SP}}))$
dominated by the classical description of the state preparation component.
The resulting circuit acts on $3n+p$ qubits, has depth of $d(\mathcal{U}_{\mathcal{A}_{P}}^{\text{SP}})+12$,
and its T-cost equals the T-cost of the state preparation plus one.

\subsection*{Acknowledgments.}

This research was supported by the US-Israel Binational Science Foundation
(Grant no. 2017698), Israel Science Foundation (Grant no. 1524/23)
and IBM Faculty Award. Liron Mor-Yosef acknowledges support by the
Milner Foundation and the Israel Council for Higher Education.

\bibliographystyle{plain}
\bibliography{references}

\appendix

\end{document}